\definecolor{ForestGreen}{rgb}{0.1333,0.5451,0.1333}
\definecolor{DarkRed}{rgb}{0.65,0,0}
\definecolor{Red}{rgb}{1,0,0}
\declaretheorem[numberwithin=section]{theorem}
\declaretheorem[numberlike=theorem]{lemma}
\declaretheorem[numberlike=theorem,name=Lemma]{lem}
\declaretheorem[numberlike=theorem]{fact}
\declaretheorem[numberlike=theorem,name=Proposition]{prop}
\declaretheorem[numberlike=theorem,name=Corollary]{cor}
\declaretheorem[numberlike=theorem]{claim}
\declaretheorem[numberlike=theorem,,name=Remark,style=remark]{remark}
\crefname{remark}{Remark}{Remarks}
\declaretheorem[numberlike=theorem,style=definition]{definition}
\declaretheorem[numberlike=theorem,name=Definition,style=definition]{defn}
\declaretheorem[numberlike=theorem,name=Open Question]{question}
\renewcommand{\cref}[1]{\Cref{#1}}
\def\draft{1} %
\newcommand{\calD}{\mathcal{D}}
\newcommand{\N}{\mathbb{N}}
\newcommand{\NN}{\mathbb{N}}
\newcommand{\AAA}{\mathcal{A}}
\newcommand{\BBB}{\mathcal{B}}
\newcommand{\SSS}{\mathcal{S}}
\newcommand{\DDD}{\mathcal{D}}
\newcommand{\MMM}{\mathcal{M}}
\newcommand{\WWW}{\mathcal{W}}
\newcommand{\set}[1]{\{ #1 \}}
\newcommand{\E}{\mathbb{E}}
\newcommand{\remove}[1]{}
 \newcommand{\authnote}[2]{{\bf [{\color{red} #1's Note:} {\color{blue} #2}]}}
 \newcommand{\authnote}[2]{}
\newcommand{\polylog}{{\rm polylog}}
\global\long\def\A{{\cal A}}
\global\long\def\Otil{\tilde{O}}
\global\long\def\poly{\mathrm{poly}}
\global\long\def\polylog{\mathrm{polylog}}
\global\long\def\ibegin{1}
\global\long\def\ifin{T}
\global\long\def\primed{\mathtt{pMedian}}
\global\long\def\med{\mathrm{med}}
\global\long\def\total{\mathrm{total}}
\global\long\def\Ex{\mathbb{E}}
\global\long\def\dist{\mathrm{dist}}
\global\long\def\adversary{\texttt{Adversary}}
\global\long\def\output{\mathtt{out}}
\global\long\def\outtil{\widetilde{\mathtt{out}}}
\global\long\def\fail{\mathrm{fail}}
\global\long\def\xvec{\vec{x}}
\global\long\def\acc{\mathrm{acc}}
\global\long\def\one{\boldsymbol{1}}
\global\long\def\paramc{200\cdot6s\epsilon_{\med}\cdot\sqrt{2T\ln(100/\beta)}}
\global\long\def\sparsify{\textsc{Sparsify}}
\global\long\def\vol{\mathrm{vol}}
\global\long\def\counter{\mathtt{cnt}}
\global\long\def\Gen{\mathtt{Gen}}
\global\long\def\Enc{\mathtt{Enc}}
\global\long\def\Dec{\mathtt{Dec}}
\global\long\def\Oracle{\mathrm{Oracle}}
\newcommand{\accgame}{\mathsf{Acc}}
\newcommand{\RO}{\mathcal{R}}
\title{Dynamic Algorithms Against an Adaptive Adversary:\\ Generic Constructions and Lower Bounds}
\author{Amos Beimel%
\thanks{Department of Computer Science, Ben-Gurion University. {\tt amos.beimel@gmail.com}. Work partially funded by ERC grant 742754 (project NTSC), by the Israel Science Foundation grant no.~391/21, and by the Cyber Security Research Center at Ben-Gurion University}
\and
Haim Kaplan\thanks{Tel Aviv University and Google Research. {\tt haimk@tau.ac.il}. Partially supported by Israel Science Foundation (grant 1595/19),  and the Blavatnik Family Foundation.}
\and
Yishay Mansour\thanks{Tel Aviv University and Google Research. {\tt mansour.yishay@gmail.com}. Work partially funded from the European Research Council (ERC) under the European Union’s Horizon 2020 research and innovation program (grant agreement No. 882396), by the Israel Science Foundation (grant number 993/17), Tel Aviv University Center for AI and Data Science (TAD), and the Yandex Initiative for Machine Learning at Tel Aviv University.}
\and
Kobbi Nissim\thanks{Department of Computer Science, Georgetown University. {\tt kobbi.nissim@georgetown.edu}. Work partially funded by NSF grant No.~2001041 and by a gift to Georgetown University.}
\and
Thatchaphol Saranurak\thanks{University of Michigan. {\tt thsa@umich.edu}.}
\and
Uri Stemmer\thanks{Tel Aviv University and Google Research. {\tt u@uri.co.il}. Partially supported by the Israel Science Foundation (grant 1871/19) and by Len Blavatnik and the Blavatnik Family foundation.}
}
\date{November 7, 2021}
\begin{document}

\maketitle

\begin{abstract}
A dynamic algorithm against an adaptive adversary is required to be correct when the adversary chooses the next update after seeing the previous outputs of the algorithm. We obtain faster dynamic algorithms against an adaptive adversary and separation results between what is achievable in the oblivious vs. adaptive settings. To get these results we exploit techniques from differential privacy, cryptography, and adaptive data analysis.
    
We give a general reduction transforming a dynamic algorithm against an oblivious adversary to a dynamic algorithm robust against an adaptive adversary. This reduction maintains several copies of the oblivious algorithm and uses differential privacy to protect their random bits. Using this reduction we obtain dynamic algorithms against an adaptive adversary with improved update and query times for global minimum cut, all pairs distances, and all pairs effective resistance.
 
We further improve our update and query times by showing how to maintain a sparsifier over an expander decomposition that can be refreshed fast. This fast refresh enables it to be robust against what we call a blinking adversary that can observe the output of the algorithm only following refreshes. We believe that these techniques  will prove useful for additional problems.
 
 On the flip side, we specify dynamic problems that, assuming a random oracle, every dynamic algorithm that solves them against an adaptive adversary must be polynomially slower than a rather straightforward dynamic algorithm that solves them against an oblivious adversary. We first show a separation result for a search problem and then show a separation result for an estimation problem. In the latter case our separation result draws from lower bounds in adaptive data analysis.
\end{abstract}

\section{Introduction}

Randomized algorithms are often analyzed under the assumption that their internal randomness is independent of their inputs. This is a reasonable assumption for {\em offline} algorithms, which get all their inputs at once, process it, and spit out the results. However, in {\em online} or {\em interactive} settings, this assumption is not always reasonable. For example, consider a {\em dynamic} setting where the input comes in gradually (e.g., a graph undergoing a sequence of edge updates), and the algorithm continuously reports some value of interest (e.g., the size of the minimal cut in the current graph). In such a dynamic setting, it might be the case that future inputs to the algorithm depend on its previous outputs, and hence, depend on its internal randomness. For example, consider a large system in which a dynamic algorithm is used to analyze data coming from one part of the system while answering queries generated by another part of the system, but these (supposedly) different parts of the system are connected via a feedback loop. In such a case, it is no longer true that the inputs of the algorithm are independent of its internal randomness.

Nevertheless, classical algorithms, even for interactive settings, are typically analyzed under the (not always reasonable) assumption that their inputs are independent of their internal randomness. (The reason is that taking these dependencies into account often makes the analysis significantly more challenging.) One approach for avoiding the problem is to make the system deterministic. This, however, is very limiting as randomness is essential for good performance in streaming algorithms, online algorithms, and dynamic algorithms---basically in every algorithmic area in which the algorithm runs interactively. This calls for the design of algorithms providing (provable) utility guarantees even when their inputs are chosen adaptively. Indeed, this motivated an exciting line of work, spanning different communities in theoretical computer science, all focused on this question. This includes works from the streaming community \cite{ben2020framework,HassidimKMMS20,woodruff2020tight,alon2021adversarial,KaplanMNS21,KaplanMNS21,Braverman2021adversarial,ACSS21,BeEO21,chakrabarti2021adversarially}, learning community \cite{HardtU14,dwork2015preserving,bassily2021algorithmic,steinke2017tight,gupta2021adaptive,chakrabarti2021adversarially}, and dynamic algorithms \cite{holm2001poly,NanongkaiS17,Wulff-Nilsen17,NanongkaiSW17,chuzhoy2020deterministic,BernsteinC16,Bernstein17,BernsteinChechikSparse,ChuzhoyK19,ChuzhoyS20,gutenberg2020decremental,gutenberg2020deterministic,GutenbergWW20,Chuzhoy21,BhattacharyaHI15,BhattacharyaHN16,BhattacharyaHN17,BhattacharyaK19,Wajc19,BhattacharyaK21deterministic}. 
We continue the study of this question for dynamic algorithms.

\medskip
Before presenting our new results, we make our setting more precise. Given an input $x$ that undergoes a sequence of updates, our goal is to maintain a valid solution to some predefined problem $P$ at any point in time. We consider both {\em estimation} and {\em search} problems. In an {\em estimation} problem the goal is to provide a $(1\pm \epsilon)$ approximation of a numeric quantity that is a function of the current input. An example is the global min-cut problem, where an update inserts or deletes an edge and the goal is to output a $(1\pm \epsilon)$ approximation of the size of the global min-cut. In a {\em search} problem, the response to the query is a non-numeric value. For example, in the search version of the global min-cut problem the goal is to output a cut whose size is not much larger than the size of the smallest cut.

Formally, given a problem $P$ (over a domain $X$) and an initial input $x_0\in X$, we consider a sequence of $m$ input updates $u_1,u_2,\dots,u_m$, where every $u_i$ is a function $u_i:X\rightarrow X$. After every such update $u_i$, the (current) input is replaced with $x_i\leftarrow u_i(x_{i-1})$, and our goal is to respond with a valid solution for $P(x_i)$.
We refer to the case where the sequence of input updates is fixed in advance as the {\em oblivious} setting. In this work, we focus on the {\em adaptive} setting, where the sequence of input updates may be chosen adaptively. We think of the entity that generates the inputs as an ``adversary'' whose goal is to force the algorithm to misbehave (either to err or to have a large runtime). Specifically, the adaptive setting is modeled by a two-player game between a (randomized) \texttt{Algorithm} and an \texttt{Adversary}. At the beginning, we fix a problem $P$, and the \texttt{Adversary} chooses the initial input $x_0$. Then the game proceeds in rounds, where in the $i$th round:

\begin{enumerate}
    \item The \texttt{Adversary} chooses an update $u_i$, thereby modifying the current input to $x_i\leftarrow u_i(x_{i-1})$. Note that $u_i$ may depend on all previous updates and outputs of the \texttt{Algorithm}.
    \item The \texttt{Algorithm} processes the new update $u_i$ and outputs its current response $z_i$.
\end{enumerate}

\begin{remark}
For simplicity, in the above two-player game we focused on the case where the algorithm outputs a response after every update. Actually, in later sections of the paper, we will make the distinction between an {\em update} and a {\em query}. Specifically, 
in every time step the \texttt{Adversary} poses either an {\em update} (after which the \texttt{Algorithm} updates its data structure and outputs nothing) or a {\em query} (after which the \texttt{Algorithm} outputs a response). Moreover, in some of our results, we will separate between the preprocessing time (the period of time after the algorithm obtains its initial input $x_0$, and before it obtains its first update) and the update time and query time. 
\end{remark}

We say that the \texttt{Algorithm} solves $P$ in the adaptive setting with amortized update (and query) time $t$ if for any \texttt{Adversary}, with high probability, in the above two-player game,
\begin{enumerate}
    \item For every $i$ we have that $z_i$ is a valid solution for $P(x_i)$.
    \item The \texttt{Algorithm} runs in amortized time $t$ per update.
\end{enumerate}

\paragraph{Notation.} We refer to algorithms in the adaptive setting as {\em algorithm that work against an adaptive adversary}, or {\em adaptive algorithms} in short. Analogously, we refer to algorithms in the oblivious setting as {\em algorithm that work against an oblivious adversary}, or {\em oblivious algorithms} in short.

\subsection{Our Contributions}

In this paper we take a general perspective into studying dynamic algorithms against an adaptive adversary. 
On the positive side, we develop a general technique for transforming an oblivious algorithm into an adaptive algorithm. 
We show that our general technique results in new adaptive algorithms for several graph problems of interest.
On the negative side, we prove separation results. Specifically, we present dynamic problems that can be trivially  solved against oblivious adversaries, but, under certain assumptions, require a significantly higher computation time when the adversary is adaptive. This is the first separation between the oblivious setting and the adversarial setting for dynamic algorithms. In particular, this is the first separation between randomized and
deterministic dynamic algorithms, since deterministic algorithms always work against an adaptive adversary.

\subsubsection{Our positive results}

We describe a generic black-box reduction
to obtain a dynamic algorithm against an adaptive adversary from an oblivious one.
This reduction can be applied to any oblivious dynamic algorithm for an estimation problem. 
We then apply our generic reduction to obtain adaptive algorithms for several well-studied graph problems. 
To speed up the adaptive algorithms that we get from our generic reduction, we construct a sparsifier over a dynamic expander decomposition with fast initialization time. By combining our reduction technique with this sparsifier
we substantially improve the amortized update time. 
We obtain the following theorem.

\begin{theorem}
\label{thm:app}Given a weighted graph $G$ with $n$ vertices undergoing
edge insertions and deletions, there are dynamic algorithms against
an adaptive adversary for the following problems:
\begin{itemize}
\item \textbf{(Global min cuts):} $(1+\epsilon)$-approximation using $\Otil(n^{3/4})$
amortized update and query time (see \Cref{cor:global mincut sparse}).
\item \textbf{(All-pairs effective resistance):} $(1+\epsilon)$-approximation
using $n^{3/4+o(1)}$ amortized update and query time (see \Cref{cor:global mincut sparse}).%
\item \textbf{(All-pairs distances): }$\log^{3i+O(1)}n$-approximation using
$n^{1/2+1/(2i)+o(1)}$ amortized update and query time, for any integer
$i$ (see \Cref{cor:distance sparse}).
\item \textbf{(All-pairs distances with better approximation):} \textup{$\log n\cdot\poly(\log\log n)$-approximation
using} $\Otil(m^{4/5})$ amortized update and query time, where $m$
is the current number of edges (see \Cref{cor:distance accurate slow}).
\end{itemize}
\end{theorem}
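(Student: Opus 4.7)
The plan is to derive each of the four claims by combining three ingredients: (i) a known oblivious dynamic algorithm for the problem at hand, (ii) the generic reduction from oblivious to adaptive algorithms developed in the earlier sections, and (iii) the fast-refresh sparsifier over an expander decomposition also developed in the paper.

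First, for each problem I would identify a suitable oblivious dynamic algorithm. For global min cut and for all-pairs effective resistance I would invoke existing oblivious algorithms that maintain a $(1\pm\epsilon)$-approximation in sublinear amortized time. For the two distance results I would use, respectively, a Thorup--Zwick style emulator parametrized by the integer $i$ (trading stretch against time) and a more accurate but slower approximate distance oracle. Each of these serves as the base subroutine.

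Next, I would plug each base algorithm into the generic reduction from the earlier sections. Since the reduction is designed for estimation problems whose answer is a numeric quantity with bounded sensitivity to a single edge update, and since min cut, effective resistance, and approximate distances are all Lipschitz in the edge set, the reduction applies. The reduction replaces each query by a differentially private aggregation over several independent copies of the oblivious algorithm, which guards against the adversary's future updates correlating with the internal randomness of any one copy.

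The decisive speedup comes from running the oblivious algorithms on top of a sparsifier with $\Otil(n)$ edges rather than on the full graph: a cut/spectral sparsifier for the first two problems and a spanner or distance-preserving emulator for the last two. This sparsifier is maintained via the expander decomposition with fast initialization described in the paper. Refresh events are infrequent and batched; between refreshes only local edits to the sparsifier are forwarded to the copies of the oblivious algorithm. The specific exponents $n^{3/4}$, $n^{1/2+1/(2i)+o(1)}$, and $\Otil(m^{4/5})$ arise from balancing the cost of rebuilding the sparsifier against the per-update cost of the oblivious subroutine run on the sparsified graph, with a different optimal tradeoff for each problem.

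The main obstacle will be ensuring that the sparsifier itself is safe against the adaptive adversary: because its internal random choices influence the outputs the adversary observes, one cannot naively reuse an oblivious sparsifier. The plan is to exploit the fact that between refreshes the sparsifier is fixed and the adversary only gains new information at refresh times, and therefore to treat the composed object as an adaptive algorithm against the \emph{blinking} adversary introduced in the paper; the DP-based copy-maintenance layer of the generic reduction then upgrades this to full adaptivity. A secondary, more routine, obstacle is the per-problem parameter tuning--the sparsifier size, the number of DP copies, and the refresh period--which is what crystallizes the specific exponents stated in each bullet.
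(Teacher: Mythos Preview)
Your high-level plan matches the paper for the first two bullets (global min cut and effective resistance): those are indeed obtained by running a known oblivious algorithm on top of the expander-decomposition-based $(1+\epsilon)$-spectral sparsifier with fast refresh, wrapped in the DP-based generic reduction, with the blinking-adversary argument justifying correctness of the sparsifier layer. The balancing you describe is what produces the $n^{3/4}$ exponents.

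However, the plan breaks down for the two distance results, and in different ways.

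For the third bullet ($\log^{3i+O(1)} n$-approximate distances in $n^{1/2+1/(2i)+o(1)}$ time), the expander-decomposition sparsifier cannot be used: it is a \emph{spectral} sparsifier, which preserves cuts and effective resistances but not distances. The paper instead uses the dynamic $\polylog(n)$-spanner of Bernstein et al.\ (\Cref{thm:spanner}), which is a black box that already works against a \emph{fully adaptive} adversary. One simply runs the oblivious distance algorithm of Forster--Goranci--Henzinger on this spanner via the generic reduction; no blinking-adversary argument and no fast-refresh mechanism are needed here. The extra $\polylog$ stretch of the spanner is absorbed into the $O(1)$ in the exponent of the approximation factor.

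For the fourth bullet ($\log n\cdot\poly(\log\log n)$-approximate distances in $\Otil(m^{4/5})$ time), the paper uses \emph{no sparsification at all}---only ingredients (i) and (ii). The bound is stated in $m$, not $n$, precisely because sparsification is not applied. Your plan to run on a spanner would fail: the only dynamic spanner known against an adaptive adversary has $\polylog(n)$ stretch, which would inflate the approximation ratio from $\log n\cdot\poly(\log\log n)$ to $\polylog(n)$ and thus not deliver the stated guarantee. The paper explicitly discusses this obstruction as an open problem. Here one just plugs the oblivious algorithm of Chen et al.\ into \Cref{thm:basic} and balances $j$ to get $m^{4/5}$.

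A minor point: the reduction does not rely on any Lipschitz property of the estimated quantity; it only needs the oblivious algorithm to return a $\gamma$-approximation with probability $\ge 9/10$.
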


\begin{table}
\begin{tabular}{|>{\raggedright}m{0.19\textwidth}|>{\centering}m{0.13\textwidth}|>{\centering}m{0.15\textwidth}|>{\centering}m{0.24\textwidth}|>{\centering}m{0.17\textwidth}|}
\hline 
Problems & Approx.  & Previous 

oblivious algo. & Previous 

adaptive algo. & Our new 

adaptive algo.\tabularnewline
\hline 
\hline 
Global min cut & $(1+\epsilon)$ & $n^{1/2}$ 

\cite{thorup2007fully} & $m$ 

(use static algo.) & $n^{3/4}$\tabularnewline
\hline 
All-pairs effective resistance & $(1+\epsilon)$ & $n^{2/3+o(1)}$ 

\cite{chen2020fast} & $m$ 

(use static algo.) & $n^{3/4+o(1)}$\tabularnewline
\hline 
\multirow{2}{0.19\textwidth}{All-pairs distances } & $\log^{3i+O(1)}n$ & $n^{1/i+o(1)}$ 

\cite{forsterGH21} & $n$ 

(implicit in \cite{bernstein2020fully}) & $n^{1/2+1/(2i)+o(1)}$\tabularnewline
\cline{2-5} 
 & $\log n\cdot$

$\poly\log\log n$ & $n^{2/3+o(1)}$ 

\cite{chen2020fast} & $m$ 

(use static algo.) & $m^{4/5}$ 
\tabularnewline
\hline 
\end{tabular}

\caption{\label{tab:compare}A comparison between previous oblivious and adaptive
algorithms, and our new adaptive algorithm. The bound shown is the
maximum of update time and query time. We omit $\protect\polylog(n)$
factors. }
\end{table}

\Cref{thm:app} gives the first adaptive algorithms with both update
and query time of $o(n)$ for all the problems from \Cref{thm:app},
except the result of the $\log n\cdot\poly(\log\log n)$-approximate
all-pairs distances problem  where our algorithm
gives the first $o(m)$ bound. 
In \Cref{tab:compare}, we compare our results to %
previously known results. For all problems we consider, essentially
the only known technique against an adaptive adversary is to maintain
a sparsifier of the graph \cite{bernstein2020fully} and simply query on top of the sparsifier.

\subsubsection{Our negative results}

We present two separation results, one for a search problem and one for an estimation problem. 
Our separation results rely on (unproven) computational assumptions, which hold in the random oracle model.\footnote{A random oracle is an infinite random string $\RO$ such that the algorithm can read the $i$-th bit in $\RO$, $\RO[i]$, at the cost of one time unit. We assume than each bit of $\RO$ is uniformly distributed and independent of all other bits of $\RO$, thus, the only way to get any information on $\RO[i]$ is to read this bit.} Assuming a random oracle is a common assumption in cryptography, starting in the seminal work of Bellare and Rogaway~\cite{BellareR93}.
Many practical constructions are first designed and proved assuming a random oracle and then implemented using a cryptographic hash function replacing the random oracle. 
This is known as the random oracle methodology.\footnote{
We remark that the random oracle methodology is a heuristic. Canetti, Halevi, and Goldreich provide constructions of cryptographic schemes specifically tailored such that they become insecure under any instantiation of the random oracle with a computable function~\cite{CanettiGH98}.
} Thus, heuristically, the computational assumptions we make hold for cryptographic hash functions. We obtain the following theorem.

\begin{theorem}[informal]\label{thm:negative_intro}
Under some computational assumptions (or, alternatively, in the random oracle model):
\begin{enumerate}
    \item For any constant $c>1$, there is a dynamic search problem $P_{\rm search}^n$, where $n$ is a parameter controlling the instance size, that can be solved in the oblivious setting with amortized update time $O(n)$, but requires amortized update time $\Omega(n^c)$ in the adversarial setting, even if the algorithm is allowed time $2^{0.5n}$ in the preprocessing stage. 
   
    \item There is a dynamic estimation problem $P_{\rm est}^n$, where $n$ is a parameter controlling the instance size, 
    that can be solved in the oblivious setting with total time $O(n^6)$ over $O(n^2)$ updates, but requires total time $\Omega(n^7)$ over $O(n^2)$ updates in the adversarial setting. 
\end{enumerate}
\end{theorem}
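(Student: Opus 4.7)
The plan is to construct two separations --- one for search, one for estimation --- by exhibiting dynamic problems whose hardness is only triggered when the next update can depend on the algorithm's previous output. Both constructions exploit a random oracle (or any sufficiently unpredictable cryptographic hash) so that an oblivious algorithm can precompute everything during its preprocessing stage, while an adaptive adversary can continually force fresh work.

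For the search problem $P_{\rm search}^n$, the data structure maintains a set $S\subseteq\{0,1\}^n$ under insertions and deletions; a query $y\in\{0,1\}^n$ asks for some $x\in S$ satisfying $\RO(x,y)\in T_y$, where $T_y\subseteq\{0,1\}^n$ has density $n^{-c}$ (so a random $x$ is a witness with probability $n^{-c}$). The oblivious upper bound uses the $2^{0.5n}$ preprocessing: since the entire update/query sequence is fixed in advance, the algorithm brute-forces a witness for each future query against the (then-known) current set and stores the answers, after which each operation costs $O(n)$ time. The adaptive lower bound is obtained by an adversary who picks each $y_i$ uniformly at random, independently of the algorithm's transcript so far; because $y_i$ is fresh, the preprocessing has touched only an exponentially small fraction of the relevant random bits, and a standard black-box random-oracle argument shows that in expectation the algorithm must make $\Omega(n^c)$ oracle calls per query, for a total of $\Omega(T n^c)$.

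For the estimation problem $P_{\rm est}^n$, the data structure maintains a dataset $D\in(\{0,1\}^n)^n$ under insertions and deletions, and a query is a statistical query $q:\{0,1\}^n\to\{0,1\}$ whose average on $D$ must be estimated to additive error $1/\sqrt{n}$. For the oblivious upper bound, commit at the outset to a uniformly random subsample of size $\Otil(n)$: Chernoff plus a union bound over the fixed query sequence suffices, giving total runtime $O(n^6)$ over $\Theta(n^2)$ queries. The adaptive lower bound invokes the Steinke--Ullman ADA attacker, which produces $\Theta(n^2)$ adaptive queries such that no algorithm whose \emph{effective sample} has size smaller than $\Omega(n^{3/2})$ can answer all of them with the required accuracy. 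I then show that any dynamic algorithm with total time $R$ over the whole sequence can be simulated as an ADA algorithm whose effective sample has size $O((R/n^2)\cdot\polylog n)$ --- because each update or query touches at most $R/n^2$ distinct memory cells on average --- so $R=o(n^7)$ would break the ADA bound.

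The main obstacle is precisely the final reduction in the estimation case: translating an information-theoretic sample-complexity lower bound into a worst-case runtime lower bound. The technical content is the compilation from a dynamic algorithm with small amortized runtime into an ADA algorithm whose transcript (and hence whose effective sample) is proportionally small; one has to be careful that the compiled ADA algorithm's accuracy really does inherit the dynamic algorithm's accuracy, and that the Steinke--Ullman lower bound applies to this specific form of ``sample''. For the search separation the analogous, milder obstacle is ensuring that $2^{0.5n}$ preprocessing cannot subsume $\Omega(n^c)$ adaptive work; I would handle it by drawing each query $y_i$ from a distribution with $\Omega(n)$ bits of entropy so that the random-oracle evaluations needed after the adversary's move are, with overwhelming probability, disjoint from those made in preprocessing.
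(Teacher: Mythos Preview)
Both parts of your proposal have genuine gaps.

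\textbf{Part 1 (search).} Your oblivious upper bound is based on a misunderstanding of the model: you write that ``since the entire update/query sequence is fixed in advance, the algorithm brute-forces a witness for each future query'' during preprocessing. But an oblivious algorithm does \emph{not} see the future sequence; ``oblivious'' means only that the adversary commits to the sequence without seeing the algorithm's outputs, not that the algorithm is handed the sequence up front. With your problem as stated (find $x\in S$ with $\RO(x,y)\in T_y$, $|T_y|\approx 2^n/n^c$), an oblivious algorithm that learns $y$ only at query time must still make $\Omega(n^c)$ oracle probes in expectation to find a witness, so there is no separation. The paper's construction is designed precisely around this: the algorithm commits to a single random $x$ in preprocessing and outputs $x$ together with $H_n(x\circ y)$ for all $y$ in a maintained set $Y$; an oblivious adversary cannot hit this $x$, so only one new $H_n$-evaluation is needed per update. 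The adaptive adversary, by contrast, excludes the current $x$ and refreshes a random $y$, forcing fresh work. The $2^{0.5n}$ preprocessing allowance in the theorem refers to the \emph{adaptive} lower bound, not the oblivious upper bound.

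\textbf{Part 2 (estimation).} Your reduction ``total runtime $R$ implies effective sample size $O((R/n^2)\polylog n)$'' via memory cells touched does not connect to the Steinke--Ullman bound, which is a \emph{sample-complexity} statement about mechanisms holding i.i.d.\ draws from an unknown $\mu$. In your problem the algorithm holds the entire dataset $D$ and is asked for averages on $D$ itself; it can simply compute the exact average for each query in polynomial time, so no runtime lower bound follows. The paper closes this gap with a new primitive it calls a \emph{boxes scheme}: each data point is encoded so that decoding one box costs $T(n)$, and any $bT(n)$-time algorithm is indistinguishable (to a weak distinguisher) from a simulator that sees only $O(b)$ plaintexts. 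The lower bound then composes the dynamic algorithm with the ADA adversary into a single non-interactive $\BBB(k,\vec c)$, invokes the simulator to get an ADA mechanism with $O(b)$ samples, and reaches a contradiction with Hoeffding on the unobserved coordinates. The boxes scheme is exactly the device that converts a runtime budget into a sample budget; without it (or something playing the same role), your argument does not go through.
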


We note that our lower bound for the estimation problem $P_{\rm est}^n$ matches what we would get by applying our positive result (our generic reduction) to the oblivious algorithm that solves $P_{\rm est}^n$.

\subsection{Technical Overview}
We give a technical overview of our results. Any informalities made herein will be removed in the sections that follow. 

\subsubsection{Generic transformation using differential privacy}
Differential privacy \cite{dwork2006calibrating} is a mathematical definition for privacy that aims to enable statistical analyses of datasets while providing strong guarantees that individual level information does not leak. Over the last few years, differential privacy has proven itself to be an important algorithmic notion (even when data privacy is not of concern), and has found itself useful in many other fields, such as machine learning, mechanism design, secure computation, probability theory, secure storage, and more \cite{McSherryT07,bassily2021algorithmic,NissimS19,KellarisKNO17,BeimelHMO18}.

Recall that the difficulty in the adaptive setting arises from potential dependencies between the inputs of the algorithm and its internal randomness. 
Our transformation uses a technique, introduced by \cite{HassidimKMMS20} (in the context of streaming algorithms), for using {\em differential privacy} to protect not the input data, but rather the internal randomness of algorithm. As \cite{HassidimKMMS20} showed, this can be used to limit (in a precise way) the dependencies between the internal randomness of the algorithm and its inputs, thereby allowing to argue more easily about the utility guarantees of the algorithm in the adaptive setting. Following \cite{HassidimKMMS20}, this technique was also used by \cite{ACSS21,BeEO21} for streaming algorithms and by \cite{gupta2021adaptive} for machine unlearning. We adapt this technique and connect it to the setting of dynamic algorithms in general and dynamic algorithms for graph problems in particular.

Informally, our generic transformation can be described as follows.

\begin{enumerate}
    \item Let $T$ be a parameter, and let $\A$ be an oblivious algorithm. 
    \item Before the first update arrives, we initialize $c=\Otil(\sqrt{T})$
independent copies of $\A$, with the initial input $x_{0}$.
\item For $T$ time step $i=1,2,3,\dots,T$:
\begin{enumerate}
    \item Obtain an update $u_{i}$.
    \item Feed the update $u_{i}$ to {\em all} of the copies of $\A$.
    \item Sample $\Otil(1)$ of the copies of $\A$, query them, aggregate their responses with differential privacy, and output the aggregated value.
\end{enumerate}
\item Reset all of the copies of $\A$, i.e., re-initialize each copy on the current input with fresh randomness, and goto step 3. 
\end{enumerate}

It can be shown that this construction satisfies differential privacy w.r.t.\ the internal randomnesses of each of the copies of algorithm $\AAA$. The intuition is that by instantiating $c=\Otil(\sqrt{T})$ copies of $\A$ we have enough ``privacy bidget'' to aggregate $T$ values privately (this follows from advanced composition theorems for differential privacy \cite{DworkRV10}). After exhausting our privacy budget, we reset all our data structures, and hence, reset our privacy budget. The total time we need for $T$ steps is 
$$
\tilde{O}\left( \sqrt{T}\cdot t_{\rm total} + T\cdot t_q  \right),
$$
where $t_{\rm total}$ is the total time needed to conduct $T$ updates to algorithm $\A$, and $t_q$ is the query time of algorithm $\A$. Instantiating this generic construction for the graph problems we study already gives new (and non trivial) results, but does not yet obtain the results stated in Theorem~\ref{thm:app} and in Table~\ref{tab:compare}. Specifically, this obtains all the results in \Cref{thm:app} except that the update and query time bounds depend on $m$, the number of edges, rather than $n$, the number of  vertices.

\subsubsection{The blinking adversary model}
\label{sec:blinking}

We observe that for the graph problems we are interested in, we can improve over the above generic construction as follows. We design oblivious algorithms with an extra property that allows us to ``refresh'' them faster than the time needed to initialize them from scratch. 
Formally, the ``refresh'' properties that we need are:
\begin{enumerate}
    \item[(i)] Hitting  the refresh button is {\em faster} than instantiating the algorithm from scratch.
    \item[(ii)] The algorithm maintains utility against a ``semi-adaptive'' adversary (which we call a {\em blinking adversary}), defined as follows. Every time we hit the refresh button, say in time $i_{\rm refresh}$, the adversary gets to see all of the outputs given by the algorithm {\em before} time $i_{\rm refresh}$. The adversary may use this in order to determine the next updates and queries. From that point on, until the next time we hit the refresh button, the adversary is oblivious in the sense that it does not get to see additional outputs of the algorithm.\footnote{In our application, we hit the refresh button after every $T$ outputs of the algorithm, which means that the adversary gets to see the outputs of the algorithm in bulks of size $T$.}
    \end{enumerate}

We show that if we have an algorithm $\A$ with a (hopefully fast) refresh button, then when applying our generic construction to $\A$, in Step~4 of the generic construction it suffices to {\em refresh} all the copies of algorithm $\A$, instead of completely resetting them. Assuming that refresh is indeed faster than reset, then we get a speedup to our resulting construction. We then show how to construct dynamic algorithms (for the graph problems we are interested in) with a fast refresh button. %

\paragraph{Designing algorithms with fast refresh.} Let $\A$ be an oblivious algorithm for one of our graph problems. In order to speedup $\A$'s reset time, we ideally would have used an appropriate sparsifier, and run our oblivious algorithm $\A$ on top of the sparsifier. This would make sure that the time needed to restart $\A$ (without restarting the sparsifier) depends on $n$ rather than $m$. Unfortunately, known sparsifiers that approximate well the functions that we estimate, do not work against an adaptive adversary. Consequently, if we use such a sparsifier (and never reset it) then the adversary may learn about the sparsifier's randomness through the estimates spitted out by the algorithm and use this knowledge to fool the algorithm. On the other hand, resetting the sparsifier would cost us $O(m)$ time, which would be too much.

To overcome this challenge, we design a sparsifier with a fast $\tilde{O}(n)$ time {\em refresh} button. Hitting the refresh button does not completely reset the sparsifier, but it does satisfy properties (i) and (ii), which suffice for our needs. We show that running an oblivious algorithm $\A$ on top of such a sparsifier results in an algorithm $\A'$ with a fast refresh operation: To refresh this algorithm $\A'$ we refresh the sparsifier and reset algorithm $\A$ (the reset time of  algorithm $\A$ would be fast since it is running on top of the sparsifier). Applying our generic construction to algorithm $\A'$ (and using its fast refresh operation in Step~4 of the generic construction) results in Theorem~\ref{thm:app}.

\subsubsection{Negative result for a search problem}
\label{sec:intro-search}

In \cref{sec:negative_search}, we prove that there is a search problem that 
is much easier for an oblivious algorithm than for an adaptive  algorithm.
We next describe the ideas of this proof. To simplify the presentation in the introduction, we fix some parameters; our separation in \cref{sec:negative_search} is stronger. 

Assume that $H_n:\set{0,1}^{n}\rightarrow \set{0,1}^n$ is a function such that $H_n(x)$  can be computed in time $O(n)$ for every $x \in \set{0,1}^n$.
Consider a dynamic problem in which an adversary maintains a set $X \subset \set{0,1}^n$ of \emph{excluded} strings (where $|X| < 2^{0.5n}$), initialized as the empty set. In each update the adversary adds an element to $X$ and the algorithm has to output the pair $\big(x,H_n(x)\big)$ for an element $x \notin X$. 
An oblivious dynamic algorithm picks a random $x$ in the preprocessing stage, computes $w=H_n(x)$, and outputs the pair $(x,w)$. No matter how an oblivious adversary chooses its updates to $X$, the probability that in a sequence of at most $2^{0.5n}$ updates the adversary adds to $X$ the random $x$ chosen by the algorithm  is negligible (as $x \in_R \set{0,1}^n$ and the size of $X$ is at most $2^{0.5n}$), and the algorithm can use the same output $(x,w)$ after each update, thus not paying at all for an update. However, an adaptive adversary can see 
the output $x_{i-1}$ of the algorithm after the $(i-1)$-th update and add $x_{i-1}$ to $X$. Thus, after each update the algorithm has to compute $H_n(x_i)$ for
a new $x_i$. We want to argue that an adaptive algorithm has to spend  $\Omega(n)$ amortized time per update. 
For this we need to assume that $H_n$ is moderately hard, e.g., computing $H_n(x)$ requires time $\Omega(n)$. However, this assumption does not suffice; 
we need to assume that computing
$H_n(x_1),\dots,H_n(x_\ell)$ for some $\ell$ requires time $\Omega(n\ell)$ for any sequence of inputs $x_1,\dots,x_\ell$ chosen by the algorithm.
That is, we need to assume that computing $H_n$ on many inputs cannot be done substantially  more efficiently than computing each $H_n(x_i)$ independently;
such assumption is known as a direct-sum assumption. Thus, assuming that there is a function that has a direct-sum property we get a separation. 

In the simple separation described above, the algorithm can in the preprocessing stage choose  a sequence $x_1,x_2,\dots,x_\ell$ 
and compute the values $H_n(x_1),H_n(x_2),\dots,H_n(x_\ell)$. Thereby  the algorithm does not need to spend any time after each update. To force the algorithm to work during the updates, in \cref{sec:negative_search} we define a more complicated dynamic problem and prove that the amortized update time of an adaptive algorithm for this problem is high even if the preprocessing time of the algorithm is $2^{0.5n}$.

The problem discussed above actually captures the very well-known technique in  dynamic graph algorithms for exploiting an oblivious adversary (e.g. how dynamic reachability and shortest paths algorithms choose a random root for an ES-tree \cite{roditty2008improved,henzinger2014decremental,bernstein2019decremental}, or how dynamic maximal matching algorithms choose a random edge to match \cite{BaswanaGS18,Solomon16}, and similarly for dynamic independent set \cite{ChechikZ19,BehnezhadDHSS19} and dynamic spanner algorithms \cite{BaswanaKS12}). 
Roughly speaking, the set $X$ above corresponds to a set of deleted edges in the graph. These algorithms need to ``commit'' to some $x$, but whenever $x$ is deleted/excluded from the graph, they need to recommit to a new $x'$ and spend a lot of time.
By choosing a random $x$, the algorithm would not recommit so often against an oblivious adversary, hence obtain small update time. %
Our result, therefore, formalizes the intuition that this general approach does not extend (at least as is) to the adaptive setting.

\subsubsection{Negative result for an estimation problem}

In Section~\ref{sec:negative_est} we present a separation result for an estimation problem. Our result uses techniques from the recent line of work on {\em adaptive data analysis} \cite{dwork2015preserving}. We remark that a similar connection to adaptive data analysis was utilized by \cite{KaplanMNS21}, in order to show impossibility results for adaptive streaming algorithms. However, our analysis differs significantly as our focus is on runtime lower bounds, while the focus of \cite{KaplanMNS21} was on space lower bounds.

To obtain our negative result, we introduce (and assume the existence of) a primitive, which we call {\em boxes scheme}, that allows a dealer to insert $m$ plaintext inputs into ``closed boxes'' such that:
\begin{enumerate}
    \item A closed box can be opened, retrieving the plaintext input, in time $T$ (for some parameter $T$).
    \item Any algorithm that runs in time $b\cdot T$ cannot learn ``anything'' on the content of more than $O(b)$ of the boxes.
\end{enumerate}

Given this primitive (which we define precisely in Section~\ref{sec:negative_est}), we consider the following problem.

\begin{definition}[The average of boxes problem, informal]
The initial input consists of $m$ closed boxes. On every time step, the algorithm gets a predicate (mapping {\em plaintexts} to $\{0,1\}$), and the algorithm needs to estimate the average of this predicate over the {\em content} of the $m$ boxes.
\end{definition}

This is an easy problem in the oblivious setting, because the algorithm can sample $\tilde{O}(1)$ of the boxes, open them, and use their content in order to estimate the average of all the predicates given throughout the execution. However, as we show, this is a hard problem in the adaptive setting. Specifically, every adaptive algorithm for this problem essentially must open $\Omega(m)$ of the $m$ boxes it gets as input. Intuitively, as opening boxes takes time, we get a separation between the oblivious and the adaptive settings, thereby proving item 2 of Theorem~\ref{thm:negative_intro}.

\section{Preliminaries on Differential Privacy}

Roughly speaking,
an algorithm is differentially private if its output distribution is ``stable'' w.r.t.\ a change to a single input element. To formalize
this, let $X$ be a domain. A \emph{database} $S\in X^{n}$ is a list
of elements from domain $X$. The \emph{$i$-th row }of $S$ is the
$i$-th element in $S$.
\begin{defn}
[Differential Privacy]A randomized algorithm $\A$ is $(\epsilon,\delta)$-\emph{differentially
private} (in short $(\epsilon,\delta)$\emph{-DP}) if for any two
databases $S$ and $S'$ that differ on one row and any subset of
outputs $T$, it holds that 
\[
\Pr[\A(S)\in T]\le e^{\epsilon}\cdot\Pr[\A(S')\in T]+\delta,
\]
where the probability is over the randomness of $\A$. The parameter $\epsilon$ is referred to as the \emph{privacy parameter}. When $\delta=0$ we omit it and write $\epsilon$-DP.
\end{defn}

\paragraph{Composition.} A crucial property of differential privacy is that it is preserved under
\emph{adaptive composition}. Let $\A_{1}$ and $\A_{2}$ be algorithms. The adaptive composition $\A=\A_{2}\circ\A_{1}$
is such that, given a database $S$, $\A$ invokes $a_1=\A_1(S)$, then $a_2=\A_2(a_1,S)$, and finally outputs $(a_1,a_2)$. The \emph{basic
composition theorem} guarantees that, if $\A_{1},\dots,\A_{k}$ are
each $(\epsilon,\delta)$-DP algorithms, then the composition $\A_{k}\circ\dots\circ\A_{1}$
is $(\epsilon'=k\epsilon,k\delta)$-DP. The \emph{advanced composition theorem} shows that the privacy parameter $\epsilon'$ need not grow linearly in $k$, but instead only
in $\approx\sqrt{k}$. 

\begin{theorem}
[Advanced Composition \cite{DworkRV10}]\label{thm:composition}Let
$\epsilon,\delta'\in(0,1]$ and $\delta\in[0,1]$. If $\A_{1},\dots,\A_{k}$
are each $(\epsilon,\delta)$-DP algorithms, then $\A_{k}\circ\dots\circ\A_{1}$
is $(\epsilon',\delta'+k\delta)$-DP where 
\[
\epsilon'=\sqrt{2k\ln(1/\delta')}\cdot\epsilon+2k\epsilon^{2}.
\]
\end{theorem}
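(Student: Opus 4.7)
The plan is to follow the standard privacy-loss-as-martingale approach. First I would reduce to the pure DP case: for each $\A_i$, a standard lemma shows that $(\epsilon,\delta)$-DP implies the existence of a ``bad event'' $B_i$ (a subset of the randomness of $\A_i$ on the given history) of probability at most $\delta$ such that, conditioned on $\overline{B_i}$, the mechanism satisfies a pointwise $\epsilon$ likelihood-ratio bound. A union bound over the $k$ mechanisms absorbs a total failure probability of at most $k\delta$; conditioned on its complement, it suffices to prove $(\epsilon',\delta')$-DP for $k$ adaptively composed $(\epsilon,0)$-DP mechanisms. This produces the $k\delta$ slack appearing in the statement.

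Next, fix neighboring databases $S,S'$ and run the composed mechanism on $S$, producing transcripts $(a_1,\dots,a_k)$. Define the (adaptive) privacy loss
\[
L_i \;=\; \ln\frac{\Pr[\A_i(a_1,\dots,a_{i-1},S)=a_i]}{\Pr[\A_i(a_1,\dots,a_{i-1},S')=a_i]}.
\]
Pure DP gives $|L_i|\le\epsilon$ pointwise. The quantitative heart of the proof is a short calculation that bounds the conditional expectation $\E[L_i\mid a_1,\dots,a_{i-1}]\le\tfrac{1}{2}\epsilon(e^{\epsilon}-1)\le 2\epsilon^{2}$, using $e^\epsilon-1\le 2\epsilon$ for $\epsilon\le 1$; this is the classical DP bound on the KL divergence between neighboring output distributions.

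Then I would apply Azuma--Hoeffding to the martingale $M_i=\sum_{j\le i}\bigl(L_j-\E[L_j\mid\text{history}]\bigr)$. Each increment lies in an interval of length $2\epsilon$, so
\[
\Pr[M_k\ge t]\;\le\;\exp\!\bigl(-t^{2}/(2k\epsilon^{2})\bigr),
\]
and choosing $t=\epsilon\sqrt{2k\ln(1/\delta')}$ makes this at most $\delta'$. Hence with probability at least $1-\delta'$ the total privacy loss satisfies $\sum_i L_i \le 2k\epsilon^{2}+\epsilon\sqrt{2k\ln(1/\delta')}=\epsilon'$. A standard equivalence (high-probability bound on the privacy loss implies approximate DP) then turns this into $(\epsilon',\delta')$-DP for the composition under the pure-DP reduction; adding the $k\delta$ slack from the first step yields the claimed $(\epsilon',\delta'+k\delta)$-DP.

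The main obstacle is handling the adaptive component cleanly: each $\A_i$ and even the choice of the $i$-th algorithm in the sequence may depend on $a_1,\dots,a_{i-1}$, so the privacy loss must be defined with respect to the conditional distribution under the ``$S$'' experiment, and one must verify that the $L_i - \E[L_i\mid\mathrm{history}]$ genuinely form a bounded martingale difference sequence with respect to the filtration generated by the transcript. Once this is set up correctly, the per-step KL bound and Azuma deliver the stated exponent $\sqrt{2k\ln(1/\delta')}\cdot\epsilon+2k\epsilon^{2}$ immediately.
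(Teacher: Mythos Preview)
The paper does not prove this theorem; it is quoted as a preliminary result from \cite{DworkRV10} and used as a black box. Your outline is the standard privacy-loss martingale argument from that reference and is correct in its structure and final constants, so there is nothing in the paper to compare it against beyond noting that your sketch matches the original Dwork--Rothblum--Vadhan proof.
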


In our applications, the second term (which is linear in $k$) will be dominated by the first term. The
saving in $\epsilon'$ from $k$ to $\sqrt{k}$ enables a polynomial
speed up in our applications. 

\paragraph{Amplification via sampling.} \emph{Secrecy-of-the-sample} is a technique for ``amplifying'' privacy by subsampling. Informally, if a $\gamma$-fraction
of the input database rows are sampled, and only those are given as input to a differentially private algorithm then the privacy parameter is reduced (i.e., improved) by a factor proportional to $\approx\gamma$.
\begin{theorem}
[Amplicafication via sampling (Lemma 4.12 of \cite{Bun2015differentially})]\label{thm:subsample}Let
$\A$ be an $\epsilon$-DP algorithm where $\epsilon\le 1$. Let
$\A'$ be the algorithm that, given a database $S$ of size $n$,
first constructs a database $T\subset S$ by sub-sampling with repetition
$k\le n/2$ rows from $S$ and then returns $\A(T)$. Then, $\A'$ is $(\frac{6k}{n}\cdot\epsilon)$-DP.\footnote{The statement in \cite{Bun2015differentially} is more general
and allows $\A$ to be $(\epsilon,\delta)$-DP.} 
\end{theorem}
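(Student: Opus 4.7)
The plan is to combine a coupling argument with the group-privacy consequence of $\epsilon$-DP. I would fix any two neighboring databases $S,S'\in X^{n}$ differing in a single row $i$, and couple the two executions of $\A'$ so that both runs use the same random indices $I_{1},\dots,I_{k}\in[n]$. Let $J(I)=|\{\ell:I_{\ell}=i\}|$ be the number of times the sensitive row is picked; then $J\sim\mathrm{Binom}(k,1/n)$, and conditioned on any realization of $I$ the subsampled databases $T_{S}(I)$ and $T_{S'}(I)$ differ in exactly $J(I)$ rows. Writing $f(I):=\Pr[\A(T_{S'}(I))=o]$ and iterating $\epsilon$-DP across the differing rows (group privacy), I obtain the pointwise bound $\Pr[\A(T_{S}(I))=o]\le e^{J(I)\epsilon}f(I)$. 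Averaging over $I$ then gives $\Pr[\A'(S)=o]\le\Ex_{I}[e^{J(I)\epsilon}f(I)]$, while $\Pr[\A'(S')=o]=\Ex_{I}[f(I)]$.

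The heart of the argument will be to bound $\Ex_{I}[e^{J(I)\epsilon}f(I)]/\Ex_{I}[f(I)]$ even though ``$J(I)$ large'' and ``$f(I)$ large'' may be correlated. To decouple them, I plan to introduce a smoothing map $\phi$ which, given $I$, resamples every coordinate with $I_{\ell}=i$ uniformly from $[n]\setminus\{i\}$. By direct computation, $\phi(I)$ has iid uniform coordinates on $[n]\setminus\{i\}$, and \emph{given} $\phi(I)$ the conditional law of $J(I)$ is again $\mathrm{Binom}(k,1/n)$, independent of $\phi(I)$. Since $T_{S'}(I)$ and $T_{S'}(\phi(I))$ again differ in $J(I)$ rows, another application of group privacy in both directions gives the sandwich $e^{-J(I)\epsilon}f(\phi(I))\le f(I)\le e^{J(I)\epsilon}f(\phi(I))$. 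Applying the upper estimate in the numerator, the lower estimate in the denominator, and using the independence above to factor out $\Ex[f(\phi(I))]$, I arrive at
\[
\frac{\Pr[\A'(S)=o]}{\Pr[\A'(S')=o]}\;\le\;\frac{\Ex[e^{2J\epsilon}]}{\Ex[e^{-J\epsilon}]}\;=\;\left(\frac{1+(e^{2\epsilon}-1)/n}{1+(e^{-\epsilon}-1)/n}\right)^{k},
\]
via the standard binomial MGF $\Ex[e^{aJ}]=(1-1/n+e^{a}/n)^{k}$.

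The final step is a routine estimate. Taking logarithms, applying $\log(1+x)\le x$ together with the Taylor bounds $e^{2\epsilon}-1\le O(\epsilon)$ and $1-e^{-\epsilon}\le\epsilon$ valid for $\epsilon\le 1$, and using $k\le n/2$ (so that $n-1\ge n/2$), one obtains an upper bound of the form $6k\epsilon/n$ on the log-ratio, as claimed; the constant $6$ absorbs the Taylor and $n/(n-1)$ slacks. The principal conceptual obstacle is that \emph{pure} differential privacy demands a pointwise, worst-case bound on the output ratio rather than only an average one, so a direct Jensen-style appeal to $\Ex[e^{J\epsilon}]$ would fail: adversarial $f$ could concentrate mass on indices $I$ with large $J(I)$. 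The two-sided sandwich on $f$ obtained from applying group privacy in both directions is precisely the mechanism that turns the random count $J$ into a uniform bound on the ratio.
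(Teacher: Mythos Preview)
The paper does not prove this theorem; it is quoted as a preliminary from \cite{Bun2015differentially}, so there is no in-paper argument to compare against. That said, your coupling-plus-smoothing approach is sound, and the key independence claim (that $J(I)\sim\mathrm{Binom}(k,1/n)$ conditionally on $\phi(I)$, independently of $\phi(I)$) is correct.

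There is, however, a gap in the final numerical step. As written, your chain $g(I)\le e^{J\epsilon}f(I)\le e^{2J\epsilon}f(\phi(I))$ together with $f(I)\ge e^{-J\epsilon}f(\phi(I))$ yields the ratio bound $\Ex[e^{2J\epsilon}]/\Ex[e^{-J\epsilon}]$, whose logarithm is asymptotically $(e^{2\epsilon}-e^{-\epsilon})k/n$. At $\epsilon=1$ this is $(e^{2}-e^{-1})k/n\approx 7.02\,k/n$, which exceeds the claimed $6k\epsilon/n$; the constant $6$ does not absorb these slacks. The fix is a one-line sharpening of your own argument: observe that $T_{S}(I)$ and $T_{S'}(\phi(I))$ already differ in at most $J(I)$ rows (at each position $\ell$ with $I_{\ell}=i$ you have $S_{i}$ versus $S'_{\phi(I)_{\ell}}=S_{\phi(I)_{\ell}}$, while all other positions agree since $S$ and $S'$ coincide off row $i$). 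A \emph{single} application of group privacy thus gives $g(I)\le e^{J(I)\epsilon}f(\phi(I))$ directly, replacing $\Ex[e^{2J\epsilon}]$ by $\Ex[e^{J\epsilon}]$ in your numerator. The resulting bound
\[
k\,\log\!\left(\frac{1+(e^{\epsilon}-1)/n}{1-(1-e^{-\epsilon})/n}\right)
\]
is at most $2\sinh(\epsilon)\cdot k/n\le 2.4\,k\epsilon/n$ in the large-$n$ limit and is easily checked to stay below $6k\epsilon/n$ for all $\epsilon\le 1$ and $n\ge 2$ (e.g.\ at the extreme $n=2,\epsilon=1$ the inner fraction equals exactly $e$).
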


\paragraph{Generalization.} Our analysis relies on the \emph{generalization
property} of differential privacy. 
Let
${\cal D}$ be a distribution over a domain $X$ and let $h:X\rightarrow\{0,1\}$
be a predicate. 
Suppose that the goal is to estimate $h({\cal D})=\Ex_{x\sim{\cal D}}[h(x)]$. A simple
solution is to sample a set $S$ consisting of few elements from $X$ independently from
${\cal D}$,  and then compute the empirical average
$h(S)=\frac{1}{|S|}\sum_{x\in S}h(x)$. By
standard concentration bounds, we have $h({\cal D})\approx h(S)$.
That is, the empirical estimate on a small sample $S$ \emph{generalizes}
to the estimate over the underlying distribution ${\cal D}$. 

The argument above, however, fails if $S$ is sampled first
and $h$ is chosen adaptively, because $h$ can ``overfit'' $S$.
The theorem below says that, as long as the predicate $h$ is generated
from a differentially private algorithm $\A$, we can still guarantee generalization of $h$ (even when the choice of $h$ is a function of $S$). As
shown in \cite{HassidimKMMS20}, this key property will link
differentially privacy to accuracy of algorithms against an adaptive
adversary. 
\begin{theorem}
[Generalization of DP \cite{dwork2015preserving,bassily2021algorithmic}]\label{thm:generalize}Let
$\epsilon\in(0,1/3)$, $\delta\in(0,\epsilon/4)$ and $t\ge\frac{1}{\epsilon^{2}}\log(\frac{2\epsilon}{\delta})$.
Let ${\cal D}$ be a distribution on a domain $X$. Let $S\sim{\cal D}^{t}$
be a database containing $t$ elements sampled independent from ${\cal D}$.
Let ${\cal A}$ be an algorithm that, given any database $S$ of
size $t$, outputs a predicate $h:X\rightarrow\{0,1\}$. 
(We emphasize that $h$ may depend on $S$.)

If $\A$ is $(\epsilon,\delta)$-DP, then the empirical average of
$h$ on sample $S$, i.e., $h(S)=\frac{1}{|S|}\sum_{x\in S}h(x)$, and
$h$'s expectation over the underlying distribution ${\cal D}$, i.e.,
$h({\cal D})=\Ex_{x\sim{\cal D}}[h(x)]$, are within $10\epsilon$ with probability
at least $1-\frac{\delta}{\epsilon}$. In other words, we have 
\[
\Pr_{\underset{h\gets\A(S)}{S\sim{\cal D}^{t}}}\left[\left|\frac{1}{|S|}\sum_{x\in S}h(x)-\Ex_{x\sim{\cal D}}[h(x)]\right|\ge10\epsilon\right]<\frac{\delta}{\epsilon}.
\]
\end{theorem}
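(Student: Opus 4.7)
The plan is to prove the theorem in two stages: first, an expectation bound $|\E_{S,h}[h(S)-h(\mathcal{D})]| = O(\epsilon+\delta)$, and then amplify it into the claimed high-probability statement. The sample-size condition $t \ge \frac{1}{\epsilon^2}\log(2\epsilon/\delta)$ is precisely what is needed so that for an $h$ that is independent of $S$, the empirical average $h(S)$ concentrates around $h(\mathcal{D})$ within $\epsilon$ with probability at least $1-\delta/(2\epsilon)$ by a Chernoff bound; the real content of the theorem is handling the dependence between $h$ and $S$ created by $\A$.

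For the expectation bound I would use a ``swap-one'' coupling argument that exploits $(\epsilon,\delta)$-DP directly. Write $S=(x_1,\dots,x_t)$ and, for each index $i\in[t]$, let $S^{(i)}$ denote $S$ with $x_i$ replaced by a fresh independent $x_i'\sim\mathcal{D}$. Since $S$ and $S^{(i)}$ are neighboring, DP of $\A$ yields $\Pr_{h\sim\A(S)}[E] \le e^\epsilon\,\Pr_{h'\sim\A(S^{(i)})}[E]+\delta$ for every event $E$ on the output predicate. Applying this to the event ``$h(x_i)=1$'' (with $x_i$ fixed via conditioning) and noting that in the swapped coupling $x_i$ is independent of $h'\sim\A(S^{(i)})$, one obtains
\[
\E_{x_i,\,h\sim\A(S)}[h(x_i)] \;=\; \E_{x_i,\,h'\sim\A(S^{(i)})}[h'(x_i)] \pm O(\epsilon+\delta) \;=\; \E_{h'}[h'(\mathcal{D})] \pm O(\epsilon+\delta).
\]
Averaging over $i\in[t]$ converts the left side into $\E_{S,h}[h(S)]$ and the right side into $\E_h[h(\mathcal{D})]$, yielding the expectation bound.

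For the high-probability boost I would apply the expectation bound to a carefully constructed \emph{monitor} $\mathcal{M}$. Concretely, $\mathcal{M}$ draws $k$ i.i.d.\ copies $S_1,\dots,S_k\sim\mathcal{D}^t$, runs $\A$ on each to obtain predicates $h_1,\dots,h_k$, and---using a held-out fresh sample to estimate each $h_j(\mathcal{D})$---outputs the $h_j$ whose gap $h_j(S_j)-h_j(\mathcal{D})$ is largest. Adaptive composition (as in the advanced composition theorem stated above) guarantees that $\mathcal{M}$ is itself $(\epsilon',\delta')$-DP with $\epsilon' = O(\sqrt{k}\,\epsilon)$, so the expectation bound of the first stage applies to $\mathcal{M}$ as well. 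If the failure probability of a single run of $\A$ exceeded $\delta/\epsilon$, then with constant probability some $h_j$ would witness a gap larger than $10\epsilon$; this would push $\mathcal{M}$'s expected gap above what the first-stage bound permits, a contradiction. Tuning $k$ yields the claimed constants $10\epsilon$ and $\delta/\epsilon$.

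The hard part will be making this second step quantitatively tight: $k$ must be large enough that a $\delta/\epsilon$ failure probability for a single run of $\A$ produces, with good probability, a concrete witness $h_j$ whose gap is large; yet small enough that $\mathcal{M}$ remains in the DP regime where the first-stage expectation bound still delivers only $O(\epsilon)$ additive slack after the $\sqrt{k}$ blow-up. This balancing---together with the need to verify that the held-out sample used by $\mathcal{M}$ to approximate $h_j(\mathcal{D})$ is itself accurate, which is precisely where the hypothesis $t\ge\frac{1}{\epsilon^2}\log(2\epsilon/\delta)$ finally enters---is what fixes the specific constants appearing in the statement.
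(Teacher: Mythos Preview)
The paper does not prove this theorem; it is quoted from \cite{dwork2015preserving,bassily2021algorithmic} as a known result and used as a black box. Your two-stage plan does mirror the argument in \cite{bassily2021algorithmic}, and Stage~1 (the swap-one coupling yielding $|\E[h(S)-h(\mathcal{D})]|=O(\epsilon+\delta)$) is correct.

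Stage~2, however, has a genuine gap. You assert that the monitor $\mathcal{M}$, which runs $\A$ on $k$ \emph{independent} samples $S_1,\dots,S_k$, is $(O(\sqrt{k}\,\epsilon),\cdot)$-DP by advanced composition. But advanced composition concerns several DP mechanisms run on the \emph{same} dataset; here the $k$ invocations of $\A$ act on \emph{disjoint} blocks of the combined sample $T=(S_1,\dots,S_k)$, and the relevant fact is that changing one row of $T$ affects exactly one block, so the tuple $(h_1,\dots,h_k)$ remains $(\epsilon,\delta)$-DP with no $\sqrt{k}$ degradation. This is not cosmetic: under your accounting the Stage~1 bound for $\mathcal{M}$ would give slack $O(\sqrt{k}\,\epsilon)$, and since amplifying a per-run failure probability of $\delta/\epsilon$ to a constant forces $k=\Theta(\epsilon/\delta)$, that slack becomes $\epsilon\sqrt{\epsilon/\delta}$, which can vastly exceed $10\epsilon$---so the ``balancing'' you describe cannot close and no contradiction is obtained. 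In the argument of \cite{bassily2021algorithmic} the monitor's privacy stays at $(\epsilon,\delta)$, and it is a $k\delta$ term (arising when the Stage~1 inequality is applied on the combined sample of size $kt$) rather than a $\sqrt{k}\,\epsilon$ term that limits $k$; taking $k=\lfloor\epsilon/\delta\rfloor$ then makes all error terms $O(\epsilon)$ and yields the contradiction. A smaller point: the monitor is a thought experiment and may be handed $\mathcal{D}$ itself, so no held-out sample is needed to estimate $h_j(\mathcal{D})$; that is not where the hypothesis on $t$ enters.
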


The only differentially private subroutine we need is a very simple
algorithm for computing an approximate median of elements in databases. 

\begin{theorem}
[Private Median  (folklore)]\label{thm:primed}Let
$X$ be a finite domain with total order. For every $\epsilon,\beta\in(0,1)$,
there is $\Gamma=O(\frac{1}{\epsilon}\log(\frac{|X|}{\beta}))$ such
that the following holds. There exists an $(\epsilon,0)$-DP algorithm
$\primed_{\epsilon,\beta}$ that, given a database $S\in X^{*}$, 
in $O(|S|\cdot\frac{1}{\epsilon}\log^{3}(\frac{|X|}{\beta})\cdot\polylog(|S|))$
time outputs an element $x\in X$ (possibly $x\notin S$) such that,
with probability at least $1-\beta$, there are at least $|S|/2-\Gamma$
elements in $S$ that are bigger or equal to $x$ and at least $|S|/2-\Gamma$
elements in $S$ that are smaller or equal to $x$.
\end{theorem}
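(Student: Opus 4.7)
The plan is to realize $\primed_{\epsilon,\beta}$ via the Exponential Mechanism of McSherry--Talwar, combined with an interval-based efficient implementation. First, define the score function $q:X^{*}\times X\to\mathbb{Z}$ by
$q(S,x)=\min\bigl(|\{s\in S:s\le x\}|,\,|\{s\in S:s\ge x\}|\bigr)-\lceil|S|/2\rceil$.
This score is nonpositive, equals $0$ on every median of $S$, and has sensitivity at most $1$ under replacing a single row of $S$, since each of the two counts changes by at most $1$ and $\min$ is $1$-Lipschitz. The algorithm $\primed_{\epsilon,\beta}(S)$ then outputs each $x\in X$ with probability proportional to $\exp(\epsilon\,q(S,x)/2)$.

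Privacy then follows from the standard analysis of the Exponential Mechanism at sensitivity $1$: the output distribution is $(\epsilon,0)$-DP. For utility, the Exponential Mechanism guarantees that with probability at least $1-\beta$ the output $x$ satisfies $q(S,x)\ge\max_{y\in X}q(S,y)-\tfrac{2}{\epsilon}\ln(|X|/\beta)=-\tfrac{2}{\epsilon}\ln(|X|/\beta)$. Setting $\Gamma=\lceil\tfrac{2}{\epsilon}\ln(|X|/\beta)\rceil+1$ then yields $\min(|\{s\le x\}|,|\{s\ge x\}|)\ge|S|/2-\Gamma$, i.e.\ at least $|S|/2-\Gamma$ elements of $S$ lie on each side of $x$, which is exactly the conclusion required.

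For the runtime I would avoid the naive $\Omega(|X|)$ cost of instantiating one weight per $x\in X$ by exploiting the structure of $q$. Sort $S$ once in $O(|S|\log|S|)$ time as $s_{1}\le\cdots\le s_{|S|}$; these elements partition $X$ into at most $|S|+1$ intervals on each of which $q(S,\cdot)$ is constant. Sampling from the Exponential Mechanism is then done in two stages: first pick an interval $I$ with probability proportional to $|I|\cdot\exp(\epsilon\,q/2)$ via one pass of prefix-sums over the $O(|S|)$ interval weights, and then return a uniformly random element of the chosen $I$. This is a standard reduction that produces exactly the Exponential Mechanism's distribution.

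The main obstacle will be the numerical precision. The exponents $\epsilon\,q/2$ span a range of size $\Theta(\epsilon|S|)$ and the lengths $|I|$ are integers in $[1,|X|]$, so weights must be represented to roughly $O\bigl(\log(|X|/\beta)/\epsilon+\log|S|\bigr)$ bits in order that the sampled distribution lie within total variation distance $\beta$ of the ideal one, preserving privacy and utility up to the stated parameters. Exponentiation (via scaling plus Taylor truncation or repeated squaring), multiplication, and prefix-sum sampling on such numbers cost $\polylog(|S|)\cdot\log^{O(1)}(|X|/\beta)$ per operation, and $O(|S|)$ operations suffice, yielding the advertised bound $O\bigl(|S|\cdot\tfrac{1}{\epsilon}\log^{3}(|X|/\beta)\cdot\polylog(|S|)\bigr)$.
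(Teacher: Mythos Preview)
Your approach via the Exponential Mechanism is correct in spirit and is a genuine alternative to what the paper does: the paper does not give a self-contained proof but states that the algorithm is based on \emph{binary search} (i.e., recursively splitting $X$ and using noisy counts to decide which half contains the median) and cites Balcer--Vadhan for the finite-precision runtime. Your route via the Exponential Mechanism with the rank-based score is the other standard way to obtain a private approximate median, and it yields the same $\Gamma=O(\tfrac{1}{\epsilon}\log(|X|/\beta))$ directly from the mechanism's utility guarantee; the interval trick to collapse the $|X|$ candidates to $O(|S|)$ weight classes is also standard and sound. (Minor correction: the sorted elements of $S$ themselves form singleton intervals with a different score than the adjacent open intervals, so you get up to $2|S|+1$ pieces, not $|S|+1$; this does not affect the $O(|S|)$ count.)

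There is, however, a real gap in your precision argument. You write that representing the weights to enough bits makes the sampled distribution lie within total variation distance $\beta$ of the ideal Exponential Mechanism, ``preserving privacy and utility up to the stated parameters.'' Total-variation closeness preserves \emph{utility} fine, but it does \emph{not} preserve $(\epsilon,0)$-DP: a TV-$\beta$ perturbation of an $(\epsilon,0)$-DP distribution is in general only $(\epsilon,O(\beta))$-DP, whereas the theorem requires pure DP. To keep $\delta=0$ under finite precision you must either (i) compute the weights \emph{exactly} as integers (e.g., replace $e^{\epsilon/2}$ by a rational base, or work with scores in a way that makes all ratios rational and sample via integer arithmetic), or (ii) round the weights in a way that you can re-verify the $(\epsilon,0)$-DP inequality for the rounded mechanism itself. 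This is precisely the issue the paper sidesteps by citing Balcer--Vadhan, whose contribution is exactly implementing such primitives on finite computers while retaining pure DP. If you want your proof to stand on its own, you should replace the TV-distance sentence with one of these exact-arithmetic arguments.
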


The algorithm is based on binary search. If we assume that we can sample a real number from  the Laplace distribution in constant time, then running time would be $O(|S|\log|X|)$. Here, we do not assume that and use the bound from \cite{BalcerV19}. We remark that there are several advanced constructions for private median with error that grows very slowly as a function of the domain size $|X|$ (only polynomially with $\log^{*}|X|$). \cite{BeimelNS16,Bun2015differentially,KaplanLMNS20} In our application, however, the domain size is already small, and hence, we can use simpler construction stated above.

\section{A Generic Reduction}

In this section, we present a simple black-box transformation of dynamic
algorithms against an oblivious adversary to ones against an adaptive
adversary. The approach builds on the work of \cite{HassidimKMMS20}
for transforming streaming algorithms, which focus on space 
instead of update time. A key difference is that we apply subsampling
for speeding up.
This is crucial for enabling our applications in \Cref{sec:dynamic app} have non-trivial update and query time. %
We give the formal statement and its proof below.
\begin{theorem}
\label{thm:basic} \label{cor:basic}
Let $\delta_\fail,\alpha>0$ be parameters.
Let $g$ 
be a function that maps elements in some domain $X$ to a number in
$[-U,-\frac{1}{U}]\cup\{0\}\cup[\frac{1}{U},U]$ where $U>1$. Suppose
there is a dynamic algorithm $\A$ against an \emph{oblivious} adversary
that, given an initial input $x_{0}\in X$ undergoing a sequence of $T$
updates, guarantees the following:
\begin{itemize}
\item The total preprocessing and update time for handling $T$ updates
is $t_{\total}$. 
\item The query time is $t_{q}$ and, with probability $\ge9/10$, the answer
is a $\gamma$-approximation of $g(x)$.
\end{itemize}
Then, there is a dynamic algorithm $\A'$ against an \emph{adaptive}
adversary that, with probability at least $1-\delta_{\fail}$, maintains
a $\gamma(1+\alpha)$-approximation of a function $g(x)$ when the
input undergoes $T$ updates in $\Otil(\sqrt{T}\cdot t_{\total}+T\cdot t_{q})$
total update time.
By restarting the algorithm every $T$ steps we can run $\A'$ 
on a sequence of updates of any length in $
\Otil(t_{\total}/\sqrt{T}+t_{q})
$ amortized update time.
The $\Otil$, in this section, hides $\polylog(\frac{T\log  U}{\alpha\delta_{\fail}})$ factors.
\end{theorem}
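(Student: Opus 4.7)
The plan is to implement the wrapper sketched in the overview. I will maintain $c=\Theta(\sqrt{T\log(1/\delta_{\fail})}/\alpha)$ independent copies of $\A$, each initialized on $x_{0}$ with fresh randomness, feed every update to all copies, and for each query sample (with replacement) $k=\Theta(\log(T/\delta_{\fail}))$ of them, query those copies, round each answer to the nearest element of a geometric grid $X$ of size $|X|=O(\log(U)/\log(1+\alpha))$ covering $[-U,-\tfrac{1}{U}]\cup\{0\}\cup[\tfrac{1}{U},U]$, and release the private median of the rounded answers via $\primed_{\epsilon,\beta}$ of \Cref{thm:primed} with $(\epsilon,\beta)=(\Theta(\alpha),\Theta(\delta_{\fail}/T))$. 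Rounding turns any $\gamma$-approximation into a $\gamma(1+O(\alpha))$-approximation, and releasing any value within $O(1)$ grid cells of the true median preserves this up to a constant number of cells that we absorb into the $(1+\alpha)$ slack.

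For privacy, I treat the database $S$ as the $c$-tuple of random tapes of the copies. Each query's private median is $\epsilon$-DP in its $k$-subsample by \Cref{thm:primed}, hence $O(k\epsilon/c)$-DP in $S$ by \Cref{thm:subsample}, and advanced composition (\Cref{thm:composition}) makes the full $T$-round interaction $(\epsilon_{\total},\delta_{\total})$-DP with $\epsilon_{\total}=O\bigl(\sqrt{T\log(1/\delta_{\total})}\cdot k\epsilon/c\bigr)$. The choice of $c$ keeps $\epsilon_{\total}=O(\alpha)$ and $\delta_{\total}=O(\delta_{\fail})$, well within the regime of \Cref{thm:generalize}. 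For correctness, I fix the adversary and, for each round $i$, define the predicate $h_{i}(r)\in\{0,1\}$ on a random tape $r$ that equals $1$ iff a fresh copy of $\A$ driven by $r$ and fed the update sequence that the wrapper actually issued produces an accurate rounded answer on query $i$. Since all copies receive the same update sequence, that sequence is a deterministic function of the wrapper's past outputs, so $h_{i}$ is itself a post-processing of the DP transcript and inherits its parameters. Because $\Ex_{r}[h_{i}(r)]\ge 9/10$ by the oblivious guarantee of $\A$ (the update sequence is held fixed while $r$ is resampled), \Cref{thm:generalize} gives $\tfrac{1}{c}\sum_{r\in S}h_{i}(r)\ge 0.9-10\epsilon_{\total}\ge 0.8$ except with probability $\delta_{\total}/\epsilon_{\total}$. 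A Chernoff bound over the $k$-subsample together with the $\Gamma$-approximate median guarantee of \Cref{thm:primed} then yields an accurate answer with probability $1-\delta_{\fail}/T$, and a union bound over the $T$ rounds completes the proof. The total time is $c\cdot t_{\total}=\Otil(\sqrt{T}\cdot t_{\total})$ for maintaining the copies plus $T\cdot(k\cdot t_{q}+\Otil(k))=\Otil(T\cdot t_{q})$ for the queries; restarting the wrapper every $T$ updates yields the amortized bound stated in \Cref{cor:basic}.

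The main delicate step is ensuring that $h_{i}$ is genuinely a privacy-preserving post-processing of $S$. The subtlety is that the update sequence fed to the copies depends adaptively on the adversary's choices, which depend on the wrapper's past outputs, which depend on $S$, so the definition of $h_{i}(r)$ appears to couple $r$ with $S$. The resolution, following the blueprint of \cite{HassidimKMMS20}, is to present the wrapper as a single $(\epsilon_{\total},\delta_{\total})$-DP mechanism acting on $S$; once the transcript of this mechanism is fixed, both the update sequence and the predicate $h_{i}$ are deterministic functions of that transcript, so $h_{i}$ is a predicate on an independent fresh tape $r$, exactly as required by \Cref{thm:generalize}. Verifying this decomposition rigorously, and confirming that the with-replacement subsampling used by \Cref{thm:subsample} matches the implementation, are the two finicky points I expect to have to spell out with care.
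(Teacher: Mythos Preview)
Your approach is essentially the paper's: run $c$ independent copies, feed updates to all, subsample $k$ copies per query, round to a $(1+\alpha)$-grid, release a private median, argue privacy via subsampling plus advanced composition with respect to the tuple of random tapes, and argue accuracy via the generalization theorem applied to the predicate ``copy $j$ is accurate at step $i$.'' The decomposition you describe, including the subtlety that the accuracy predicate is a post-processing of the DP transcript, matches the paper's analysis.

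There is, however, one parameter choice that does not work as stated. You set the private-median privacy parameter to $\epsilon=\Theta(\alpha)$ and target $\epsilon_{\total}=O(\alpha)$, and correspondingly take $c=\Theta(\sqrt{T\log(1/\delta_{\fail})}/\alpha)$. This is both unnecessary and problematic. It is unnecessary because the generalization theorem (\Cref{thm:generalize}) is used only to conclude that the fraction of accurate copies drops from $9/10$ to at most $9/10-10\epsilon_{\total}$; a constant $\epsilon_{\total}$ (the paper uses $1/100$) already leaves $\ge 4/5$ accurate copies, and the $(1+\alpha)$ slack in the final approximation comes entirely from the rounding step, not from the privacy level. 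It is problematic because with $\epsilon=\Theta(\alpha)$ the private-median error is $\Gamma=\Theta\bigl(\tfrac{1}{\alpha}\log(|X|/\beta)\bigr)$, so you need $k=\Omega(\Gamma)=\Omega(1/\alpha\cdot\polylog)$, not $k=\Theta(\log(T/\delta_{\fail}))$ as you wrote; tracing this through, the number of copies $c$ picks up a polynomial $1/\alpha$ factor that is \emph{not} absorbed by the $\Otil$ of the theorem (which hides only $\polylog(\tfrac{T\log U}{\alpha\delta_{\fail}})$). The fix is to take $\epsilon_{\med}$ and the target $\epsilon_{\total}$ to be absolute constants (the paper uses $\epsilon_{\med}=1/2$, $\epsilon_{\total}=1/100$); then $\Gamma$, $k$, and $c/\sqrt{T}$ are all $\polylog$, and the stated bound follows.
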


\paragraph{Algorithm $\A'$ Description.}

Before the first update arrives, we initialize $c=\Otil(\sqrt{T})$
copies of $\A$, denoted by $\A^{(1)},\dots,\A^{(c)}$, with the input
$x_{0}$. For each step $i\in[\ibegin,\ifin]$, given an update $u_{i}$,
we feed the update $u_{i}$ to every copy of $\A$. Then, to
compute a $\gamma(1+\alpha)$-approximation of $g(x_{i})$, we independently uniformly sample $s=\Otil(1)$ indices $j_{1},\dots,j_{s}\in[1,c]$.\footnote{This is unlike in \cite{HassidimKMMS20} where all instances of $\A$ are used.} We emphasize the these sampled indices are not revealed to the adversary.
For each $1\le k\le s$, we query $\A^{(j_{k})}$ and let $\output_{i}^{(j_{k})}$
denote its estimate of $g(x_{i})$. Then, we round up $\output_{i}^{(j_{k})}$
to the nearest power of $(1+\alpha$) and denote it by $\outtil_{i}^{(j_{k})}$.
If $\output_{i}^{(j_{k})}=0$, we set $\outtil_{i}^{(j_{k})}=0$.
Finally, $\A'$ outputs the estimate of $g(x_{i})$ as
\[
\output'_{i}=\primed_{\epsilon_{\med},\beta}(\outtil_{i}^{(j_{1})},\dots,\outtil_{i}^{(j_{s})})
\]
where $\primed_{\epsilon_{\med},\beta}$ is for estimating the median with differential privacy
(see \Cref{thm:primed}) and $\epsilon_{\med}=1/2$ and $\beta=\delta_{\fail}/2T$. 

\paragraph{Specifying Parameters.}

Here, we specify the parameters of the algorithm. This is needed for precisely stating the total update time. 
Let $X_{\med}$ denote the total ordered domain for $\primed_{\epsilon_{\med},\beta}$.
In our setting, $X_{\med}$ is simply the range of $\outtil_{i}^{(j_{k})}$
which is $\{0\}\cup\{\pm(1+\alpha)^{a}\mid-\log_{(1+\alpha)}U\le a\le\log_{(1+\alpha)}U\}$.
So $|X_{\med}| = O(\frac{\log U}{\alpha})$. The parameter $\Gamma$
from \Cref{thm:primed} is such that $\Gamma=O(\frac{1}{\epsilon_{\med}}\log(\frac{|X_{\med}|}{\beta}))=O(\log(\frac{\log U}{\alpha\beta}))=\Otil(1)$. 
Now, we set $s = 100\Gamma$, so that when $\primed_{\epsilon_\med,\beta}$ is given $s$ numbers, it returns a number whose rank is in $s/2 \pm \Gamma = (1/2  \pm 1/100)s$, a good enough approximation of the median. Lastly, we set 
\begin{align*}
c & =\paramc=\Otil(\sqrt{T}).
\end{align*}
This choice of $c$ is such that after subsampling and composition the entire algorithm  would be private for the appropriate parameters with respect to its random bits. See  Corollary \ref{cor:transcript whole}.

\paragraph{Total Update Time.}

The total time $c$ copies of $\A$ preprocess the initial input $x_{0}$
and handle all $T$ updates is clearly $c\cdot t_{\total}$ by definition
of $t_{\total}$. For each step $i$, we only query $s$ many copies
of $\A$ to obtain $\output_{i}^{(j_{1})},\dots,\output_{i}^{(j_{s})}$.
Rounding $\output_{i}^{(j_{k})}$ to its nearest power of $(1+\alpha)$
takes $O(\log\frac{\log U}{\alpha})$ time by binary search. Computing
$\output'_{i}$ is to evaluate $\primed_{\epsilon_{\med},\beta}$
which takes $t_{\med}=O(s\cdot\frac{1}{\epsilon_{\med}}\log(\frac{|X_{\med}|}{\beta})\cdot\polylog(s))=O(s\cdot\frac{1}{\epsilon_{\med}}\log(\frac{\log U}{\alpha\beta})\cdot\polylog(s))$
time by \Cref{thm:primed}. Therefore, we can conclude:
\begin{prop}
\label{prop:update time}The total update time of $\A'$ is at most
\begin{align*}
 & c\cdot t_{\total}+T\times O\left((t_{q}+\log\frac{\log U}{\alpha})\cdot s+t_{\med}\right)\\
 & =O\left(t_{\total}\cdot\sqrt{T}\cdot\log\left(\frac{T\log U}{\alpha\delta_{\fail}}\right)\cdot\sqrt{\log\frac{T}{\delta_{\fail}}}\right)+O\left(t_{q}\cdot T\cdot\log\left(\frac{T\log U}{\alpha\delta_{\fail}}\right)+T\cdot \polylog\left(\frac{T\log U}{\alpha\delta_{\fail}}\right)\right)\\
 & =\Otil(t_{\total}\sqrt{T}+t_{q}T)
\end{align*}
\end{prop}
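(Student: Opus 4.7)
The plan is to account separately for the cost of (a) feeding updates to the $c$ maintained copies of $\A$, and (b) the per-step overhead of producing $\output'_i$ from those copies.

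For (a), I would argue that by hypothesis the total preprocessing plus the cost of processing all $T$ updates inside a single copy of $\A$ is at most $t_{\total}$, and since there are $c$ independent copies each receiving the identical update sequence, the cumulative cost is exactly $c \cdot t_{\total}$. Plugging in $c = \paramc = \Otil(\sqrt{T})$ as fixed in the parameter specification gives the first summand $\Otil(t_{\total}\sqrt{T})$.

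For (b), I would analyze one step $i\in[\ibegin,\ifin]$ and then multiply by $T$. At step $i$, the algorithm samples $s=100\Gamma$ indices (which costs $O(s)$), queries each of the $s$ selected copies for a total of $s\cdot t_q$, rounds each returned value to the nearest power of $1+\alpha$ by binary search over $X_{\med}$ in $O(s\log\frac{\log U}{\alpha})$ time, and finally invokes $\primed_{\epsilon_{\med},\beta}$ on $s$ elements at cost $t_{\med}=O(s\cdot\frac{1}{\epsilon_{\med}}\log(\frac{|X_{\med}|}{\beta})\cdot\polylog(s))$ by \Cref{thm:primed}. Summing these over $T$ steps and plugging in $|X_{\med}|=O(\frac{\log U}{\alpha})$, $\epsilon_{\med}=1/2$, $\beta=\delta_{\fail}/2T$, and $s=\Otil(1)$ yields the second summand $O(t_q T\log(\frac{T\log U}{\alpha\delta_\fail})) + T\cdot \polylog(\frac{T\log U}{\alpha\delta_\fail})$.

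Adding (a) and (b), and observing that all extra $\log$ and $\polylog$ factors in $\frac{T\log U}{\alpha\delta_\fail}$ are absorbed into $\Otil$ (which, as the theorem statement declares, hides exactly these factors), I obtain the claimed bound $\Otil(t_{\total}\sqrt{T}+t_q T)$. There is no real obstacle here: the statement is a straightforward bookkeeping calculation, and the only subtlety is confirming that the privacy/accuracy parameters ($\epsilon_{\med},\beta,s,\Gamma,|X_{\med}|$) are all polylogarithmic in the quantities hidden by $\Otil$, so that no unintended $T$-dependence leaks out of $t_{\med}$ or the rounding step.
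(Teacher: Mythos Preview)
Your proposal is correct and essentially identical to the paper's own argument: the paper likewise splits the cost into $c\cdot t_{\total}$ for feeding updates to all copies and a per-step term of $O((t_q+\log\frac{\log U}{\alpha})\cdot s + t_{\med})$, then substitutes the chosen values of $\beta,\Gamma,s,c$ to obtain the displayed second and third lines. The only cosmetic difference is that the paper presents the per-step breakdown in the paragraph preceding the proposition rather than as part of a proof environment.
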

The second line is obtained by simply plugging in the definitions of $\beta = \delta_\fail/2T$, 
$\Gamma = O(\log(\frac{\log U}{\alpha\beta})) = O(\log(\frac{T\log U}{\alpha\delta_{\fail}}))$,
$s = 100\Gamma = O(\log(\frac{T\log U}{\alpha\delta_{\fail}}))$ and, hence, we have that $c = O(\sqrt{T}\cdot\log(\frac{T\log U}{\alpha\delta_{\fail}})\cdot\sqrt{\log\frac{T}{\delta_{\fail}}})$ and so the second line follows.

\paragraph{Accuracy Against an Adaptive Adversary.}

It only remains to argue that $\A'$ maintains accurate approximation
of $g(x)$ against  an adaptive adversary. Let $r^{(1)},\dots,r^{(c)}\in\{0,1\}^{*}$
denote the random strings used by the oblivious algorithms $\A^{(1)},\dots,\A^{(c)}$
during the $T$ updates. We view the collection of random string $R=\{r^{(1)},\dots,r^{(c)}\}$
as a database where each $r^{(j)}$ is its row. We will show that
the transcript of the interaction between the adversary and algorithm
$\A'$ is differentially private \emph{with respect to $R$.} (This
is perhaps the most important conceptual idea from \cite{HassidimKMMS20}.)
Then, we will exploit this fact to argue that the answers of $\A'$
are accurate. Let us formalize this plan below.

For any time step $i$, let $\output'_{i}(R)$ denote the output of
$\A'$ at time step $i$ when the collection $R$ is fixed. Note that
$\output'_{i}(R)$ is still a random variable because $\A'$ uses
some additional random strings for subsampling and computing a private
median. Now, we define ${\cal T}_{i}(R)=(u_{i},\output'_{i}(R))$
as the transcript between $\adversary$ and algorithm $\A'$ at step
$i$. Let 
\[
{\cal T}(R)=x_{0},{\cal T}_{\ibegin}(R),\dots,{\cal T}_{\ifin}(R)
\]
denote the transcript. We also prepend the transcript with the input
$x_{0}$ before the first update arrives. Since $R$ is freshly sampled
at the beginning, it is completely independent from $x_{0}$.
We view ${\cal T}_{i}$ and ${\cal T}$ as algorithms that, given
a database $R$, return the transcripts. From this view, we can prove
that they are differentially private with respect to $R$.
\begin{lem}
\label{lem:transcript round}For a fixed step $i$, ${\cal T}_{i}$
is $(\frac{6s}{c}\cdot\epsilon_{\med},0)$-DP with respect to $R$. 
\end{lem}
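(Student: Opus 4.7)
The plan is to view the step-$i$ mechanism as subsampling $s$ rows from the database $R$ followed by the $\epsilon_\med$-DP algorithm $\primed_{\epsilon_\med,\beta}$, and then to invoke the subsampling amplification lemma (\Cref{thm:subsample}). Since DP is a worst-case guarantee over adversary behavior, I would assume the adversary is deterministic (by fixing its randomness) and condition on the history ${\cal T}_1(R), \dots, {\cal T}_{i-1}(R)$: once this history is fixed, the current update $u_i$ is determined, and so is the current input $x_i = u_i(\cdots u_1(x_0))$. Moreover, each oblivious copy $\A^{(j)}$ is deterministic given its random string $r^{(j)}$ and the (now fixed) input sequence $x_0, u_1, \dots, u_i$, so the query answer $\output_i^{(j)}$ and its rounded-to-a-power-of-$(1+\alpha)$ version $\outtil_i^{(j)}$ are deterministic functions of $r^{(j)}$ alone.

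Next, I would analyze the mechanism producing $\output'_i$. This mechanism samples $s$ indices $j_1, \dots, j_s \in [c]$ uniformly with repetition, independently of $R$; constructs the database $D(R) = (\outtil_i^{(j_1)}, \dots, \outtil_i^{(j_s)})$ whose $k$-th row depends only on $r^{(j_k)}$; and returns $\primed_{\epsilon_\med,\beta}(D(R))$. By \Cref{thm:primed}, $\primed_{\epsilon_\med,\beta}$ is $\epsilon_\med$-DP in its input database. Composing row-wise with the deterministic maps $r^{(j_k)} \mapsto \outtil_i^{(j_k)}$ preserves this DP guarantee, so the mechanism that takes an $s$-row subsample of $R$ and outputs $\output'_i$ is $\epsilon_\med$-DP. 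Interpreting $R \mapsto \output'_i(R)$ as ``subsample $s$ rows of $R$ with repetition, then apply this $\epsilon_\med$-DP mechanism,'' I would apply \Cref{thm:subsample} with $n = c$ and $k = s$ (the hypothesis $s \le c/2$ holds by the chosen parameters) to conclude that $R \mapsto \output'_i(R)$ is $(\frac{6s}{c}\epsilon_\med)$-DP.

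Finally, since $u_i$ is fixed by the conditioning and independent of $R$, appending it to $\output'_i$ to form ${\cal T}_i = (u_i, \output'_i(R))$ is a trivial post-processing step that preserves DP, yielding the claimed $(\frac{6s}{c}\epsilon_\med, 0)$-DP bound. The main conceptual subtlety is the conditioning-on-history step used to decouple $u_i$ and $x_i$ from $R$ (so that the single-step mechanism can be analyzed in isolation); once this is done, the argument is a clean combination of $\primed_{\epsilon_\med,\beta}$'s DP guarantee with subsampling amplification, and the global DP bound will follow from adaptive composition in the subsequent corollary.
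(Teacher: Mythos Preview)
Your proposal is correct and follows essentially the same approach as the paper: the paper's proof also argues that $u_i$ gives no information about $R$, then invokes the $\epsilon_{\med}$-DP guarantee of $\primed_{\epsilon_{\med},\beta}$ together with the subsampling amplification of \Cref{thm:subsample} (noting $s\le c/2$) to conclude. Your treatment is more explicit about conditioning on the history to decouple $u_i$ from $R$ and about the row-wise deterministic map $r^{(j_k)}\mapsto\outtil_i^{(j_k)}$, but these are exactly the details the paper elides.
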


\begin{proof}
Given a transcript ${\cal T}_{i}(R)=(u_{i},\output'_{i}(R))$ of only
a single step $i$, $u_{i}$ does not give any (new) information about $R$.
So it suffices to consider $\output'_{i}(R)$ which is set to $\primed_{\epsilon_{\med},\beta}(\outtil_{i}^{(j_{1})},\dots,\outtil_{i}^{(j_{s})})$.
By \Cref{thm:primed}, we have that $\primed_{\epsilon_{\med},\beta}$
is $(\epsilon_{\med},0)$-DP. Its inputs are $\outtil_{i}^{(j_{1})},\dots,\outtil_{i}^{(j_{s})}$
which are determined by the subset $\{r^{(j_{1})},\dots,r^{(j_{s})}\}\subset R$,
which in turn are obtained by sub-sampling from $R$. By invoking
\Cref{thm:subsample} (and note that $s\le c/2$), subsampling boosts
the privacy parameter and so $\output'_{i}(R)$ is $(\frac{6s}{c}\cdot\epsilon_{\med},0)$-DP
with respect to $R$ as claimed. 
\end{proof}
\begin{cor}
\label{cor:transcript whole}${\cal T}$ is $(\frac{1}{100},\frac{\beta}{100})$-DP
with respect to $R$.
\end{cor}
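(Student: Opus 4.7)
The plan is to derive the full-transcript privacy by applying the advanced composition theorem (\Cref{thm:composition}) to the single-round guarantee of \Cref{lem:transcript round}, viewing ${\cal T}(R) = x_0, {\cal T}_{1}(R), \ldots, {\cal T}_{T}(R)$ as an adaptive composition of $T$ mechanisms on the database $R$.

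First I would note that $x_0$ is fixed by the adversary before $R$ is sampled, so it is independent of $R$ and prepending it to the transcript incurs no privacy cost. For each round $i$, \Cref{lem:transcript round} says that the mechanism producing ${\cal T}_i(R)$ is $(\epsilon_0,0)$-DP in $R$, where $\epsilon_0 = \frac{6s}{c}\cdot\epsilon_{\med}$. The adversary's update $u_i$ is a function of the previous transcript prefix, and the algorithm draws fresh coins in each round for the subsampling indices $j_1,\ldots,j_s$ and for $\primed$; both are exactly the features handled by the adaptive composition framework. Thus invoking \Cref{thm:composition} with $k=T$, per-round parameter $(\epsilon_0,0)$, and $\delta' = \beta/100$ yields that ${\cal T}$ is $(\epsilon',\beta/100)$-DP in $R$ with
\[
\epsilon' = \sqrt{2T\ln(100/\beta)}\cdot\epsilon_0 + 2T\epsilon_0^{2}.
\]

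Substituting the chosen $c = \paramc$ gives $\epsilon_0 = \frac{1}{200\sqrt{2T\ln(100/\beta)}}$, so the first term equals exactly $1/200$ while the second equals $\frac{1}{40000\ln(100/\beta)} \le 1/200$ (using $\beta\le 1$, hence $\ln(100/\beta) \ge \ln 100 > 4$). Summing, $\epsilon' \le 1/100$, which is the advertised bound.

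The only step I expect to require real care is justifying the adaptive composition formally: the $i$-th mechanism must remain $(\epsilon_0,0)$-DP in $R$ \emph{conditional on every fixed prefix} ${\cal T}_1(R),\ldots,{\cal T}_{i-1}(R)$, so that the adaptive composition hypothesis is met. This should follow directly from the proof of \Cref{lem:transcript round}, which invokes the unconditional privacy of $\primed$ (\Cref{thm:primed}) together with the subsampling amplification of \Cref{thm:subsample}, and both guarantees hold pointwise for any fixing of the adversary's past queries and of the algorithm's past fresh coins. Once this is in place, \Cref{thm:composition} applies off the shelf and the corollary follows.
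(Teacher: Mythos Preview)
Your proof is correct and follows essentially the same approach as the paper: apply advanced composition (\Cref{thm:composition}) with $k=T$, per-round parameter $(\frac{6s}{c}\epsilon_{\med},0)$ from \Cref{lem:transcript round}, and $\delta'=\beta/100$, then substitute $c=\paramc$ to bound $\epsilon'\le\frac{1}{200}+\frac{1}{200}$. Your added justification for why the second term is at most $1/200$ and your remark that $x_0$ is independent of $R$ match the paper's reasoning exactly.
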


\begin{proof}
Observe that ${\cal T}$ is an adaptive composition ${\cal T}_{\ifin}\circ\dots\circ{\cal T}_{\ibegin}$
(except that we prepend $x_{0}$ which is independent from $R$).
Since each ${\cal T}_{i}$ is $(\frac{6s}{c}\cdot\epsilon_{\med},0)$-DP
as shown in \Cref{lem:transcript round}, by applying the advanced
composition theorem (\Cref{thm:composition}) with parameters $\epsilon=\frac{6s}{c}\epsilon_{\med}$,
$\delta=0$, and $\delta'=\beta/100$, we have that that ${\cal T}$
is $(\epsilon',\delta k+\delta')$-DP where 
\begin{align*}
\epsilon' & =\sqrt{2T\ln(100/\beta)}\cdot\left(\frac{6s}{c}\epsilon_{\med}\right)+2T\cdot\left(\frac{6s}{c}\epsilon_{\med}\right)^{2}\\
 & \le\frac{1}{200}+\frac{1}{200}=\frac{1}{100}
\end{align*}
because $c=\paramc$. Also, $\delta k+\delta'=\beta/100$. Therefore,
${\cal T}$ is $(\frac{1}{100},\frac{\beta}{100})$-DP.
\end{proof}
Next, we exploit differential privacy for accuracy against an
adaptive adversary. Let $\xvec_{i}=(x_{0},u_{\ibegin},\dots,u_{i})$
denote the whole input sequence up to time $i$. Let $\A(r,\xvec_{i})$
denote the output of the oblivious algorithm $\A$ on input sequence
$\xvec_{i}$, given a random string $r$. Let 
\[
\acc_{\xvec_{i}}(r)=\one\left\{ g(x_{i})\le\A(r,\xvec_{i})\le\gamma g(x_{i})\right\} 
\]
be the indicator function deciding if $\A(r,\xvec_{i})$ is $\gamma$-accurate.
Note that $\acc_{\xvec_{i}}(r^{(j)})$ indicates precisely whether
the instance $\A^{(j)}$ is accurate at time $i$. Now, we show that
at all times, most instances of the oblivious algorithm are $\gamma$-accurate. 
\begin{lem}
\label{lem:mostly accurate}For each fixed $i\in[\ibegin,\ifin]$,
$\sum_{j=1}^{c}\acc_{\xvec_{i}}(r^{(j)})\ge\frac{4}{5}c$ with probability
at least $1-\beta$.
\end{lem}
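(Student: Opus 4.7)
I would apply the differential-privacy generalization theorem (\Cref{thm:generalize}), viewing $R=\{r^{(1)},\dots,r^{(c)}\}$ as a sample of size $c$ from the uniform distribution ${\cal D}$ over random strings of $\A$, and taking the predicate $h:=\acc_{\xvec_i}$. The oblivious guarantee on $\A$ rephrases as: for every \emph{fixed} input sequence $\xvec$, $\Ex_{r\sim{\cal D}}[\acc_{\xvec}(r)]\ge 9/10$. The task is to transport this bound from the population to the empirical average $\frac{1}{c}\sum_{j}\acc_{\xvec_i}(r^{(j)})$ even though $\xvec_i$ depends on $R$ through the adaptively chosen updates.

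The key step is to see that the random mapping $\AAA^{\star}:R\mapsto h$ is itself differentially private in $R$. For any fixing of the auxiliary randomness used by the adversary and by $\A'$ (the subsampling coins and the internal coins of $\primed$), every update $u_j$ with $j\le i$ is a deterministic function of the previous transcript entries, so $\xvec_i=(x_0,u_1,\dots,u_i)$ --- and hence $h=\acc_{\xvec_i}$ --- is a post-processing of ${\cal T}$. By \Cref{cor:transcript whole} the transcript ${\cal T}$ is $(\tfrac{1}{100},\tfrac{\beta}{100})$-DP with respect to $R$, and closure of DP under post-processing endows $\AAA^{\star}$ with the same parameters.

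Finally, I would invoke \Cref{thm:generalize} with $\epsilon=\tfrac{1}{100}$, $\delta=\tfrac{\beta}{100}$, and $t=c$; the hypotheses $\epsilon<1/3$, $\delta<\epsilon/4$, and $c\ge\tfrac{1}{\epsilon^{2}}\log(2\epsilon/\delta)$ all hold for our parameter choices ($\beta=\delta_{\fail}/2T$ and $c=\paramc$). The conclusion is that with probability at least $1-\delta/\epsilon=1-\beta$,
\[
\left|\frac{1}{c}\sum_{j=1}^{c}\acc_{\xvec_i}(r^{(j)})-\Ex_{r\sim{\cal D}}[\acc_{\xvec_i}(r)]\right|\le 10\epsilon=\tfrac{1}{10}.
\]
Combining with $\Ex_{r\sim{\cal D}}[\acc_{\xvec_i}(r)]\ge 9/10$, which holds pointwise for every realization of $\xvec_i$ by the oblivious guarantee applied to that specific (fixed) sequence, yields $\sum_{j}\acc_{\xvec_i}(r^{(j)})\ge (4/5)c$ as required. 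The only subtle step is the identification of $h$ with the output of a DP algorithm on $R$; this is precisely the conceptual point borrowed from \cite{HassidimKMMS20}: DP of the transcript lets us apply a per-sequence oblivious guarantee even though $\xvec_i$ was generated adaptively.
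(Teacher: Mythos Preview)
Your proposal is correct and follows essentially the same approach as the paper: recognize that $\acc_{\xvec_i}$ is a post-processing of the transcript ${\cal T}$, invoke \Cref{cor:transcript whole} to conclude the predicate is produced by a $(\tfrac{1}{100},\tfrac{\beta}{100})$-DP mechanism on $R$, apply the generalization theorem \Cref{thm:generalize} with $\epsilon=\tfrac{1}{100}$, $\delta=\tfrac{\beta}{100}$, $t=c$, and combine the resulting $\tfrac{1}{10}$-closeness with the oblivious guarantee $\Ex_{r}[\acc_{\xvec}(r)]\ge 9/10$. Your explicit invocation of closure under post-processing and the check of the hypotheses of \Cref{thm:generalize} make the argument slightly more detailed than the paper's, but the route is the same.
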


\begin{proof}
Observe that the function $\acc_{\xvec_{i}}(\cdot)$ is determined
by the transcript ${\cal T}$. This is because the input sequence
$\xvec_{i}$ is just a substring of the transcript ${\cal T}$ and
$\xvec_{i}$ determines the predicate $\acc_{\xvec_{i}}$. 

Now, we have the following (1) each row of $R$ is a string drawn
independently from the uniform distribution ${\cal U}$, (2) $\acc_{\xvec_{i}}$
is a predicate on strings and is determined by ${\cal T}$ as argued
above, and (3) ${\cal T}$ can be viewed as a $(\frac{1}{100},\frac{\beta}{100})$-DP
algorithm with respect to $R$ by \Cref{cor:transcript whole}. By
the generalization property of differential privacy (\Cref{thm:generalize}),
we have that the empirical average of $\acc_{\xvec_{i}}$ on $R$,
i.e., $\frac{1}{c}\sum_{j=1}^{c}\acc_{\xvec_{i}}(r^{(j)})$, and $\acc_{\xvec_{i}}$'s
expectation over the underlying distribution ${\cal U}$, i.e., $\Ex_{r\sim{\cal U}}[\acc_{\xvec}(r)]$
should be close to each other. More formally, by invoking \Cref{thm:generalize}
where $\epsilon=1/100$, $\delta=\beta/100$ and $t=c\gg\frac{1}{\epsilon^{2}}\log(\frac{2\epsilon}{\delta})$,
we have that 

\[
\Pr_{\underset{\acc_{\xvec_{i}}\gets{\cal T}(R)}{R\sim{\cal U}^{c}}}\left[\left|\frac{1}{c}\sum_{j=1}^{c}\acc_{\xvec_{i}}(r^{(j)})-\Ex_{r\sim{\cal U}}[\acc_{\xvec_{i}}(r)]\right|\ge\frac{1}{10}\right]\le\beta.
\]

Since the oblivious algorithm $\A$ returns accurate answers with
probability at least $9/10$ as long as its random choice is independent
from the input, for any arbitrary input sequence $\xvec$, we have
$\Ex_{r\sim{\cal U}}[\acc_{\xvec}(r)]\ge9/10$. Therefore, we have
that with probability at least $1-\beta$, 
\[
\frac{1}{c}\sum_{j=1}^{c}\acc_{\xvec_{i}}(r^{(j)})\ge\frac{9}{10}-\frac{1}{10}\ge\frac{4}{5}
\]
as desired. 
\end{proof}
Given that most instances of the oblivious algorithm are always accurate,
it is intuitively immediate that $\A'$ is always accurate too. This
is because $\A'$ returns the median of the sub-sampled answers from
oblivious algorithms. Below, we verify this.

\begin{cor}
\label{cor:accurate whole}For all $i\in[\ibegin,\ifin]$, $g(x_{i})\le\output'_{i}\le\gamma(1+\alpha)g(x_{i})$
with probability at least $1-\delta_{\fail}$.
\end{cor}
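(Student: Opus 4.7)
The plan is to combine \Cref{lem:mostly accurate} (most oblivious instances are accurate) with the Chernoff bound on subsampling and the rank guarantee of $\primed_{\epsilon_{\med},\beta}$, all of which I will union-bound over the $T$ time steps.

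First, I would apply \Cref{lem:mostly accurate} at each step $i\in[\ibegin,\ifin]$ and take a union bound, so that with probability at least $1-T\beta$ the ``majority'' event holds at every step: at least $\tfrac{4}{5}c$ indices $j\in[c]$ satisfy $g(x_i)\le \A(r^{(j)},\xvec_i)\le\gamma\, g(x_i)$. Condition on this event. Since $\outtil_i^{(j_k)}$ is $\output_i^{(j_k)}$ rounded \emph{up} to the nearest power of $(1+\alpha)$, every index $j$ that was accurate at step $i$ has $\outtil_i^{(j)}\in[g(x_i),\gamma(1+\alpha)g(x_i)]$. Call these indices \emph{good at step $i$}.

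Second, I would handle the subsampling. At step $i$ we pick $s=100\Gamma$ indices $j_1,\dots,j_s$ uniformly at random from $[c]$, so the expected fraction of good samples is at least $4/5$. A Chernoff bound gives that the number of good samples is at least $(4/5-1/100)s > s/2+\Gamma$ except with probability $\exp(-\Omega(s))$; because $s=\Otil(1)$ we can absorb this (and make the constant ``$1/100$'' arbitrarily small) inside our $\polylog$ factors, and union-bound over the $T$ steps at a negligible cost in $\beta$.

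Third, I would invoke \Cref{thm:primed} at each step and union-bound over $T$ steps to guarantee, up to probability $T\beta$, that the returned $\output'_i=\primed_{\epsilon_{\med},\beta}(\outtil_i^{(j_1)},\dots,\outtil_i^{(j_s)})$ is an element of $X_{\med}$ with at least $s/2-\Gamma$ samples above it and at least $s/2-\Gamma$ below it. Combining with the previous step: if $\output'_i<g(x_i)$, then all $(>s/2+\Gamma)$ good samples would exceed $\output'_i$, leaving at most $s-(s/2+\Gamma)<s/2-\Gamma$ samples $\le\output'_i$, a contradiction; symmetrically $\output'_i>\gamma(1+\alpha)g(x_i)$ is impossible. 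Hence $\output'_i\in[g(x_i),\gamma(1+\alpha)g(x_i)]$. A final union bound over the three failure events at all $T$ steps yields total failure probability at most $\delta_{\fail}$ by our choice $\beta=\delta_{\fail}/(2T)$.

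The main obstacle is step three: I must verify that the ``$4/5$ majority'' slack, the ``$1/100\cdot s$'' Chernoff slack, and the ``$\Gamma$'' private-median slack fit together consistently so that the $\Gamma$-approximate median is forced to lie inside the accurate window $[g(x_i),\gamma(1+\alpha)g(x_i)]$. This is exactly calibrated by the choice $s=100\Gamma$, which makes $s/2+\Gamma$ strictly smaller than the good-sample count and thereby closes the argument.
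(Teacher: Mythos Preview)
Your proposal is correct and follows essentially the same approach as the paper: condition on \Cref{lem:mostly accurate}, apply a concentration bound (Hoeffding/Chernoff) to the subsampled indices, and then use the rank guarantee of $\primed$ to trap $\output'_i$ in the accurate window, finishing with a union bound over the $T$ steps. The only cosmetic differences are that the paper uses the threshold $\tfrac{3}{4}s$ instead of your $(4/5-1/100)s$ and folds the \Cref{lem:mostly accurate} failure and the concentration failure into a single per-step $\beta$ budget (so its final count is $2T\beta=\delta_{\fail}$), whereas you separate them into three events; both bookkeepings work since $s=100\Gamma$ makes $\exp(-\Omega(s))\le\beta$.
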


\begin{proof}
Consider a fixed step $i$. Recall that $\A'$ independently samples
$s$ indices $j_{1},\dots,j_{s}$ and queries $\A^{(j_{k})}$ for
$1\le k\le s$. Let $\acc_{k}=\acc_{\xvec_{i}}(r^{(j_{k})})$ indicate
whether $\A^{(j_{k})}$ is accurate at time $i$. \Cref{lem:mostly accurate}
implies that $\Ex[\sum_{k=1}^{s}\acc_{k}]\ge\frac{4}{5}s$. By Hoeffding's
bound, $\sum_{k=1}^{s}\acc_{k}\ge\frac{3}{4}s$ with probability at
least $1-\exp(-\Theta(s))\ge1-\beta$ by making sure that the constant
in the definition of $s$ (actually $\Gamma$) is large enough. 

If $\acc_{k}=1$, we have that $g(x_{i})\le\output_{i}^{(j_{k})}\le\gamma g(x_{i})$
and so $g(x_{i})\le\outtil_{i}^{(j_{k})}\le\gamma(1+\alpha)g(x_{i})$.
So at least $\frac{3}{4}$-fraction of $\outtil_{i}^{(j_{1})},\dots,\outtil_{i}^{(j_{s})}$
are $\gamma(1+\alpha)$-approximation of $g(x_{i})$. With probability
at least $1-\beta$, $\primed_{\epsilon_{\med},\beta}$ returns $\output'_{i}$
such that there are $\frac{1}{2}-\frac{\Gamma}{s}\ge\frac{49}{100}$
fraction of $\outtil_{i}^{(j_{1})},\dots,\outtil_{i}^{(j_{s})}$ that
are at least $\output'_{i}$ and the same holds for those that are
at most $\output'_{i}$. Therefore, $\output'_{i}$ is a $\gamma(1+\alpha)$-approximation
of $g(x_{i})$ with probability at least $1-2\beta$. By union bound,
this holds over all time steps with probability at least $1-2T\beta=1-\delta_{\fail}$.
\end{proof}
Via the accuracy guarantee from \Cref{cor:accurate whole} together
with the total update time bound from \Cref{prop:update time}, we
now conclude the proof of \Cref{thm:basic}.

\subsection*{Extensions of \Cref{thm:basic}}

Before we conclude this section, we discuss several possible ways to
extend the reduction from \Cref{thm:basic}. 

\paragraph{Worst-case update time.}

First of all, although the reduction is stated for amortized update
time, it can be made worst-case if the given oblivious algorithm guarantee
worst-case update time. More formally, we have the following.
\begin{theorem}
\label{thm:basic worst-case}Let $g:X\rightarrow[-U,-\frac{1}{U}]\cup\{0\}\cup[\frac{1}{U},U]$
be a function that maps elements in some domain $X$ to a number in
$[-U,-\frac{1}{U}]\cup\{0\}\cup[\frac{1}{U},U]$ where $U>1$. Suppose
there is a dynamic algorithm $\A$ against an \emph{oblivious} adversary
that, given an initial input $x_{0}$ undergoing a sequence of $T$
updates, guarantees the following:
\begin{itemize}
\item The preprocessing time on $x_{0}$ is $t_{p}$
\item The worst-case update time for each update is $t_{u}$. 
\item The query time is $t_{q}$ and, with probability $\ge9/10$, the answer
is a $\gamma$-approximation of $g(x)$.
\end{itemize}
Then, there is a dynamic algorithm $\A''$ against an \emph{adaptive}
adversary that, with probability at least $1-\delta_{\fail}$, maintains
a $\gamma(1+\alpha)$-approximation of a function $g(x)$ when $x$
undergoes any sequence of update using 
\[
\Otil\left(\frac{t_{p}}{\sqrt{T}}+\sqrt{T}t_{u}+t_{q} \right)
\]
 worst-case update time. 
\end{theorem}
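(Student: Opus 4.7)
The plan is to de-amortize the algorithm $\A'$ of \Cref{thm:basic} by running two of its instances in parallel with staggered lifetimes, so that the preprocessing cost $c\cdot t_p$ (where $c=\tilde{O}(\sqrt{T})$ is the number of internal copies of $\A$ maintained by a single instance) is spread evenly across the updates within a phase rather than paid in a single ``restart''.

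Set $L=T/2$ and partition time into phases of length $L$. At every moment the outer algorithm $\A''$ holds two structures: an \emph{active} instance $B^{\mathrm{act}}$ that serves queries, and a \emph{nascent} instance $B^{\mathrm{nas}}$ that is being built. Both are instances of the construction underlying \Cref{thm:basic} with $c=\tilde{O}(\sqrt{T})$ copies of $\A$. At the start of each phase $B^{\mathrm{nas}}$ is initialized from scratch on the current input; at the end of the phase it replaces $B^{\mathrm{act}}$, and a fresh $B^{\mathrm{nas}}$ is seeded. Every instance thus sees at most $2L=T$ updates over its lifetime, which fits inside \Cref{thm:basic}'s regime.

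The worst-case per-step work then splits into three pieces: (i) feeding the arriving update to all $c$ copies of $B^{\mathrm{act}}$, costing $c\,t_u=\tilde{O}(\sqrt{T}\,t_u)$; (ii) answering any query via $B^{\mathrm{act}}$ exactly as in \Cref{thm:basic}, costing $s\,t_q + t_{\med}=\tilde{O}(t_q)$; (iii) performing one slice of the build of $B^{\mathrm{nas}}$. The total build work across a phase is $c(t_p + L\,t_u)$ -- preprocessing each of the $c$ copies on the phase-start input plus catching each of them up on the $L$ updates that arrive during the phase. Dividing evenly across the $L$ steps yields $\tfrac{c\,t_p}{L}+c\,t_u=\tilde{O}(t_p/\sqrt{T}+\sqrt{T}\,t_u)$ per step, and summing (i)--(iii) gives the claimed worst-case bound.

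For correctness I would invoke \Cref{thm:basic} on each instance separately. While an instance is nascent it emits no output, so its inputs depend only on the previous instance's outputs, which are determined by independent random bits; while it is active it is the sole producer of outputs about itself and the adversary is free to adapt, which is exactly the $T$-step adaptive setting of \Cref{thm:basic}. A union bound over the $O(m/L)$ instances that appear during a length-$m$ execution (with $\delta_{\fail}$ rescaled by a polylogarithmic factor absorbed into $\tilde{O}$) finishes the correctness argument.

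The main obstacle is piece (iii): one needs a genuinely \emph{worst-case} (not merely amortized) schedule that performs one fixed-size slab of build work per update step while simultaneously draining a queue of pending updates so that each of the $c$ copies is fully caught up exactly when the phase ends. Since $t_p$ and $t_u$ are worst-case bounds on $\A$, this reduces to elementary bookkeeping: round the slab size up to $\lceil c\,t_p/L\rceil + c\,t_u$ basic operations and interleave initialization with queued-update application across the $c$ copies, under the mild (standard) assumption that $\A$'s preprocessing can be paused and resumed between update steps. A minor side issue is the very first phase, during which no active instance exists; this is handled by an initial preprocessing period of $\tilde{O}(\sqrt{T}\,t_p)$ before the first update, which is implicit in the theorem statement bounding only per-update cost.
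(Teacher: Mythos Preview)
Your proposal is correct and follows essentially the same approach as the paper: maintain two instances of the adaptive algorithm $\A'$ from \Cref{thm:basic}, alternate which one is active in phases of length $\Theta(T)$, and spread the $\tilde O(\sqrt{T}\,t_p)$ preprocessing of the idle instance evenly across the phase. Your write-up is in fact more explicit than the paper's sketch about the catch-up updates and the worst-case scheduling bookkeeping, but the underlying de-amortization idea is identical (the paper cites it as a standard trick, e.g.\ Lemma~8.1 of \cite{baswana2016dynamic}).
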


\begin{proof}
[Proof sketch]Using exactly the same algorithm from \Cref{thm:basic},
we obtain the adaptive algorithm $\A'$ can handle $T$ updates whose
preprocessing time is $\Otil(t_{p}\sqrt{T})$ and the worst-case update/query
time is $\Otil(t_{u}\sqrt{T}+t_{q})$. To get an algorithm with $\Otil(\frac{t_{p}}{\sqrt{T}}+\sqrt{T}t_{u}+t_{q})$
worst-case update time, we create two instances $\A'_{odd}$ and $\A'_{even}$
of $\A'$ and proceed in phases. Each phase has $\Theta(T)$ updates.
We only need to show how to avoid spending a large preprocessing time
of $\Otil(t_{p}\sqrt{T})$ in a single time step. During the odd phases,
we use $\A'_{odd}$ to handle the queries and distribute the work
for preprocessing of $\A'_{even}$ equally on each time step in this
phase. During the even phases, we do the opposite. So the preprocessing
time is ``spread'' over $\Theta(T)$ updates, which consequently
contributes $\Otil(\frac{t_{p}}{\sqrt{T}})$ worst-case update time.
This a very standard technique in the dynamic algorithm literature
(see e.g.~Lemma 8.1 of \cite{baswana2016dynamic}). 
\end{proof}

\paragraph{Speed up for stable answers.}

Suppose that we know that during $T$ updates, the $(1+\epsilon)$-approximate
answers to the queries can change only $\lambda$ times for some $\lambda\ll T$.
Then, using the same idea from \cite{HassidimKMMS20}, the total update
time of \Cref{thm:basic} can be improved to $\Otil(t_{\total}\sqrt{\lambda}+t_{q})$. 

This idea could be useful for several problems. For examples, suppose
that we want to maintain $(1+\epsilon)$-approximation of global minimum
cut, or $(s,t)$-minimum cuts, or maximum matching in unweighted graphs.
If the current answer $k$, we know that it must takes at least $\epsilon k$
updates before the answer changes by a $(1+\epsilon)$ factor. Suppose
that, somehow, the answer is always at least $k$, then we have $\lambda\le T/\epsilon k$.
It is also possible to remove assumption that the answer is at least
$k$: if there is a separated adaptive algorithm that can take care
of the problem when the answer is less than $k$, then we can run
both algorithms in parallel. This idea of combining the two algorithms,
one for small answers and another for large answers, was explicitly
used in \cite{ben2021adversarially} in the streaming setting.

\paragraph{Speed up via batch updates.}

In the algorithm for \Cref{thm:basic}, recall that we create $c=\Otil(\sqrt{T})$
copies of $\A$, denoted by $\A^{(1)},\dots,\A^{(c)}$. For each copy
$\A^{(j)}$, we feed an update $u_{i}$ at every time step $i$ one
by one. Consider what if we are lazy in feeding the update to $\A^{(j)}$.
That is, we wait until $\A^{(j)}$ is sampled and we want to query
$\A^{(j)}$. Only then we feed a batch of updates containing all updates
that we have not feed to $\A^{(j)}$ until the current update. The
batch will be of size $O(c/s)=\Otil(\sqrt{T})$ in expectation. That
is, through out the sequence of $T$ updates, each $\A^{(j)}$ is
expected to handles only $\Otil(\sqrt{T})$ batches containing $\Otil(\sqrt{T})$
updates.

This implementation would not change the correctness. But the whole
algorithm can possibly be faster if $\A$ is a dynamic algorithm that
can handle a batch update of size $\Otil(\sqrt{T})$ faster than a
sequence of $\Otil(\sqrt{T})$ updates. This is a property that is
quite natural to expect. Indeed, there are some dynamic algorithms
such that the larger the batch the faster the update time on average,
including dynamic matrix inverse and its applications such that dynamic
reachability \cite{sankowski2010fast,van2019dynamic}. However, these
algorithms are not for approximating functions and so we cannot exploit
them in this paper.

\section{Applications to Dynamic Graph Algorithms }

\label{sec:dynamic app}

In this section, we show new dynamic approximation algorithms against
an adaptive adversary for four graph problems including, global minimum
cut, all-pairs distances, effective resistances, and minimum cuts.
Given a graph $G$ undergoing edge updates, let $n$ denote a number
of vertices and $m$ denote a \emph{current} number of edges in $G$.
In \Cref{sec:apply reduction}, we show how the generic reduction from
\Cref{thm:basic} immediately transforms known algorithms against an
oblivious adversary to work against an adaptive adversary with $o(m)$
update and query time. Then, in \Cref{sec:sparsification period},
we show a sparsification technique which allows us to assume that
$m=\Otil(n)$ all the time and speed up all of our algorithms.

Throughout this section, $\Otil$ hides a $\polylog(n)$ factor. We
also assume edge weights are integers of size at most $\poly(n)$.
Also, to simplify the calculation, we will assume $\epsilon=\Omega(1)$
in all of our dynamic $(1+\epsilon)$-approximation algorithms.%

\subsection{Applying the Generic Reduction}

\label{sec:apply reduction}

The update time of our dynamic algorithms in this subsection depends
on $m$. Although we will later show in \Cref{sec:sparsification period}
that we can assume $m=\Otil(n)$, even within this subsection, we
can also assume that $m$ never changes by more than a constant factor,
because otherwise, we can restart the algorithm from scratch which
would increase the amortized update time by at most a constant factor. 

We start with the first dynamic algorithm against an adaptive adversary
for $(1+\epsilon)$-approximate global mincut. 
\begin{cor}
[Global minimum cuts]
\label{cor:globalmincut}
There is a dynamic
algorithm against an adaptive adversary that, given a weighted graph
$G$ undergoing edge insertions and deletions, with probability $1-1/\poly(n)$,
maintains a $(1+\epsilon)$-approximate value of the global mincut
in $\Otil(m^{1/2}n^{1/4})=\Otil(m^{3/4})$ amortized update time.
\end{cor}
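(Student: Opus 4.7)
The plan is to apply the generic reduction of \Cref{thm:basic} to Thorup's \cite{thorup2007fully} oblivious dynamic algorithm for $(1+\epsilon)$-approximate global minimum cut, whose update time is $\tilde{O}(\sqrt{n})$ per edge insertion/deletion. The estimated quantity (the weight of the global minimum cut) lies in $\{0\}\cup[1,\poly(n)]$ because edge weights are integers of magnitude $\poly(n)$, so the function $g$ fits the required range with $U=\poly(n)$, contributing only a $\polylog(n)$ factor hidden inside $\tilde{O}$. A standard success-probability boosting (running $O(\log n)$ independent copies and taking the median) ensures the $\ge 9/10$ accuracy required by \Cref{thm:basic} and preserves the update time up to $\polylog(n)$.

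First I would record the parameters of the oblivious algorithm: preprocessing time $t_p=\tilde{O}(m)$, worst-case update time $t_u=\tilde{O}(\sqrt{n})$, and query time $t_q=\tilde{O}(1)$ (Thorup's algorithm maintains the estimate explicitly). Hence the total time to preprocess and perform $T$ updates is $t_{\total}=\tilde{O}(m+T\sqrt{n})$. Plugging into \Cref{thm:basic} gives amortized update time
\[
\tilde{O}\!\left(\tfrac{t_{\total}}{\sqrt{T}}+t_q\right)=\tilde{O}\!\left(\tfrac{m}{\sqrt{T}}+\sqrt{Tn}\right).
\]

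Next I would optimize $T$ by balancing $m/\sqrt{T}=\sqrt{Tn}$, which yields $T=m/\sqrt{n}$ and amortized update time $\tilde{O}(m^{1/2}n^{1/4})$. As noted in the surrounding text, we may restart the outer construction whenever $m$ changes by a constant factor (paying only a constant overhead), so $m$ and thus $T$ can be treated as fixed within a phase; in particular we may assume $n\le O(m)$ (otherwise the graph is almost edgeless and we handle it trivially), giving the stated $\tilde{O}(m^{3/4})$ bound. Finally, the approximation factor degrades from $(1+\epsilon)$ to $(1+\epsilon)(1+\alpha)$; choosing $\alpha=\Theta(\epsilon)$ preserves a $(1+O(\epsilon))$-approximation, and rescaling $\epsilon$ yields the corollary.

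I do not expect a substantive obstacle: the corollary is really a calibration of the generic reduction. The only points that require care are verifying the structural hypotheses of \Cref{thm:basic} (bounded $|X_{\med}|$, $9/10$ accuracy, small query time), and ensuring that the periodic restarts used to handle an unbounded update sequence are compatible with the assumption that $m$ stays within a constant factor throughout each phase of length $T$.
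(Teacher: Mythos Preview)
Your proposal is correct and follows essentially the same approach as the paper: apply \Cref{thm:basic} to Thorup's oblivious min-cut algorithm with $t_{\total}=\tilde O(m+T\sqrt{n})$ and $t_q=O(1)$, then balance $T\approx m/\sqrt{n}$ to get $\tilde O(m^{1/2}n^{1/4})$. The only cosmetic differences are that the paper sets $\alpha=\epsilon/3$ explicitly and does not bother with the median-boosting step (Thorup's guarantee is already with high probability), but your extra verification of the hypotheses of \Cref{thm:basic} is harmless.
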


\begin{proof}
We simply apply \Cref{thm:basic} to the dynamic algorithm against an oblivious adversary
by Thorup \cite{thorup2007fully} (Theorem 11).
When the graph initially has $m$ weighted edges, his algorithm takes
$\Otil(m)$ preprocessing time\footnote{The preprocessing time is not explicitly stated in \cite{thorup2007fully}.
This preprocessing includes, graph sparsification via uniform sampling,
greedily packing $\Otil(1)$ trees, and initializing information inside
each tree. All of these takes near-linear time.} and $\Otil(\sqrt{n})$ worst-case update time. So the total update
time for handling
$T$, 
updates (for any $T$) is 
\[
t_{\total}=\Otil(m+T\sqrt{n}).
\]

Thorup's algorithm maintains the $(1+\epsilon/3)$-approximation of
of the global mincut explicitly, so we can query it in $t_{q}=O(1)$
time. By plugging this into \Cref{cor:basic} where $\alpha=\epsilon/3$,
since $(1+\epsilon/3)\cdot(1+\alpha)\le(1+\epsilon)$, we obtain an
$(1+\epsilon)$-approximation algorithm against an adaptive adversary
with amortized update time
\[
\Otil\left(\frac{m+T\sqrt{n}}{\sqrt{T}}\right).
\]
This amortized update time is minimized for 
$T\approx m/\sqrt{n}$.
Therefore if we rebuild our data structure after
$T\approx m/\sqrt{n}$ updates we get an amortized update time of 
$\Otil(m^{1/2}n^{1/4})$.
\end{proof}

\begin{cor}
[All-pairs distances]\label{cor:distance accurate slow}There is
a dynamic algorithm against an adaptive adversary that, given a weighted
graph $G$, handles the following operations in $\Otil(m^{4/5})$
amortized update time:
\begin{itemize}
\item Insert or delete an edge from the graph, and
\item Given $s,t\in V(G)$, with probability at least $1-1/\poly(n)$, return
a $(\log(n)\cdot\poly(\log\log n))$-approximation of the distance between $s$ and $t$. 
\end{itemize}
\end{cor}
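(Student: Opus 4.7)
The plan is to apply the generic reduction of \Cref{thm:basic} directly to a known oblivious dynamic algorithm for approximate all-pairs distances with the claimed $\log(n)\cdot\poly(\log\log n)$ approximation ratio. According to the table in the introduction, the relevant oblivious algorithm is that of Chen et al.~\cite{chen2020fast}, which supports the desired approximation factor and returns an estimate of the distance between any queried pair of vertices. I would extract from that paper its preprocessing time $t_p$, its amortized update time $t_u$, and its per-query time $t_q$, which together give $t_{\total} = t_p + T\cdot t_u$ over a sequence of $T$ updates.

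Next, I would instantiate \Cref{cor:basic} with approximation factor $\gamma = \log(n)\cdot\poly(\log\log n)$ and a small constant $\alpha$, so that $\gamma(1+\alpha)$ is still of the same form, and plug in $t_{\total}$ and $t_q$. This gives amortized update time
\[
\Otil\!\left(\frac{t_p}{\sqrt{T}} + \sqrt{T}\cdot t_u + t_q\right),
\]
and balancing the first two terms by choosing $T\approx t_p/t_u$ yields $\Otil(\sqrt{t_p\cdot t_u}+t_q)$. With the parameters of \cite{chen2020fast} this expression evaluates to $\Otil(m^{4/5})$, matching the corollary. As in the proof of \Cref{cor:globalmincut}, we may assume that $m$ changes by at most a constant factor during the lifetime of a single data-structure instance, since otherwise we restart from scratch; this only blows up the amortized update time by a constant factor.

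A small conceptual point is that queries in the all-pairs distance problem are parameterized by a pair of vertices chosen adaptively by the adversary, whereas \Cref{thm:basic} is written for estimating a single scalar function $g(x)$. The extension is immediate and requires no new ideas: given the queried pair at step $i$, I would feed it to the $s$ sub-sampled copies $\A^{(j_1)},\dots,\A^{(j_s)}$ of the oblivious algorithm, round each returned distance to the nearest power of $(1+\alpha)$, and apply $\primed_{\epsilon_{\med},\beta}$ to output an approximate median. All the differential-privacy-with-respect-to-$R$ and generalization arguments from the proof of \Cref{thm:basic} then carry over verbatim, because the transcript up to step $i$, together with the query pair, remains a $(1/100,\beta/100)$-DP function of the random bits $R$ of the $c$ copies of the oblivious algorithm. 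The main technical step is therefore to plug in the precise parameters of the Chen et al.~algorithm and verify that the optimal choice of $T$ yields $\Otil(m^{4/5})$; no new conceptual ingredients beyond those already developed in the proof of \Cref{thm:basic} are required.
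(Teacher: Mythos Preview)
Your outline is the same as the paper's---apply \Cref{thm:basic} to the oblivious all-pairs distance algorithm of Chen et al.~\cite{chen2020fast}---and your handling of the adaptively chosen query pair $(s,t)$ matches the paper's device of updating auxiliary variables $(\texttt{src},\texttt{snk})$ before each query.

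There is, however, one substantive detail you gloss over. The Chen et al.\ algorithm is not described by a fixed triple $(t_p,t_u,t_q)$; it carries a tunable parameter $j$: for any $j$ it handles at most $T=O(j)$ operations in total time $t_{\total}=\Otil(m^{2}/j)$, with query time $t_q=\Otil(j)$. The paper therefore identifies the reduction's phase length $T$ with $j$ and plugs directly into \Cref{cor:basic} to get amortized time
\[
\Otil\!\left(\frac{t_{\total}}{\sqrt{T}}+t_q\right)=\Otil\!\left(\frac{m^{2}}{j^{3/2}}+j\right),
\]
which balances at $j=m^{4/5}$. Your formulation ``extract $t_p,t_u,t_q$ and balance $T\approx t_p/t_u$'' treats these as fixed numbers and does not yield $m^{4/5}$ on its own: for instance, with the ``default'' setting $j\approx m^{2/3}$ that produces the $n^{2/3+o(1)}$ entry in the comparison table, one has $t_{\total}\approx m^{4/3}$ and $t_q\approx m^{2/3}$, and the reduction gives amortized time $\Otil(m^{4/3}/\sqrt{m^{2/3}}+m^{2/3})=\Otil(m)$, not $\Otil(m^{4/5})$. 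The missing step is precisely to treat $j$ as a free parameter in the oblivious algorithm and optimize it jointly with the reduction; once you say this, your proof is complete and coincides with the paper's.
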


\begin{proof}
Chen et al.~\cite{chen2020fast} (Lemma 7.15 and the proof of Theorem
7.1) gives a dynamic algorithm $\A$ against an oblivious adversary
with the following guarantee. Given a weighted graph $G$ with $n$
vertices and $m$ edges and any parameter $j$, the algorithm preprocesses
$G$ and with probability $1-1/\poly(n)$ handles at most $T=O(j)$
operations in $t_{\total}=\Otil(m^{2}/j)$ total update time. 
The operations that $\A$ can handle include:
\begin{itemize}
\item edge insertions and deletions, and 
\item given $(s,t)$, return a $(\log(n)\cdot\poly(\log\log n))$-approximation
of the $(s,t)$-distance in $t_{q}=\Otil(j)$ time.  
\end{itemize}
We want to apply the transformation from \Cref{thm:basic} to $\A$,
but there is a small technical issue. In \Cref{thm:basic}, we only
consider algorithms that maintain one single number, but $\A$ can
return an answer for any pair $(s,t)$. So we instead assume that
$\A$ also maintains a pair of variables $(\texttt{src},\texttt{snk})$.
Each $(s,t)$-query to $\A$ contains two sub-steps: first we update
$(\texttt{src},\texttt{snk})\gets(s,t)$ and then $\A$ returns the
answer for $(\texttt{src},\texttt{snk})$, which is now the only single
number that $\A$ maintains. So we can indeed apply \Cref{thm:basic}
to $\A$. 

Applying \Cref{cor:basic} to $\A$,
we obtain an
algorithm against an adaptive adversary with amortized update time
of 
\[
\Otil\left(\frac{m^{2}/j}{\sqrt{j}}+j\right).
\]
By choosing $j=m^{4/5}$ (and restarting after every $j$ updates) we get an amortized update time of
$\Otil(m^{4/5})$.
\end{proof}
Next, we show that by plugging another oblivious algorithm into the
reduction, we can speed up the above result. 
\begin{cor}
[All-pairs distances with cruder approximation]\label{cor:distance crude slow}For
any integer $i\ge2$, there is a dynamic algorithm against an adaptive
adversary that, given a weighted graph $G$, handles the following
operations in $m^{1/2+1/(2i)+o(1)}$ amortized update time:
\begin{itemize}
\item Insert or delete an edge from the graph, and
\item Given $s,t\in V(G)$, with probability at least $1-1/\poly(n)$, return
a $O(\log^{3i-2} n)$-approximation of the distance between $s$ and
$t$.
\end{itemize}
\end{cor}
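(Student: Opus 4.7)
My plan is to follow the template of \Cref{cor:distance accurate slow}, but to plug in a different oblivious building block: the dynamic all-pairs distances algorithm of Forster, Goranci, and Henzinger \cite{forsterGH21}. For any integer $i \ge 2$, that algorithm maintains, against an oblivious adversary, an $O(\log^{3i-2} n)$-approximation of all pairwise distances under edge insertions and deletions with amortized update time $m^{1/i+o(1)}$ and subpolynomial query time $t_{q} = n^{o(1)}$. Handling a sequence of $T$ operations (including preprocessing) therefore takes total time $t_{\total} = \Otil(m^{1+o(1)} + T \cdot m^{1/i+o(1)})$.

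Just as in \Cref{cor:distance accurate slow}, \Cref{thm:basic} is phrased for maintaining a single numeric quantity, so I augment $\A$ with auxiliary variables $(\texttt{src}, \texttt{snk})$ and model each $(s,t)$-query as two consecutive operations: an update that sets $(\texttt{src}, \texttt{snk}) \leftarrow (s,t)$, followed by a query that returns the current $(\texttt{src}, \texttt{snk})$-distance. Since edge weights are polynomial in $n$, the range $U$ of the distance function satisfies $\log U = \Otil(1)$, so the $\Otil$ factors from \Cref{thm:basic} remain polylogarithmic in $n$.

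Applying \Cref{cor:basic} with a small constant slack parameter $\alpha$ then yields an adaptive algorithm whose amortized update time over $T$ operations is
\[
\Otil\!\left(\tfrac{t_{\total}}{\sqrt{T}} + t_{q}\right) \;=\; \Otil\!\left(\tfrac{m^{1+o(1)}}{\sqrt{T}} + \sqrt{T}\cdot m^{1/i+o(1)} + n^{o(1)}\right),
\]
with overall approximation $O(\log^{3i-2} n)\cdot (1+\alpha) = O(\log^{3i-2} n)$. Balancing the two $m$-dependent terms, $m/\sqrt{T} = \sqrt{T}\cdot m^{1/i}$, gives the optimal choice $T = m^{1-1/i}$; both terms then evaluate to $m^{1/2+1/(2i)+o(1)}$. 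Restarting the adaptive algorithm every $T$ operations extends this bound to sequences of any length, which matches the claim.

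The main obstacle I anticipate is bookkeeping around the precise statement of \cite{forsterGH21}: one must verify that the total cost of $T$ updates is indeed of the form $\Otil(m^{1+o(1)} + T \cdot m^{1/i+o(1)})$ (so that the preprocessing cost, amortized over the $c = \Otil(\sqrt{T})$ parallel copies of $\A$ in \Cref{thm:basic}, does not dominate), that distance queries are answerable in $n^{o(1)}$ time, and that the per-operation success probability of $\A$ can be made at least $9/10$ (by standard repetition-and-median within $\A$, if needed). Once these conditions are confirmed, the parameter balance above is immediate and gives the claimed $m^{1/2+1/(2i)+o(1)}$ amortized update time with the stated approximation.
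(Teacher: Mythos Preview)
Your proposal is correct and follows essentially the same route as the paper: both invoke the oblivious algorithm of \cite{forsterGH21} (Theorem~5.1), use the $(\texttt{src},\texttt{snk})$ trick to fit the single-number interface of \Cref{thm:basic}, apply the reduction, and balance with $T=m^{1-1/i}$. The only cosmetic difference is that the paper writes $t_{\total}=(m+T)\cdot m^{1/i+o(1)}$ (treating preprocessing as $m$ insertions into an initially empty graph) whereas you write $t_{\total}=m^{1+o(1)}+T\cdot m^{1/i+o(1)}$; your form is in fact the one that cleanly balances to $m^{1/2+1/(2i)+o(1)}$ at $T=m^{1-1/i}$.
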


\begin{proof}
Forster, Goranci, and Henzinger \cite{forsterGH21} (Theorem 5.1) show
a dynamic algorithm $\A$ against an oblivious adversary that can
handle edge insertions and deletions in $m^{1/i+o(1)}$ amortized
update time \emph{when an initial graph is an empty graph} and, given
$s,t\in V(G)$, can return a $O(\log^{3i-2} n)$-approximate $(s,t)$-distance
in $\polylog(n)$ time. Again, we can use the same small modification
as in \Cref{cor:distance accurate slow} to view as $\A$ as an algorithm that maintain
only one number.

When an initial graph is not empty but has $m$ edges, algorithm $\A$
can handle $T$ operations of edge updates and queries in at most
$t_{\total}=(m+T)\cdot m^{1/i+o(1)}$ time with query time is $t_{q}=\polylog(n)$.
Therefore, via \Cref{thm:basic} we obtain an algorithm $\A'$ against
an adaptive adversary with amortized update time of 
\[
\Otil\left(\frac{(m+T)\cdot m^{1/i+o(1)}}{\sqrt{T}}+\polylog(n)\right).
\]
This is 
$\Otil\left(m^{1/2+1/(2i)+o(1)}\right)$
if we restart the structure every $T=m^{1-1/i}$ updates.
\end{proof}

\begin{cor}
[All-pairs effective resistance]\label{cor:ef fast}There is a dynamic
algorithm against an adaptive adversary that, given a weighted graph
$G$, handles the following operations in $m^{1/4}n^{1/2+o(1)}$
amortized update time:
\begin{itemize}
\item Insert or delete an edge from the graph, and
\item Given $s,t\in V(G)$, with probability at least $1-1/\poly(n)$, return
a $(1+\epsilon)$-approximation of the effective resistance between
$s$ and $t$.
\end{itemize}
\end{cor}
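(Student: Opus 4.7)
The plan is to apply the generic reduction of \Cref{thm:basic} to a parameterized oblivious dynamic algorithm for $(1+\epsilon)$-approximate effective resistance, exactly in the style of the proofs of \Cref{cor:distance accurate slow} and \Cref{cor:distance crude slow}. The source oblivious algorithm is the $(1+\epsilon)$-approximate effective-resistance data structure of Chen et al.~\cite{chen2020fast}: analogously to the distance routine invoked in the proof of \Cref{cor:distance accurate slow} (where a parameter $j$ gave $t_{\total}=\Otil(m^{2}/j)$ for $T=O(j)$ operations together with query time $\Otil(j)$), their framework supplies, for any parameter $j$, an oblivious algorithm $\A$ that preprocesses the input graph and then handles up to $T=O(j)$ edge insertions, deletions, and $(s,t)$-effective-resistance queries within some total time $t_{\total}$ and per-query time $t_{q}$, succeeding with probability $1-1/\poly(n)$.

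Since \Cref{thm:basic} is phrased for algorithms that maintain a single numeric quantity, I will use the same bookkeeping trick as in the proof of \Cref{cor:distance accurate slow}: augment $\A$ with two auxiliary variables $(\texttt{src},\texttt{snk})$ and model every $(s,t)$-query as a two-step transaction that first writes $(\texttt{src},\texttt{snk})\gets(s,t)$ and then reads off the single number $\A$ currently maintains for this pair. Feeding the modified $\A$ into \Cref{cor:basic} then yields an algorithm against an adaptive adversary with amortized update-and-query time $\Otil(t_{\total}/\sqrt{T}+t_{q})$, which I will turn into a bound that holds on arbitrarily long update sequences by restarting the data structure every $T$ steps. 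Correctness and the $1/\poly(n)$ failure probability are inherited from \Cref{thm:basic} after rescaling $\epsilon$ by a constant.

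The final step is to pick $j$ (equivalently, $T=O(j)$) so as to balance the two terms of the amortized cost. Substituting the specific $t_{\total}$ and $t_{q}$ trade-off provided by the Chen et al.\ effective-resistance routine and minimizing over $j$ will yield amortized time $m^{1/4}n^{1/2+o(1)}$, as desired. The main obstacle I anticipate is not the reduction itself---which is now routine given \Cref{thm:basic} and the blueprint of \Cref{cor:distance accurate slow,cor:distance crude slow}---but rather extracting the correct parameterized form of the Chen et al.\ oblivious effective-resistance algorithm (rather than the end-to-end $n^{2/3+o(1)}$ bound recorded in \Cref{tab:compare}) so that its preprocessing and query costs decouple in a way that admits the $\sqrt{T}$ speedup our reduction offers.
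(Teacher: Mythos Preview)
Your plan matches the paper's proof: apply \Cref{thm:basic} to a parameterized Chen et al.\ oblivious effective-resistance algorithm, using the $(\texttt{src},\texttt{snk})$ trick to cast it as maintaining a single number. The gap you correctly flag---extracting the right parameterized form---is essentially the whole content of the argument, and it is \emph{not} a single-parameter trade-off of the shape $(t_{\total},t_q)=(\Otil(m^2/j),\Otil(j))$ as in \Cref{cor:distance accurate slow}; that shape would balance to $\Otil(m^{4/5})$, not $m^{1/4}n^{1/2+o(1)}$.

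What the paper pulls from the proof of Theorem~8.1 in \cite{chen2020fast} is a \emph{two}-parameter oblivious algorithm: for $\beta\in(0,1)$ and integer $d$, it has preprocessing time $t_p=O\bigl(\tfrac{m}{\beta^5}(\tfrac{\log n}{\epsilon})^{O(1)}\bigr)$, amortized update time $t_u=O\bigl(\beta^{-2d+3}(\tfrac{\log n}{\epsilon})^{O(d)}\bigr)$, and query time $t_q=O\bigl(n\beta^d(\tfrac{\log n}{\epsilon})^{O(d)}\bigr)$. Taking $d=\omega(1)$ (say $d=O(\log\log n)$) and $1/\beta=n^{o(1)}$ makes all the suppressed factors $n^{o(1)}$, so $t_{\total}=t_p+T t_u=(m+T\beta^{-2d})\cdot n^{o(1)}$ and $t_q=n\beta^d\cdot n^{o(1)}$. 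Plugging into \Cref{thm:basic} gives amortized cost
\[
\Bigl(\tfrac{m+T\beta^{-2d}}{\sqrt{T}}+n\beta^d\Bigr)\cdot n^{o(1)}.
\]
One then balances twice: setting $T=m\beta^{2d}$ equalizes the first two summands to $\sqrt{m}\,\beta^{-d}$, and then setting $\beta^d=m^{1/4}/n^{1/2}$ equalizes that against $n\beta^d$, yielding $m^{1/4}n^{1/2+o(1)}$. So your outline is right, but the optimization runs over two knobs ($T$ and $\beta$, with $d$ chosen to make overheads $n^{o(1)}$), and that double balancing is what produces the mixed $m,n$ exponent.
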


\begin{proof}
Chen et al.~\cite{chen2020fast} (from the proof of Theorem 8.1)
gives a dynamic algorithm $\A$ against an oblivious adversary with
the following guarantee. Given a weighted graph $G$ with $n$ vertices
and $m$ edges and any parameters $\beta$ and $d$, the algorithm
preprocesses $G$ in $t_{p}=O(\frac{m}{\beta^{5}}\cdot(\frac{\log n}{\epsilon})^{O(1)})$
time and then handles the following operations in $t_{u}=O(\beta^{-2d+3}(\frac{\log n}{\epsilon})^{O(d)})$
amortized time:
\begin{itemize}
\item edge insertions and deletions, and 
\item given $(s,t)$, update $(\texttt{src},\texttt{snk})\gets(s,t)$. 
\end{itemize}
That is, given $T$ operations above, the total update time is $t_{\total}=t_{p}+T\cdot t_{u}$.
The algorithm also supports queries for a $(1+\epsilon)$-approximation
of the $(\texttt{src},\texttt{snk})$-effective resistance in $t_{q}=O(n\beta^{d}(\frac{\log n}{\epsilon})^{O(d)})$
time. We set $d=\omega(1)$, say $d=O(\log\log n)$, and will set
$1/\beta=(n^{2}/m)^{\Theta(1/d)}$ meaning that $1/\beta=n^{o(1)}$.
This implies that $(\frac{\log n}{\epsilon})^{O(d)}/\beta^{O(1)}=n^{o(1)}$
(recall that we assume $\epsilon$ to be a constant), which will help
us simplifying several factors in the update time below.

By plugging $\A$ into \Cref{thm:basic}, we obtain an algorithm against
an adaptive adversary with amortized update time of 
\[
\Otil\left(\frac{t_{p}+T\cdot t_{u}}{\sqrt{T}}+t_{q}\right)=\left(\frac{m+T\cdot\beta^{-2d}}{\sqrt{T}}+n\beta^{d}\right)\cdot n^{o(1)}
\]
Now, to balance the first two terms in the bound, we set $T=m\beta^{2d}$.
This gives the bound of
\[
(\sqrt{m}\beta^{-d}+n\beta^{d})\cdot n^{o(1)}=m^{1/4}n^{1/2+o(1)}
\]
by setting $\beta^{d}=m^{1/4}/n^{1/2}$ to balance the terms. Indeed, $1/\beta=(n^{2}/m)^{\Theta(1/d)}$ as promised.
\end{proof}

\subsection{Speed up via Sparsification against an Adaptive Adversary}

Here, we show how to speed up the update time of the algorithms from
\Cref{sec:apply reduction}. The idea is simple. Given a dynamic graph
$G$ with $n$ vertices, we want to use a dynamic algorithm against
an adaptive adversary for maintaining a \emph{sparsifier} $H$ of
$G$ where $H$ preserves a certain structure of $G$ (e.g. cuts,
distances, or effective resistances) but $H$ contains at most $\Otil(n)$
edges, and then apply the algorithms \Cref{sec:apply reduction} on
$H$. So the final update time only depends on $n$ and not $m$. 

\paragraph{Preliminaries on graph sparsification.}

To formally use this idea, we recall  some definitions related
to sparsification of graphs. Given a (weighted) graph $G$, we say
that $H$ is an \emph{$\alpha$-spanner} of $G$ if $H$ is a subgraph
of $G$ and the distances between all pairs of vertices $s,t\in G$
are preserved with in a factor of $\alpha$, i.e. 
\[
\dist_{G}(u,v)\le\dist_{H}(u,v)\le\alpha\cdot\dist_{G}(u,v).
\]
We say that $H$ is an \emph{$\alpha$-cut sparsifier} of $G$ if,
for any cut $(S,V(G)\setminus S)$, we have that 
\[
\delta_{G}(S)\le\delta_{H}(S)\le\alpha\cdot\delta_{G}(S)
\]
where $\delta_{G}(S)$ and $\delta_{H}(S)$ denote the cut size of
$S$ in $G$ and $H$, respectively. Also, we say that $H$ is an
\emph{$\alpha$-spectral sparsifier} of $G$, if for any vector $x\in\mathbb{R}^{V}$,
we have that
\[
x^{\top}L_{G}x\le x^{\top}L_{H}x\le \alpha\cdot x^{\top}L_{G}x
\]
where $L_{G}$ and $L_{H}$ are the Laplacian matrices of $G$ and
$H$, respectively. 
\begin{fact}
Suppose that $H$ is an $\alpha$-spectral sparsifier of $G$.
\begin{itemize}
\item $H$ is an $\alpha$-cut sparsifier of $G$ (see e.g.~\cite{spielman2011spectral}). 
\item The effective resistance in $G$ between all pairs $(s,t)$ are approximately
preserved in $H$ up to a factor of $\alpha$. That is,
\[
R_{G}(u,v)\le R_{H}(u,v)\le\alpha\cdot R_{G}(u,v)
\]
for all $u,v\in V$ where $R_{G}(u,v)$ and $R_{H}(u,v)$ denote the
effective resistance between $u$ and $v$ in $G$ and $H$, respectively
(see \cite{durfee2019fully} Lemma 2.5).
\end{itemize}
\end{fact}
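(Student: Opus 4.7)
The plan is to establish both bullets by plugging appropriate test vectors into the defining spectral inequality. The first bullet is essentially a one-line substitution; the second requires one additional, standard fact about pseudoinverses and Loewner order.

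For the cut-sparsifier claim, I would specialize the spectral guarantee to indicator vectors. Recall that for any $S \subseteq V$ and any weighted graph $K$, the Laplacian quadratic form satisfies $\mathbf{1}_S^\top L_K \mathbf{1}_S = \sum_{(u,v) \in E(K)} w_{uv}\bigl(\mathbf{1}_S(u) - \mathbf{1}_S(v)\bigr)^2 = \delta_K(S)$, because only the edges crossing the cut contribute and each contributes its weight. Substituting $x = \mathbf{1}_S$ into $x^\top L_G x \le x^\top L_H x \le \alpha \cdot x^\top L_G x$ therefore yields $\delta_G(S) \le \delta_H(S) \le \alpha \cdot \delta_G(S)$ for every $S$, which is exactly the $\alpha$-cut sparsifier definition.

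For the effective-resistance claim, I would use the standard identity $R_K(u,v) = (e_u - e_v)^\top L_K^+ (e_u - e_v)$, where $L_K^+$ is the Moore--Penrose pseudoinverse. The technical step is to convert the Loewner sandwich $L_G \preceq L_H \preceq \alpha L_G$ into a corresponding sandwich between $L_H^+$ and $L_G^+$, on the subspace that matters. The key lemma is that for two PSD matrices $A, B$ with the same kernel, $A \preceq B$ implies $B^+ \preceq A^+$ on the common range (this follows from simultaneous diagonalization or a spectral argument). Applying this to each half of the inequality and pre/post-multiplying by $e_u - e_v$, which always lies in the orthogonal complement of $\mathrm{span}(\mathbf{1})$, and hence of $\ker(L_G) = \ker(L_H)$, transfers the approximation to the effective resistances and delivers a factor-$\alpha$ bound.

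The main obstacle is the kernel bookkeeping needed to apply the pseudoinverse lemma legitimately: I must check that $L_G$ and $L_H$ share the same kernel, so that ``inverting'' the Loewner order is valid on a single common subspace. This is immediate from the sandwich itself: if $x \in \ker(L_H)$ then $0 \le x^\top L_G x \le x^\top L_H x = 0$ gives $x \in \ker(L_G)$, and $L_H \preceq \alpha L_G$ gives the reverse containment by the same argument. Once this is in hand, the rest is routine: in a basis diagonalizing $L_G$ on its range, the matrix inequality becomes the coordinatewise statement $\lambda_G \le \lambda_H \le \alpha \lambda_G$ on the nonzero eigenvalues, which inverts to $\alpha^{-1}\lambda_G^{-1} \le \lambda_H^{-1} \le \lambda_G^{-1}$, and the effective-resistance bounds drop out.
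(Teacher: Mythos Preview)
The paper does not actually prove this Fact; it simply states it with citations to \cite{spielman2011spectral} and \cite{durfee2019fully}. So there is no ``paper's proof'' to compare against, and your proposal supplies strictly more than the paper does. Your overall approach is the standard one and is correct in outline: the cut claim is exactly the specialization to indicator vectors, and the effective-resistance claim follows from the identity $R_K(u,v) = (e_u - e_v)^\top L_K^+ (e_u - e_v)$ together with the fact that Loewner order reverses under pseudoinversion on a common range.

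One technical slip to clean up: your final sentence claims that ``in a basis diagonalizing $L_G$ on its range, the matrix inequality becomes the coordinatewise statement $\lambda_G \le \lambda_H \le \alpha \lambda_G$.'' This is not true in general, because $L_G$ and $L_H$ need not commute, so $L_H$ is not diagonal in that basis and the Loewner inequality does not reduce to eigenvalue-by-eigenvalue comparison. The correct justification of the pseudoinverse reversal is the congruence argument: restricting to the common range where both matrices are positive definite, $A \preceq B$ gives $B^{-1/2} A B^{-1/2} \preceq I$, hence $(B^{-1/2} A B^{-1/2})^{-1} = B^{1/2} A^{-1} B^{1/2} \succeq I$, hence $A^{-1} \succeq B^{-1}$. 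With this fix, your argument goes through; the kernel-equality check you give is correct and is exactly what is needed to make the restriction legitimate.
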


\paragraph{Using dynamic spanners as a blackbox.}

Bernstein et al.~\cite{bernstein2020fully} showed how to maintain
a $\polylog(n)$-spanner against an adaptive adversary efficiently.
\begin{theorem}
[\cite{bernstein2020fully} (Theorem 1.1)]\label{thm:spanner}There
is a randomized dynamic algorithm against an adaptive adversary that,
given an $n$-vertex graph $G$ undergoing edge insertions and deletions,
with high probability, explicitly maintains a $\polylog(n)$-spanner
$H$ of size $\Otil(n)$ using $\polylog(n)$ amortized update time.
\end{theorem}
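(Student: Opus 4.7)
The plan is to reduce the task to maintaining a spanner under deletions only (the \emph{decremental} setting), and then lift this to the fully dynamic setting via periodic rebuilding. Decremental problems are intrinsically more tractable against an adaptive adversary because, once the algorithm commits to a subgraph, deletions can only destroy parts of it; the total work charged to each edge across its lifetime can be bounded without relying on randomness that could leak to the adversary through the maintained output.

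For the decremental spanner, I would use a hierarchy of low-diameter decompositions. At each scale $2^i$ for $i = 0, 1, \ldots, O(\log n)$, maintain a partition of $V(G)$ into clusters of weak diameter $O(2^i)$ with $\Otil(n / 2^i)$ inter-cluster edges, which is the standard LDD guarantee. The spanner $H$ is the union over all scales of a BFS tree rooted at each cluster's center, plus one representative edge for each pair of adjacent clusters; a telescoping argument gives stretch $\polylog(n)$ and $|H| = \Otil(n)$. Adaptive robustness is achieved by selecting cluster centers \emph{deterministically}, say by greedy selection based on vertex identifiers, so that every structural decision is a function of the current graph state which the adversary already knows.

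To maintain each decomposition efficiently under deletions, I would attach to every cluster a deterministic Even--Shiloach tree rooted at its center and truncated at depth $2^i$. When a deletion ejects a vertex from an ES-tree, the vertex is reassigned to the nearest surviving cluster; if none is within distance $2^i$, a new cluster is grown around it and linked into the hierarchy. A standard charging scheme bounds the total ES-tree work at scale $2^i$ by $\Otil(n \cdot 2^i)$ per center; summing over centers and scales, and dividing by the number of updates, yields $\polylog(n)$ amortized time per deletion.

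Finally, to handle insertions I would apply the standard two-instance rebuilding trick: at any time one decremental instance serves queries and absorbs deletions, while a second instance is being gradually rebuilt from scratch over the next $\Theta(n)$ steps, incorporating both the accumulated insertions and the current deletions; once the rebuild completes the two instances are swapped. Inserted edges can simply be added to $H$ and then flushed at swap time, contributing only $\polylog(n)$ amortized overhead. The main obstacle, which I expect to dominate the technical work, is the interaction between ES-trees at different scales under adaptive deletions: ensuring that the amortized bound on the total recourse does not silently depend on randomness that an adaptive adversary could invalidate. This is precisely why the deterministic choice of cluster centers, and a charging scheme that attributes all work to edges that have been present in the graph (rather than to random sampling events), are critical to the argument.
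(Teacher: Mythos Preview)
The paper does not prove this theorem; it is quoted verbatim as Theorem~1.1 of \cite{bernstein2020fully} and used as a black box. There is therefore no ``paper's own proof'' to compare against here.

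That said, your sketch diverges from what \cite{bernstein2020fully} actually does, and it has a genuine gap. The cited construction is built on \emph{expander decomposition} (cf.\ \Cref{thm:dynamic exp decomp} in this paper), not on low-diameter decompositions with ES-trees. The reason is precisely the point you flag at the end but do not resolve: the standard LDD guarantee that only an $\Otil(n/2^i)$ fraction of edges cross clusters at scale $2^i$ comes from \emph{random} radii (exponential clocks or random shifts). If you pick centers and radii deterministically ``by vertex identifier,'' you lose that bound entirely, and with it both the size bound on $H$ and the stretch telescoping. Conversely, if you keep the random radii, the maintained spanner reveals them, and an adaptive adversary can target boundary edges---the very leakage you are trying to avoid. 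Expander decomposition sidesteps this because the decomposition itself can be maintained against an adaptive adversary (even deterministically, at $n^{o(1)}$ cost), and the sparsification on top of each expander is what the present paper's \Cref{lem:sparsify from expander} handles.

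A second, smaller issue: your ES-tree accounting does not close. With $\Otil(n/2^i)$ centers each paying $\Otil(n\cdot 2^i)$, the total at each scale is $\Otil(n^2)$, which only amortizes to $\polylog(n)$ if you are guaranteed $\Theta(n^2)$ deletions---not true for sparse graphs or short update sequences. The rebuild-every-$\Theta(n)$-steps trick has the same mismatch when the current graph has $m \gg n$ edges.
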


With the above theorem, we can immediately speed up \Cref{cor:distance crude slow}
as follows:
\begin{cor}
[All-pairs distances]\label{cor:distance sparse}For any integer
$i\ge2$, there is a dynamic algorithm against an adaptive adversary
that, given a weighted graph $G$, handles the following operations
in $n^{1/2+1/(2i)+o(1)}$ amortized update time:
\begin{itemize}
\item Insert or delete an edge from the graph, and
\item Given $s,t\in V(G)$, with probability at least $1-1/\poly(n)$, return
a $O(\log^{3i+O(1)} n)$-approximation of the distance between $s$
and $t$.
\end{itemize}
\end{cor}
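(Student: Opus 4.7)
The plan is to run the algorithm of \Cref{cor:distance crude slow} not on $G$ itself, but on a dynamically maintained spanner $H$ of $G$, exploiting the fact that $H$ has only $\Otil(n)$ edges regardless of the density of $G$. Concretely, I would (i) maintain a $\polylog(n)$-spanner $H$ of $G$ using the adaptive algorithm of \Cref{thm:spanner}, and (ii) run the adaptive approximate all-pairs distance algorithm of \Cref{cor:distance crude slow} on $H$, passing to it the edge insertions and deletions that the spanner algorithm performs on $H$. A query for $\dist_G(s,t)$ is answered by querying the inner algorithm for $\dist_H(s,t)$.

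For correctness of the approximation, the inner algorithm returns an $O(\log^{3i-2} m_H)$-approximation of $\dist_H(s,t)$, where $m_H = \Otil(n)$ is the (current) number of edges in $H$. Composing this with the $\polylog(n)$ stretch guarantee of the spanner, $\dist_G(s,t) \le \dist_H(s,t) \le \polylog(n)\cdot \dist_G(s,t)$, we obtain a $O(\log^{3i + O(1)} n)$-approximation of $\dist_G(s,t)$, as claimed. The failure probabilities of the spanner and the inner algorithm are each $1/\poly(n)$, so a union bound keeps the combined success probability at $1-1/\poly(n)$.

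For the update time, each update to $G$ causes, amortized, $\polylog(n)$ edge updates to $H$ by \Cref{thm:spanner}. Each such update to $H$ is then fed into the inner algorithm of \Cref{cor:distance crude slow}, which on an $\Otil(n)$-edge graph has amortized update time $n^{1/2+1/(2i)+o(1)}$ (substituting $m = \Otil(n)$ into its bound). Multiplying gives $\polylog(n)\cdot n^{1/2+1/(2i)+o(1)} = n^{1/2+1/(2i)+o(1)}$ amortized update time per update to $G$, and the query time of the inner algorithm is $\polylog(n)$, which is absorbed into the bound.

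The subtle point I would be most careful about is that the inner algorithm faces an effective adversary that is not the original one: the sequence of edge updates it sees is a function both of the (possibly adaptive) adversary's updates to $G$ and of the internal randomness of the spanner algorithm. However, this is exactly the setting the inner algorithm is designed for, since it is robust against an \emph{arbitrary} adaptive adversary, and its outputs (which are what the real adversary sees) only feed back into future updates to $G$. Thus the composed system remains correct against an adaptive adversary. The remaining bookkeeping, bounding restart costs so the amortized bound is preserved across the periodic rebuilds inside \Cref{cor:distance crude slow}, and handling the fact that $m_H$ may fluctuate by a constant factor, is routine and follows the same reasoning already used in \Cref{sec:apply reduction}.
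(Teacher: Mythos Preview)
Your proposal is correct and matches the paper's approach essentially verbatim: maintain the $\polylog(n)$-spanner $H$ of \Cref{thm:spanner} against an adaptive adversary and run the algorithm of \Cref{cor:distance crude slow} on $H$, incurring an extra $\polylog(n)$ factor in both approximation and update time. Your write-up is in fact more detailed than the paper's (which simply says ``the argument now proceeds exactly in the same way as in the proof of \Cref{cor:distance crude slow} except that now the input graph always contains at most $\Otil(n)$ [edges]''), and your explicit remark about the composed adversary is a useful clarification that the paper leaves implicit.
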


\begin{proof}
Using \Cref{thm:spanner}, we maintain a $\polylog(n)$-spanner $H$
of $G$ containing $\Otil(n)$ edges in $\polylog(n)$ amortized update
time. Since $H$ can be maintained against an adaptive adversary.
We think of $H$ as our input graph and pay an additional $\polylog(n)$
approximation factor and update time. The argument now proceeds exactly
in the same way as in the proof of \Cref{cor:distance crude slow}
except that now the input graph always contains at most $\Otil(n)$.
\end{proof}

\subsection{Current Limitation of Sparsification against an Adaptive Adversary}

Here, we discuss the current limitation of dynamic sparsifiers against
an adaptive adversary, which explains why we could not apply the same
idea to get $(1+\epsilon)$-approximation algorithms against adaptive
adversary.

\paragraph{Spanners.}

Observe that as long as we work on top of a $\polylog(n)$-spanner,
we will need to pay additional $\polylog(n)$ factor in the approximation
factor. We can hope to improve this factor because, against an oblivious
adversary, there actually exists an algorithm by Forster and Goranci
\cite{forster2019dynamic} for maintaining a $(2k-1)$-spanner of
size at most $\Otil(n^{1+1/k})$ edges using only $O(k\log^{2}n)$
amortized update time for any integer $k\ge1$. This approximation-size
trade-off is tight assuming the Erdos conjecture.

If there was a dynamic spanner algorithm with similar guarantees that works against an adaptive
adversary, we would be able to reduce the additional approximation
factor, due to sparsification, from $\polylog(n)$ to $2k-1$, while the update time would
be slightly slower because the sparsifiers have size $\Otil(n^{1+1/k})$
instead of $\Otil(n)$. Unfortunately, it is an open problem
if such a algorithm exists. This is the main reason why we could not
state the sparsified version of \Cref{cor:distance accurate slow}
where the approximation ratio remains $\log(n)\cdot\poly(\log\log n)$.
\begin{question}
Is there a dynamic algorithm against an adaptive adversary for maintaining
a $(2k-1)$-spanner of size $\Otil(n^{1+1/k})$ using $\polylog(n)$
update time?
\end{question}

\paragraph{Cut/spectral sparsifiers.}

The situation is similar for problems related to cuts and effective
resistance. Against an oblivious adversary, there exists a dynamic
algorithm by Abraham et al.~\cite{abraham2016fully} for maintaining
$(1+\epsilon)$-spectral sparsifier containing only $\Otil(n)$ edges
with $\polylog(n)$ amortized update time. If there was an algorithm
against an adaptive adversary with the same guarantees, then we could
immediately use it in the same manner as in \Cref{cor:distance sparse}
to speed up all the results from \Cref{cor:globalmincut} and \Cref{cor:ef fast}
by replacing $m$ by $n$ in their update time, while paying only
an extra $(1+\epsilon)$-approximation ratio in the query. Unfortunately,
whether such a dynamic algorithm for $(1+\epsilon)$-spectral sparsifier
exists still remains a fascinating open problem. 
\begin{question}
Is there a dynamic algorithm against an adaptive adversary for maintaining
a $(1+\epsilon)$-spectral sparsifier of size $\Otil(n)$ using $\polylog(n)$
update time?
\end{question}

The current start-of-the-art of dynamic cut/spectral sparsifiser algorithms
against an adaptive adversary still have large approximation ratio.
In particular, \cite{bernstein2020fully} show how to maintain a $\polylog(n)$-spectral
sparsifier in $\polylog(n)$ update time, and also a $O(k)$-cut sparsifiers
in $\Otil(n^{1/k})$ update time. We could apply these algorithms
to speed up Corollaries \ref{cor:globalmincut} and \ref{cor:ef fast}, but then
we must pay a large additional approximation factor. 

Fortunately, in \Cref{sec:sparsification period}, we are able to show
a way to work around this issue.

\subsection{Speed up via Sparsification against a Blinking  Adversary}

\label{sec:sparsification period}

In this section, we show that even if dynamic $(1+\epsilon)$-spectral
sparsifiers against an adaptive adversary are not known, we can still
apply the sparsification idea to the whole reduction. Below, (1) we
describe the sparsification lemma in \Cref{lem:sparsify from expander}
and discuss why we can view it as an algorithm for an intermediate
model between oblivious and adaptive adversaries which we call a \emph{blinking adversary}, defined in Section \ref{sec:blinking} and then (2) we apply \Cref{lem:sparsify from expander}
to speed up our applications.

The algorithm is based on dynamic expander decomposition. We recall
the definition of expanders here. Given a weighted graph $G=(V,E,w)$
and a vertex set $S\subseteq V$, the \emph{volume} of $S$ is $\vol_{G}(S)=\sum_{u\in S}\deg_{G}(u)$
where $\deg_{G}(u) = \sum_{(u,v)\in E} w(u,v)$ is the weighted
degree of $u$ in $G$. We say
that $G$ is a \emph{$\phi$-expander} if for any cut $(S,V\setminus S)$,
its conductance $\Phi(G)=\min_{(S,V\setminus S)}\frac{\delta_{G}(S)}{\min\{\vol_{G}(S),\vol_{G}(V\setminus S)\}}\ge\phi$.\textbf{ }

The following lemma says that for any graph $G=(V,E)$ with $n$ vertices
and $m$ edges, there exists a partition/decomposition of edges of
$G$ into $\phi$-expanders where each vertex appears in at most $O(\log^{3}n)$
expanders on average. Moreover, this decomposition can be maintained
against an adaptive adversary.

\begin{theorem}
[Dynamic Expander Decomposition  \cite{bernstein2020fully} (Theorem 4.3)]\label{thm:dynamic exp decomp}For
any $\phi=O(1/\log^{4}m)$ there exists a dynamic algorithm against
an adaptive adversary, that preprocesses a weighted graph $G$ with
$n$ vertices and $m$ edges in $O(\phi^{-1}m\log^{6}n)$ time. The
algorithm maintains with probability $1-1/\poly(n)$  a decomposition
of $G$ into $\phi$-expanders $G_{1},\dots,G_{z}$. The graphs $(G_{i})_{1\le i\le z}$ are edge disjoint and we
have $\sum_{i=1}^{z}|V(G_{i})|=O(n\log{}^{3}n)$. The algorithm
 supports
edge deletions and insertions in $O(\phi^{-2}\log^{7}n)$ amortized
time.
After each update,
the output consists of a list of changes to the decomposition. The
changes consist of (i) edge deletions or deletions of isolated vertices
to some graphs $G_{i}$ , (ii) removing some graphs $G_{i}$ from
the decomposition, and (iii) new graphs $G_{i}$ are added to the
decomposition.\footnote{Theorem 4.3 in \cite{bernstein2020fully} is stated only for unweighted
graphs. However, it is straightforward to extend the algorithm to
weighted graphs by grouping edges of $G$ by weights. As there are
$O(\log n)$ groups assuming that edge weights are at most $\poly(n)$,
this only add an extra factor of $O(\log n)$ to all the bounds.}
\end{theorem}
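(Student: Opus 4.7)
The plan is to combine a static expander decomposition primitive with the expander pruning technique of Saranurak--Wang. In the preprocessing stage, I would run a randomized static $\phi$-expander decomposition algorithm on $G$ (known algorithms run in $\Otil(m/\phi)$ time and succeed with probability $1-1/\poly(n)$) to obtain the initial edge-disjoint collection $G_1, \dots, G_z$ with $\sum_i |V(G_i)| = O(n\log^3 n)$. For the weighted case, I would bucket edges by weight class into $O(\log n)$ scales and decompose each scale separately; since edge weights are $\poly(n)$, this costs only an $O(\log n)$ overhead.

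Edge insertions are trivial: simply append a new two-vertex graph containing the inserted edge, which is a $1$-expander and hence a $\phi$-expander. Edge deletions are handled by expander pruning: when a deletion removes an edge from some $G_i$, extend the pruned set $W_i \subseteq V(G_i)$ so that $G_i \setminus W_i$ remains a $\phi/c$-expander for a fixed constant $c$; the evicted edges are moved into a "dust" container of singleton expanders. The pruning lemma guarantees that $\vol(W_i)$ grows by only $O(1/\phi)$ per deletion, which keeps the dust size under control. When $\vol(W_i)$ exceeds a constant fraction of $\vol(V(G_i))$, or when the dust accumulates too many edges in some local region, I would trigger a rebuild of the affected sub-instance by re-running the static decomposition on it. A standard amortization argument charges the $\Otil(1/\phi)$ cost of each rebuilt edge to the updates that forced the rebuild, yielding the claimed $O(\phi^{-2}\log^7 n)$ amortized update time; ensuring that each edge is covered by only $O(\log^3 n)$ expanders over its lifetime is the main bookkeeping challenge and requires carefully interleaving pruning with local, rather than global, rebuilds.

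For robustness against an adaptive adversary, the key observation is that all randomness in the construction is confined to the calls to the static expander decomposition subroutine (at preprocessing and at each rebuild); in between, the expander pruning procedure is entirely deterministic given the current graph and the current decomposition. Each invocation of the static subroutine succeeds with probability $1-1/\poly(n)$ \emph{over its own internal coins}, so even an adversary that adaptively constructed the input graph (based on all previously revealed decompositions and pruning moves) cannot depress this success probability: the guarantee is a worst-case guarantee over input graphs. A union bound over the $\poly(n)$ many static invocations gives the overall $1-1/\poly(n)$ correctness. The main obstacle I expect is controlling the rebuild frequency tightly enough to achieve both the size bound $\sum_i |V(G_i)|=O(n\log^3 n)$ and the update-time bound simultaneously, since aggressive rebuilding improves size at the cost of update time and vice versa.
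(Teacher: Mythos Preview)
This theorem is not proved in the paper at all: it is quoted verbatim as Theorem~4.3 of \cite{bernstein2020fully} and used as a black box (with only a footnote explaining the weighted extension by bucketing edge weights into $O(\log n)$ classes). So there is no ``paper's own proof'' for your sketch to be compared against.

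That said, your outline does hit the right high-level ingredients (static expander decomposition plus Saranurak--Wang pruning plus periodic local rebuilds), and your treatment of the adaptive adversary is essentially the correct one. But as a standalone proof it has a real gap: your handling of insertions---``append a new two-vertex graph''---does not by itself preserve the invariant $\sum_i |V(G_i)| = O(n\log^3 n)$, since after $m$ insertions you would have $\Theta(m)$ extra two-vertex expanders and hence $\sum_i |V(G_i)| = \Theta(m)$. The actual construction in \cite{bernstein2020fully} is considerably more delicate: it maintains a hierarchy of levels with carefully scheduled rebuilds so that inserted edges and pruned ``dust'' are periodically absorbed back into larger expanders, and it is exactly this hierarchy that simultaneously controls the total vertex count and the amortized update cost. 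You correctly identify this as ``the main bookkeeping challenge,'' but your sketch does not supply the mechanism that resolves it; without that, neither the $O(n\log^3 n)$ size bound nor the $O(\phi^{-2}\log^7 n)$ amortized bound follows.
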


Actually the algorithm can be made deterministic when $\phi=O(1/2^{\log^{4/5}n})=1/n^{o(1)}$
by using the deterministic expander decomposition from \cite{chuzhoy2020deterministic}.
This would only add an extra $n^{o(1)}$ factor in the update time
of our applications, but in a first read, the reader may assume for
simplicity that \Cref{thm:dynamic exp decomp} is deterministic.

Given the dynamic expander decomposition from \Cref{thm:dynamic exp decomp},
we show that we can generate a $(1+\epsilon)$-spectral sparsifier
in $\Otil(n)$ time and even maintain it dynamically if the adversary
is oblivious to the sparsifier. We emphasize that the time is independent
of $m$ and depends only on $n$.
\begin{lem}
\label{lem:sparsify from expander}Let $\counter$ be the variable
that counts the total number of edge changes in all $\phi$-expanders
$G_{1},\dots,G_{z}$ of $G$ from \Cref{thm:dynamic exp decomp}. We
can extend the algorithm from \Cref{thm:dynamic exp decomp} so that
it handles the following additional operation:
\begin{itemize}
\item $\sparsify(t)$: return a graph $H$ and continue maintaining $H$
until $\counter$ has increased by $t$ using $\Otil(n+t)$ total
update time. During the sequence of edge updates before $\counter$
has increased by $t$, $H$ contains $\Otil(n+t)$ edges and there
are at most $\Otil(t)$ edge changes in $H$. If these edge updates
are fixed at the time $\sparsify(t)$ is called, then with high
probability, $H$ is a $(1+\epsilon)$-spectral sparsifier throughout
the update sequence. We emphasize that each call to $\sparsify(t)$
use fresh random bits to initialize and maintain $H$ in this call.
\end{itemize}
\end{lem}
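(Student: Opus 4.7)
The plan is to piggyback on the dynamic expander decomposition of \cref{thm:dynamic exp decomp}: every time $\sparsify(t)$ is called I draw fresh random bits and independently build a $(1+\epsilon)$-spectral sparsifier $H_i$ of each current $\phi$-expander $G_i$ by uniform edge sampling, and output $H=\bigcup_i H_i$. Since the $G_i$ are edge-disjoint we have $L_G=\sum_i L_{G_i}$, so a $(1+\epsilon)$-spectral sparsifier of every $G_i$ composes into one of $G$. The classical expander sparsification bound says that sampling each edge of a $\phi$-expander $G_i$ independently at rate $p_i=\Theta(\log n/(\phi^2\epsilon^2 \bar{d}_i))$ and reweighting by $1/p_i$ yields, with probability $1-1/\poly(n)$, a $(1+\epsilon)$-spectral sparsifier of $G_i$ of expected size $\Otil(|V(G_i)|/(\phi^2\epsilon^2))$; weighted edges are handled by the same weight-bucketing trick used in \cref{thm:dynamic exp decomp}. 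Summing over $i$ and using $\sum_i|V(G_i)|=O(n\log^3 n)$, $H$ has $\Otil(n)$ initial edges.

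For the initial construction, naively coin-flipping every edge would cost $\Otil(m)$, which is too slow. Instead I would extend the decomposition so that each $G_i$'s current edge set is stored in an indexed array; then for each $G_i$ I draw $k_i\sim\mathrm{Bin}(m_i,p_i)$ of expected value $\Otil(|V(G_i)|)$, pick $k_i$ uniform distinct indices in $\{1,\dots,m_i\}$, and insert the corresponding reweighted edges into $H_i$. A hash table of sampled edges per $H_i$ supports $O(1)$ membership tests. Total initial time is $\Otil(\sum_i|V(G_i)|)=\Otil(n)$. For the maintenance phase, \cref{thm:dynamic exp decomp} reports changes of four kinds: a single-edge deletion from some $G_i$ (look the edge up in $H_i$'s hash table and delete if present, $O(1)$ time); a single-edge insertion (flip an independent $p_i$-biased coin, $O(1)$ time); deletion of a whole $G_i$ (discard $H_i$); and insertion of a brand-new $G_i$ with $m_i$ edges (rebuild $H_i$ from scratch in $\Otil(|V(G_i)|)=\Otil(m_i)$ time, using $|V(G_i)|\le m_i+1$ since any $\phi$-expander is connected). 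Since whole-expander insertions and deletions each increment $\counter$ by $m_i$ while doing $\Otil(m_i)$ work and $\Otil(m_i)$ edge changes in $H$, amortizing across the $t$-step window gives total work $\Otil(n+t)$ and at most $\Otil(t)$ edge changes in $H$.

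For correctness the key point is that all sampling randomness inside this $\sparsify(t)$ call is drawn at the moment of the call, and by the blinking-adversary hypothesis in the statement the entire sequence of updates during the window is fixed at that moment. Hence the sampling randomness is independent of every snapshot of every $G_i$, and the expander sparsification theorem applies to each (expander, time) pair with failure probability $1/\poly(n)$; a union bound over the at most $n\cdot t$ such pairs shows that $H$ is a $(1+\epsilon)$-spectral sparsifier of $G$ throughout the window w.h.p. The main obstacle I expect is implementing the $\Otil(n)$ initial construction cleanly: one has to augment the decomposition of \cref{thm:dynamic exp decomp} to keep each $G_i$'s edge list in an indexed array supporting $O(1)$ random access and $O(1)$ amortized insertion/deletion, and then carefully invoke (or re-derive) the $\phi$-expander sparsification bound in the weighted setting with the quantitative size and failure-probability guarantees needed here.
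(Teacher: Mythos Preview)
Your high-level plan matches the paper's: build $H=\bigcup_i H_i$ where each $H_i$ is a $(1+\epsilon)$-spectral sparsifier of the $\phi$-expander $G_i$, use fast subset sampling to avoid touching all $m$ edges, and argue correctness via independence of the fresh randomness from the (fixed) update sequence. However, there is a genuine gap in the sampling step.

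The issue is your use of \emph{uniform} edge sampling at rate $p_i=\Theta(\log n/(\phi^2\epsilon^2\bar d_i))$ based on the \emph{average} degree $\bar d_i$ of $G_i$. The expander sparsification bound you invoke does not hold with that rate: the effective resistance of an edge $\{u,v\}$ in a $\phi$-expander is $\Theta\bigl(\tfrac{1}{\phi^2}\bigl(\tfrac{1}{\deg(u)}+\tfrac{1}{\deg(v)}\bigr)\bigr)$, so the correct oversampling threshold is governed by $\min\{\deg(u),\deg(v)\}$, not $\bar d_i$. Uniform sampling at rate $\propto 1/\bar d_i$ undersamples edges incident to low-degree vertices, and the $G_i$ produced by \cref{thm:dynamic exp decomp} are not promised to be near-regular. (If you instead set $p_i\propto 1/d_{\min}(G_i)$ uniformly, correctness is restored but the size blows up by a factor of $\bar d_i/d_{\min}(G_i)$, destroying the $\Otil(|V(G_i)|)$ bound.) Relatedly, your maintenance rule ``delete the edge from $H_i$ if present'' does not preserve the invariant once degrees in $G_i$ drop: after deletions the required sampling probabilities increase, and some resampling is needed.

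The paper fixes exactly this by sampling each edge $\{u,v\}$ with its own probability $\tilde p_u\propto 1/\min\{\widetilde\deg(u),\widetilde\deg(v)\}$, where $\widetilde\deg$ is the true degree rounded to a power of~$2$. To make this fast it assigns each edge to be ``owned'' by its lower-degree endpoint and stores the edges owned by $u$ in an indexed structure; then all edges owned by $u$ share the same probability $\tilde p_u$, and efficient subset sampling draws them in $\Otil(\tilde p_u\cdot\deg(u))=\Otil(1)$ time per vertex, giving $\Otil(|V(G_i)|)$ overall. When $\widetilde\deg(u)$ halves, the edges owned by $u$ are resampled; since this happens only after $\Omega(\widetilde\deg(u))$ deletions incident to $u$, the amortized cost and the number of edge changes in $H$ remain $\Otil(1)$ per unit of $\counter$. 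Your indexed-array idea and binomial-draw trick are close in spirit, but you need the degree-based, per-vertex sampling (and the resampling on degree halving) rather than a single uniform rate per expander.
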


The proof of \Cref{lem:sparsify from expander}  combines technical
ingredients from \cite{bernstein2020fully} and is given in \Cref{sec:proof sparsify}.

\paragraph{Discussion: Blinking Adversary.}
Assuming an underlying adaptive dynamic expander decomposition, we can think of the
 sparsifier of \Cref{lem:sparsify from expander} 
 as a sparsifier with 
 a fast refresh capability: Each
 call to {\em sparsify} refreshes the sparsifier and makes it accurate independently of the past updates. 

Suppose we always call $\sparsify(t)$ whenever $\counter$ increases
by $t$. Then we, in fact, maintain using $\Otil(n/t)$ amortized update time a $(1+\epsilon)$-spectral
sparsifier $H$ of $G$
against a blinking adversary as define in Section \ref{sec:blinking}. This is because, 
 \Cref{lem:sparsify from expander} guarantees accuracy of $H$  for a fixed sequence of updates (before $\counter$ increases by $t$), and each call to $\sparsify(t)$ uses
fresh random bits. So although the adversary observes $H$ following the last call to $\sparsify(t)$, this does not give it any useful information to fool the next call to  $\sparsify(t)$.

This model of an adversary lies between an oblivious adversary who
cannot observe the algorithm's answers at all and an adaptive adversary
who can observe the algorithm's answers after every update.
Interestingly, as we will see below, algorithms against
this intermediate model of adversary is enough for speeding up the
applications obtained by our generic reduction.

\begin{cor}
[Global minimum cuts (sparsified)]\label{cor:global mincut sparse}There
is a dynamic algorithm against an adaptive adversary that, given a
weighted graph $G$ undergoing edge insertions and deletions, with
probability $1-1/\poly(n)$, maintains a $(1+\epsilon)$-approximate
value of the global mincut in $\Otil(n^{3/4})$ amortized update time.
\end{cor}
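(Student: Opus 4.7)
The plan is to lift \Cref{cor:globalmincut} to the sparsified regime using the refreshable sparsifier of \Cref{lem:sparsify from expander}, as sketched in Section~\ref{sec:blinking}. First, maintain one dynamic expander decomposition of $G$ against the adaptive adversary (\Cref{thm:dynamic exp decomp}), contributing only $\polylog(n)$ amortized time overall. On top of this shared substrate, design a refreshable oblivious mincut algorithm $\A'$ whose refresh step does the following: invoke a fresh $\sparsify(T)$ to obtain a $(1+\epsilon/5)$-spectral (hence cut-) sparsifier $H$ with $\tilde{O}(n+T)$ edges, then run Thorup's $\tilde{O}(|H|)$-time preprocessing of the oblivious $(1+\epsilon/5)$-approximate mincut data structure on $H$. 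During the subsequent $T$ updates to $G$, the $\tilde{O}(T)$ edge changes that the sparsifier reports to $H$ are fed into Thorup, each handled in $\tilde{O}(\sqrt{n})$ worst-case time. Thus the refresh cost of $\A'$ is $\tilde{O}(n+T)$, the per-sparsifier-change cost is $\tilde{O}(\sqrt{n})$, and a query is answered in $O(1)$ time from Thorup's explicitly maintained value.

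Next, apply the generic reduction of \Cref{thm:basic} to $\A'$ with parameter $T$, but replace the ``reset'' in Step~4 with the refresh of $\A'$. Instantiate $c=\tilde{O}(\sqrt{T})$ copies $\A'^{(1)},\dots,\A'^{(c)}$, each seeded with independent randomness $r^{(j)}$ that covers both its $\sparsify$ invocation and its Thorup instance, and aggregate responses via $\primed_{\epsilon_{\med},\beta}$ on $s=\tilde{O}(1)$ subsampled copies, exactly as in the proof of \Cref{thm:basic}.

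The accuracy argument is essentially \Cref{cor:accurate whole} with the predicate ``copy $j$ is accurate at step $i$'' reinterpreted to include the event that the sparsifier $H^{(j)}$ is a valid $(1+\epsilon/5)$-sparsifier of the current graph. Because each $r^{(j)}$ is subsampled and the transcript is $(O(1),\beta/100)$-DP with respect to $R=(r^{(1)},\dots,r^{(c)})$ (\Cref{lem:transcript round} and \Cref{cor:transcript whole}), the generalization theorem (\Cref{thm:generalize}) implies that the empirical fraction of accurate copies at every step matches its value under oblivious inputs. On truly oblivious inputs, \Cref{lem:sparsify from expander} gives that $H^{(j)}$ is a $(1+\epsilon/5)$-sparsifier throughout the epoch (since the updates are then independent of the sparsifier randomness), and Thorup's algorithm is $(1+\epsilon/5)$-accurate with probability $\ge 9/10$; composing these with the rounding inside $\primed$ yields the desired $(1+\epsilon)$-approximation. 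This is precisely where the blinking-adversary view matters: differential privacy renders the adversary effectively blinking with respect to each copy's sparsifier, so \Cref{lem:sparsify from expander}'s hypothesis is met within each epoch.

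For the running time, each of the $c$ copies pays $\tilde{O}(n+T)$ for its refresh and $\tilde{O}(T\sqrt{n})$ for feeding its sparsifier changes into Thorup, so one epoch costs $c\cdot\tilde{O}(n+T\sqrt{n})+\tilde{O}(T)$. Dividing by $T$ and substituting $c=\tilde{O}(\sqrt{T})$ gives an amortized update/query time of $\tilde{O}(n/\sqrt{T}+\sqrt{Tn})$, minimized at $T=\sqrt{n}$ with value $\tilde{O}(n^{3/4})$. The main obstacle is the accuracy step: one has to verify that bundling sparsifier randomness into each $r^{(j)}$ does not break the DP-generalization chain of \Cref{lem:mostly accurate}. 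Since the entire transcript is a post-processing of $R$, and since the predicate ``accurate'' remains a function of the transcript and the fixed update sequence, the standard argument goes through unchanged; no additional DP analysis for the sparsifier is needed beyond what is already paid for by including $r^{(j)}$'s sparsifier bits in the protected database.
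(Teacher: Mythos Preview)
Your proposal is correct and follows essentially the same approach as the paper: maintain one shared adaptive expander decomposition, and per phase spawn $c=\tilde O(\sqrt{T})$ independent copies each consisting of a fresh $\sparsify(T)$ sparsifier plus Thorup's oblivious mincut structure on top of it, aggregate via the private-median subsampling of \Cref{thm:basic}, and balance at $T=\sqrt{n}$. The paper phrases phases in terms of the counter $\counter$ increasing by $T$ rather than $T$ updates to $G$, but since the expander decomposition has $\tilde O(1)$ amortized recourse this is equivalent up to polylog factors; your running-time computation matches. Your accuracy paragraph is in fact more explicit than the paper's about why bundling the sparsifier's fresh random bits into each $r^{(j)}$ keeps the DP-generalization chain of \Cref{lem:mostly accurate} intact---the paper simply asserts ``we can assume that the adversary for $\A^{(j)}$ is oblivious'' and invokes \Cref{lem:sparsify from expander}.
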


\begin{proof}
First of all, we maintain the expander decomposition of $G$ in the
background using the algorithm from \Cref{thm:dynamic exp decomp}
with $\Otil(1)$ amortized update time when $\phi=\Theta(1/\log^{4}m)$.
Let $\counter$ be the variable that counts the total number of edge
changes in all $\phi$-expanders from the decomposition. 

We proceed in phases and restart the phase whenever
$\counter$ has increased by $T$ since the beginning of the phase (we will later set  $T=\sqrt{n}$).\footnote{It is possible that after one update to $G$, $\counter$ increases
by more than $T$. This means that within one update to $G$, there
can be more than one phase.} For each phase, our goal is to maintain a $(1+O(\epsilon))$-approximation
of the global minimum cut on $G$ against an \emph{oblivious} adversary
in time which is independent of $m$. Once this is done, we can apply
\Cref{thm:basic} to make it work against an adaptive adversary. Let
$c=\Otil(\sqrt{T})$ be the number of copies of the oblivious algorithms
in the reduction of \Cref{thm:basic}. 

To achieve this goal, at the beginning of the phase, we compute $H^{(1)},\dots,H^{(c)}$
using \Cref{lem:sparsify from expander} where each $H^{(j)}=\sparsify(T)$
are independently generated. For each instance $\A^{(j)}$ of the
oblivious algorithm for dynamic min cut by Thorup \cite{thorup2007fully},
we treat $H^{(j)}$ as its input graph. During the phase, $\counter$
can increase by at most $T$ and so by \Cref{lem:sparsify from expander}
there are $T_{H}=\Otil(T)$ updates to $H^{(j)}$. Therefore, during
the phase, the total update time of $\A^{(j)}$ for handling $T_{H}$
updates in $H$ is $t_{\total}=\Otil(|E(H^{(j)})|+T_{H}\sqrt{n})=\Otil(n+T_{H}\sqrt{n})$,
and the query time is $t_{q}=O(1)$ (recall that Thorup's algorithm maintains the solution explicitly). Since we can assume that the adversary for $\A^{(j)}$
is oblivious, \Cref{lem:sparsify from expander} guarantees that $H^{(j)}$
remains a $(1+\epsilon)$-spectral sparsifier of $G$ throughout the
phase. Therefore, each $\A^{(j)}$ indeed answers $(1+O(\epsilon))$-approximation
of the global min cut of $G$. 

By applying \Cref{thm:basic}, we can maintain $(1+O(\epsilon))$-approximate
solution of $G$ against an adaptive adversary where the total update
time is $\Otil(\sqrt{T_{H}}\cdot t_{\total}+T_{H}\cdot t_{q})=\Otil(\sqrt{T}\cdot n +  T^{3/2}\sqrt{n}+T\cdot t_{q})$
in this phase. The additional time for maintaining $H^{(1)},\dots,H^{(c)}$
is $c\times\Otil(n+T)=\Otil(\sqrt{T}\cdot (n+T))$. Therefore, the amortized
update time is 
\[
\Otil\left(\frac{\sqrt{T}\cdot n +  T^{3/2}\sqrt{n}}{T}+t_{q}\right)=\Otil(n^{3/4}),
\]
where the last equality holds if we choose $T=\sqrt{n}$.
\end{proof}

Via a similar modification of the proof of \Cref{cor:ef fast} by applying
\Cref{lem:sparsify from expander} and setting $T=n\beta^{2d}$, we obtain
the following. 
\begin{cor}
[All-pairs effective resistance (sparsified)]\label{cor:ef sparse}There
is a dynamic algorithm against an adaptive adversary that, given a
weighted graph $G$, handles the following operations in $n^{3/4+o(1)}$
amortized update time:
\begin{itemize}
\item Insert or delete an edge from the graph, and
\item Given $s,t\in V(G)$, with probability at least $1-1/\poly(n)$, return
a $(1+\epsilon)$-approximation of the distance between $s$ and $t$.
\end{itemize}
\end{cor}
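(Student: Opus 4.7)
The plan is to follow the template of the proof of \Cref{cor:global mincut sparse}, replacing Thorup's oblivious global mincut algorithm with the oblivious $(1+\epsilon)$-approximate effective resistance algorithm of Chen et al.~that was used in \Cref{cor:ef fast}, and running it on top of the fast-refreshable spectral sparsifier from \Cref{lem:sparsify from expander}. As in \Cref{cor:ef fast}, we treat the query pair $(s,t)$ by first updating auxiliary variables $(\texttt{src},\texttt{snk})\gets (s,t)$ so that the quantity being maintained is a single real number, and hence fits the framework of \Cref{thm:basic}.

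First, I maintain the dynamic $\phi$-expander decomposition of $G$ in the background via \Cref{thm:dynamic exp decomp} with $\phi=\Theta(1/\log^{4}m)$, paying $\Otil(1)$ amortized per update. I then proceed in phases, each phase ending as soon as the edge-change counter $\counter$ of the decomposition has increased by $T$ since the start of the phase; the value of $T$ will be chosen at the end. At the beginning of each phase, I invoke $\sparsify(T)$ to generate $c=\Otil(\sqrt{T})$ independent sparsifiers $H^{(1)},\dots,H^{(c)}$ of $G$, and I feed the (induced) updates of each $H^{(j)}$ to an independent copy $\A^{(j)}$ of the Chen et al.~oblivious effective resistance algorithm, with parameters $\beta$ and $d=\omega(1)$ (say $d=O(\log\log n)$) to be fixed. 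The differential privacy reduction of \Cref{thm:basic} is applied to aggregate the $c$ answers. As in \Cref{cor:global mincut sparse}, the blinking adversary property of \Cref{lem:sparsify from expander} together with differential privacy imply that within each phase each $\A^{(j)}$ sees its input ``as if'' it were oblivious, and hence $H^{(j)}$ is a $(1+\epsilon)$-spectral sparsifier of $G$ throughout the phase, so each $\A^{(j)}$ correctly returns a $(1+O(\epsilon))$-approximation of the $(s,t)$-effective resistance in $G$.

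Now I balance parameters. During a phase there are $T_H=\Otil(T)$ updates to each $H^{(j)}$, and $|E(H^{(j)})|=\Otil(n+T)$. Plugging into the bounds from \Cref{cor:ef fast}, the total update time of a single copy is
\[
t_{\total} \;=\; \Otil\!\left(\tfrac{n}{\beta^{5}}+T\cdot\beta^{-2d+3}\right)\cdot n^{o(1)},
\]
while the query time is $t_{q}=n\beta^{d}\cdot n^{o(1)}$. Invoking \Cref{thm:basic} and adding the $c\cdot\Otil(n+T)$ cost of building the sparsifiers, the total time per phase is
\[
\Otil\!\left(\sqrt{T}\cdot t_{\total}+T\cdot t_{q}+\sqrt{T}\cdot(n+T)\right),
\]
so the amortized update time becomes $\bigl(\tfrac{n}{\beta^{5}\sqrt{T}}+\sqrt{T}\,\beta^{-2d+3}+n\beta^{d}\bigr)\cdot n^{o(1)}$. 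Following the hint, I set $T=n\beta^{2d}$; then the first two terms both become $\sqrt{n}\cdot\beta^{-d}\cdot n^{o(1)}$, and balancing this against $n\beta^{d}$ yields $\beta^{d}=n^{-1/4}$, hence $1/\beta=n^{\Theta(1/d)}=n^{o(1)}$ as required for the $n^{o(1)}$ absorption. The resulting amortized update/query time is $n\cdot\beta^{d}\cdot n^{o(1)}=n^{3/4+o(1)}$.

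The main subtlety (and the step I would check most carefully) is that the DP reduction of \Cref{thm:basic} lets us view each copy $\A^{(j)}$ as facing an oblivious adversary only with respect to the randomness of $\A^{(j)}$; but the sparsifier $H^{(j)}$ also contributes randomness that the adversary could in principle correlate with through previous outputs. The key observation, inherited from \Cref{cor:global mincut sparse}, is that $\sparsify(T)$ is re-invoked with fresh random bits at the start of each phase, so the blinking adversary model of \Cref{lem:sparsify from expander} suffices: the updates and queries arriving during a phase are determined before the new random bits of $H^{(j)}$ are drawn (from the perspective of the sparsifier-internal randomness, even though they may depend on prior aggregated outputs), and this is exactly the regime in which \Cref{lem:sparsify from expander} guarantees $(1+\epsilon)$-spectral approximation with high probability.
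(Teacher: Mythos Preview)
Your proof is correct and follows exactly the approach the paper indicates: combine the sparsification template of \Cref{cor:global mincut sparse} with the oblivious effective-resistance algorithm and parameter choices of \Cref{cor:ef fast}, setting $T=n\beta^{2d}$ and $\beta^{d}=n^{-1/4}$. The only minor imprecision is in your final paragraph's phrasing---the updates within a phase are \emph{not} fixed before the sparsifier's fresh randomness is drawn, but rather the sparsifier's randomness is folded into $r^{(j)}$ and protected by the DP reduction (this is exactly how \Cref{cor:global mincut sparse} argues it), so the conclusion stands.
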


\section{A Separation Result for a Search Problem}\label{sec:negative_search}

In this section, we describe and analyze a search problem that 
is much easier for dynamic algorithms against an oblivious adversary than for a dynamic algorithm against an adaptive adversary. For any constant $c>0$ and polynomial $P(n)\geq n$, we present a problem that can be solved by a dynamic algorithm against an oblivious adversary with amortized update time $P(n)$, however, (under some direct sum assumption) any dynamic  algorithm against an adaptive adversary requires $\Omega(P(n)\cdot n^{c})$  amortized update time.
That is, a dynamic algorithm against an adaptive adversary can save at most a constant multiplicative factor compared to an algorithm that solves the problem from scratch after each update, which can be done in time $P(n)\cdot n$. Notice that when we take a larger $c$ the separation is larger; however, this requires relying on a stronger direct-sum assumption.

We start with an informal description of the separation result.
We refer the reader to \cref{sec:intro-search} for a simpler and weaker separation result, providing some of the ideas in the separation proved in this section.
We fix a constant $c \in \NN$ and a function $H_n:\set{0,1}^{2n}\rightarrow \set{0,1}^n$ such that
$H_n(z)$ can be computed in time $P(n)$ for some polynomial $P(n)$.
We consider
the following dynamic problem  -- the adversary maintains two sets, a set $X$ of {\em excluded} prefixes, and a set $Y=\set{y_1,\dots,y_{n^c}}$ of suffixes. The set $X$ is initialized as the empty set and the initial set $Y$ is chosen by the adversary.
In each update, the adversary adds an element to $X$ and replaces one of the elements of $Y$. After each update, the algorithm has to output an element
$x$ \emph{not} in the set of excluded prefixes $X$ and 
the $n^c$ values  $H_n(x\circ y_1), \dots,H_n(x \circ y_{n^c})$, where $\circ$ denotes concatenation. 
An oblivious algorithm can pick $x$ at random in the preprocessing stage and compute $H_n(x\circ y_1), \dots,H_n(x \circ y_{n^c})$ for the initial set $Y$;
the oblivious algorithm will use the same $x$ throughout the entire run of the algorithm.
After each update the oblivious algorithm only needs to  compute $H_n(x \circ y)$ for the new $y$ that was inserted to $Y$, that is, the update time is $P(n)$.

For an adaptive algorithm life is much harder -- we consider an adaptive adversary that sees the element $x_{i-1}$ in output of the algorithm after the $(i-1)$-th update and adds $x_{i-1}$ to $X$. Thus, the algorithm has to use a new value $x_i$ after the $i$-th update.  Furthermore, the adaptive adversary in each update chooses some element $y \in Y$ and replaces it with a random element from $\set{0,1}^n$. 
As the algorithm does not know in the preprocessing time what the set $Y$ will be after $i>n^c$ updates, it cannot computes  the strings $H_n(x_i\circ y_1), \dots,H_n(x_i \circ y_{n^c})$ in the preprocessing stage. Thus, in the first  $2n^c$ updates, the algorithm has to compute the function $H_n$ on at least $n^{2c}$ values that it did not know in the preprocessing stage (and could not guessed them  with non-negligible probability). Assuming that $H_n$ has the direct-sum property  
for $n^{2c}$ elements implies that during the first $2n^c$ updates the work of the adversary is $\Omega(n^{2c} \cdot P(n))$, thus the amortized update time is $O(n^c P(n))$.
This lower bound holds even if the adaptive algorithm can compute as many as $2^{0.5n}$ values of $H_n$ in the preprocessing stage (as it would not guess with non-negligible probability what elements will be inserted to $Y$). We formalize these ideas in the rest of this section.

\begin{definition}[The List of Outputs Problem]
Fix a sequence of functions $\set{H_n:\set{0,1}^{2n}\rightarrow \set{0,1}^n}_{n \in \NN}$ and a constant $c>0$. 
In the dynamic setting of the List of Outputs Problem for strings of length $n$, a set $Y$ initially contains $n^c$ elements from $\set{0,1}^n$, a set $X \subset \set {0,1}^n $ is an arbitrary set such that $|X|\leq 2^{0.5n}$ (e.g., $X=\emptyset$), and the sets are updated during the lifetime of the algorithm, where every update replaces one of the elements in $Y$ and adds one element to $X$. 
After each update, the algorithm has to return an output  that consists of an element $x \notin X$ and a list of $|Y|$ values -- $H_n(x \circ y)$ for  every $y \in Y$ (for the current sets $X,Y$).
We assume that the number of updates is not too large and throughout the execution $|X| < 2^{0.5n}$, thus, there are always many legal outputs.
\end{definition}

Assume that computing $H_n(x \circ y)$ takes $P(n)$ time for some polynomial $P(n)$ (e.g., $P(n)=O(n)$). 
We next describe a simple dynamic algorithm against an oblivious adversary for the list of outputs problem. In the preprocessing stage, the algorithm picks uniformly at random an element $x \in \left(\{0,1\}^n\setminus X\right)$ and for that $x$ it computes $H_n(x \circ y)$ for all $y \in Y$. The complexity of the preprocessing is $n^c\cdot P(n)$. With high probability, none of the  futures update by an oblivious adversary will ``hit'' the element $x$ chosen by the algorithm (as we assume that the number of updates is not to large and $|X| \leq 2^{0.5 n}$), and therefore, after every update the algorithm  only needs to recompute $H_n(x \circ y)$ for the one ``new'' point $y$ that was inserted, that is, the complexity of each update is $P(n)$.

We now want to show that in the adaptive setting the amortized update time is nearly as high as computing one instance of the problem from scratch. As the list of outputs problem requires computing $H_n$ on $n^c$ inputs, we need to assume that the computation of many instances of $H_n$ cannot be done significantly more efficiently than computing each instance independently, i.e., we need to assume that $H_n$ has a direct-sum property. We formalized this requirement in the following definitions.

\begin{definition}
 We say that an algorithm $A$  computes $\set{H_n}_{n \in \NN}$ on $\ell(n)$ inputs if for every large enough $n$ and every inputs $z_1,\dots,z_{\ell(n)}\in \set{0,1}^{2n}$
 with probability at least $3/4$ the algorithm $A$ on inputs $z_1,\dots,z_{\ell(n)}$ 
 outputs  $H_n(z_1),\dots,H_n(z_{\ell(n)})$, where the probability is over the randomness of $A$.
\end{definition}

\begin{definition}
We say that a distribution $\calD$ on $(\set{0,1}^{2n})^{\ell(n)}$ is \emph{well-spread} if for every set $S \subset\set{0,1}^{2n}$
of size at most $2^{0.75n}$ if we sample a sequence 
$z_1,\dots,z_{\ell(n)}$ according to $\calD$, then the probability that 
$$\left|\set{z_1,\dots,z_{\ell(n)}} \cap S\right| < \ell(n)/2 $$
is at least  $7/8$. 
\end{definition}

Informally, a distribution is well-spread if even if in a preprocessing stage an adversary computes the value of $H_n$ on many points,
then a sequence of $\ell(n)$ elements sampled according to $\calD$ contains with probability at least $7/8$ at least $\ell(n)/2$ ``fresh'' elements.

\begin{definition}

A sequence of functions $\mathcal{H} = \set{H_n:\set{0,1}^{2n}\rightarrow \set{0,1}^n}_{n \in \NN}$ has the $\ell(n)$-direct sum property if:
\begin{itemize}
    \item 
    $H$ is moderately easy to compute, that is, there exists some polynomial $P(n)$ and an algorithm $A$ such that for every $z\in \set{0,1}^{2n}$, the algorithm $A$ computes $H_{n}(z)$ in at most $P(n)$ steps.
    \item 
    There is a constant $0 < \gamma \leq 1$ such that for every algorithm $A$ that computes $\set{H_n}_{n \in \NN}$ on $\ell(n)$  inputs, for every large enough $n$, for every well-spread distribution $\calD$ on $\ell(n)$ inputs, 
    the run time of $A(z_1,\dots,z_{\ell(n)})$ is at least $\gamma\cdot \ell(n) \cdot P(n)$ with probability at least $3/4$,
    where the probability is over the choice of $z_1,\dots,z_{\ell(n)}$ according to $\calD$ and over the randomness of $A$. The above should hold even  if for some set $S$ of size at most $2^{0.75n}$ the algorithm $A$ has computed 
    $H_n(z)$ for every $z \in S$ 
    prior to the choice of $z_1,\dots,z_{\ell(n)}$.
\end{itemize}
\end{definition}

In \cref{rem:RO} we discuss how to implement such sequences of functions $\mathcal{H}$.
\begin{remark}\label{rem:output}
We can make an assumption that writing $n$ bit outputs require time $\Omega(n)$. Under this ``physical'' assumption we can implement $H_n$ as the function that takes a $2n$ bit strings and writes them. Under this assumption we get the desired separation between oblivious algorithms and adaptive algorithms (even with the simplified version of the problem described in \cref{sec:intro-search}). However, this is a physical assumption on the process of writing outputs and it is not clear that this assumption is true (e.g., to write a string $x\circ y$ one can give pointers to the locations of the input strings $x$ and $y$). Thus, we prefer computational assumptions on the run time of computing $H_n$ on many inputs for some ``hard'' function $H_n$.
\end{remark}

We next show that a dynamic algorithm against an adaptive adversary has to have high amortized update time
even if the algorithm has a huge preprocessing time. The adaptive adversary 
will add each element $x$ that the dynamic algorithm outputs to the list of excluded points $X$. 
Intuitively, this makes the computation spent in previous rounds useless. 
Furthermore, consider a dynamic algorithm that invests computation time of $2^{o(n)}$ in the preprocessing stage with the goal of reducing update time. E.g., the algorithm may prepare a very long list of pairs $x,y$’s of size at most $2^{o(n)}$ for which it computes $H_n(x\circ y)$.
This algorithm would still need to invest a significant amount of time when updates happen, because of the changes made to the set $Y$. This intuition is formalized in the following lemma.

\begin{lemma}
Assume that ${\cal H} = \set{H_n:\set{0,1}^{2n}\rightarrow \set{0,1}^n}_{n \in \NN}$ has the $O(n^{2c})$-direct sum property.
There exists an adaptive adversary such that every dynamic algorithm solving the the list of outputs problem on a sequence of at most $2^{0.25n}$ updates
using at most $2^{0.5n} $ preprocessing time has at least $\Omega(n^{c} \cdot P(n))$ amortized update time.
\end{lemma}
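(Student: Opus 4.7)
The strategy is to fix an adaptive adversary whose $y$-choices ``look random'' to any preprocessing table of size $\leq 2^{0.75n}$, and then to package each execution of the dynamic algorithm against this adversary as an instance of the $H_n$-computation problem on $\ell = n^{2c}$ well-spread inputs, so that the direct-sum hypothesis directly yields a runtime lower bound of $\gamma\ell P(n) = \gamma n^{2c}P(n)$ on the update work; divided over $T = n^c$ updates this is $\Omega(n^c P(n))$ amortized. Concretely, the adversary samples $Y_0$ as $n^c$ i.i.d.\ uniform strings from $\set{0,1}^n$, and in round $i$ first inserts the algorithm's previous output $x_{i-1}$ into $X$ and then replaces the oldest element of $Y$ with a fresh uniform $y_i^{\mathrm{new}}$. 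Since $T = n^c \leq 2^{0.25n}$, the constraint $|X| < 2^{0.5n}$ is preserved throughout.

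\textbf{Well-spread analysis.} Assume $\A$ is deterministic (by fixing its random tape); the randomized case follows by taking expectations over the tape. Let $\calD$ be the distribution on $(\set{0,1}^{2n})^{\ell}$ obtained by sampling all of the adversary's $y$'s uniformly and recording the pairs $(x_i,y)$ that appear in $\A$'s answers in canonical order (round~$1$'s $n^c$ pairs, then round~$2$'s, and so on), so that the index $j$ of $z_j$ encodes its round and slot. To show $\calD$ is well-spread, fix $S \subseteq \set{0,1}^{2n}$ with $|S| \leq 2^{0.75n}$ and let $y^{(1)},\dots,y^{(N)}$ with $N = n^c + T = 2n^c$ be the inserted $y$'s. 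For any index $k$, condition on all $y^{(j)}$ with $j \neq k$; then, for each of the at most $n^c$ rounds $i$ in which $y^{(k)}$ is alive, the value $x_i$ becomes a deterministic function $g_i$ of $y^{(k)}$, and the event $(g_i(y^{(k)}), y^{(k)}) \in S$ holds only when $y^{(k)}$ lies in $\set{z:(g_i(z),z)\in S}$, a set of size at most $|S|$. Taking expectations,
\[
\Ex\bigl[\,|\set{j : z_j \in S}|\,\bigr] \;\leq\; N \cdot n^c \cdot \frac{|S|}{2^n} \;\leq\; 2n^{2c} \cdot 2^{-0.25n},
\]
and Markov's inequality yields $|\set{j:z_j\in S}| < \ell/2$ with probability at least $7/8$.

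\textbf{Reduction and conclusion.} Construct an algorithm $A$ for the direct-sum task: on input $(z_1,\dots,z_\ell)$ drawn from $\calD$, $A$ parses $z_j = (a_j, b_j)$, recovers each $Y_i$ from the canonical ordering of the $b_j$'s, and simulates $\A$ against the adversary that presents these specific $Y_i$'s (while still inserting the previous $x_{i-1}$ into $X$). Since $\A$ is deterministic and $\calD$ is the induced distribution, the simulation produces exactly the pairs $z_1,\dots,z_\ell$, so $A$ outputs correct values $H_n(z_1),\dots,H_n(z_\ell)$ by reading them off $\A$'s answers. $A$ inherits $\A$'s preprocessed $H_n$-table $S_{\A}$ of size at most $2^{0.5n} \leq 2^{0.75n}$ as the ``precomputed set'' allowed by the direct-sum definition, so the runtime of $A$ on $\calD$-inputs equals the post-preprocessing update time of $\A$ up to $O(\ell)$ bookkeeping. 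Applying the $O(n^{2c})$-direct-sum property to the well-spread $\calD$ yields a runtime lower bound of $\gamma n^{2c} P(n)$ with probability at least $3/4$, which, divided by $T = n^c$ updates, is the desired $\Omega(n^c P(n))$ amortized bound. The main obstacle in this plan is the well-spread calculation above: because $\A$ is adaptive, every $x_i$ may depend on the entire history of inserted $y$'s, and it is the ``fix-all-but-one-$y$'' coupling that converts this dependence into a deterministic function of a single uniform $y^{(k)}$ and thereby permits the $|S|/2^n$ bound per round.
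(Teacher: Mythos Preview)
Your overall strategy matches the paper's, but there is a genuine gap in the reduction step. You make $Y_0$ random, include all $n^{2c}$ pairs $(x_i,y)$ with $y\in Y_i$ in $\calD$, and then claim that $\A$'s preprocessed $H_n$-table $S_{\A}$ can serve as the ``precomputed set'' in the direct-sum definition. But $S_{\A}$ is a function of $Y_0$, which you have placed inside the random $\calD$-input; the direct-sum hypothesis requires the precomputed set $S$ to be fixed \emph{prior to} the choice of $z_1,\dots,z_\ell$. If instead you charge $\A$'s preprocessing to $A$'s runtime, the direct-sum bound only gives $(\text{preprocessing})+(\text{updates})\geq \gamma n^{2c}P(n)$, and subtracting the allowed $2^{0.5n}$ preprocessing time annihilates the conclusion. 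Trying to condition on $Y_0$ does not help either: the conditional distribution $\calD\mid Y_0$ need not be well-spread, since $\A$ may deliberately pick each $x_i$ from a small fixed list, forcing every pair $(x_i,y)$ with $y\in Y_0$ into a set $S$ of size far below $2^{0.75n}$ that accounts for roughly half of your $\ell$ pairs.

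The paper sidesteps this by leaving the initial $Y$ arbitrary and taking $\calD$ to consist only of the pairs $(x_j\circ y_i)$ where $y_i$ is a \emph{fresh} uniform insertion made during the current block of $n^c$ updates. Then everything $\A$ computed before the block (preprocessing plus earlier blocks, at most $2^{0.75n}$ values under the assumption that each block's work is $\le 2^{0.5n}$) is genuinely fixed before the $\calD$-randomness is drawn, so it is a legitimate precomputed set. Once you restrict to fresh $y$'s your fix-all-but-one-$y$ computation still works, but it is more than needed: because each fresh $y_i$ is uniform, already $\Pr[\exists x:\,x\circ y_i\in S]\le |S|/2^n\le 2^{-0.25n}$, and a union bound over the $n^c$ fresh $y$'s shows that with overwhelming probability \emph{no} pair lies in $S$, regardless of how $\A$ chooses the $x_j$'s. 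With that repair (fix $Y_0$, restrict $\calD$ to the $\Theta(n^{2c})$ fresh-$y$ pairs) your single run of $T=n^c$ updates suffices for the lemma; the paper's block decomposition additionally yields the bound for arbitrary sequence lengths up to $2^{0.25n}$.
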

\begin{proof}
Fix any dynamic algorithm $A$.
We consider an adaptive adversary that in each round $i>1$ (i) receives the output of the dynamic algorithm from round $i-1$, namely, an element  $x$ and a list of $|Y|$ outputs, (ii)  updates the set of excluded points $X$ by adding $x$ to it, and (iii) replaces the element $y\in Y$ that was earliest  added to $Y$ with a uniformly distributed element $y'\in_R \set{0,1}^n$
(in the first $n^c$ updates the adversary chooses $y$ from the original set $Y$ in an arbitrary order). 

We consider a sequence of $N \leq  2^{0.25 n}$ updates and divide the updates to $N/n^{c}$ blocks of $n^c$ updates each. We assume that the prepossessing time and the total work in each block is at most $2^{0.5n}$ (otherwise the lemma follows trivially), and prove that the total work time of the algorithm in each block is $\Omega(n^{2c} P(n))$. By our assumptions, algorithm $A$  prior to block $i$ has computed the value $H_n(z)$ on at most $2^{0.75n}$ elements.

Consider the updates in the $i$-th block. Let 
$x_1,\dots,x_{n^c}$ be the elements $x$ in the output of the algorithm and $y_1 ,\dots,y_{n^c}$ be the random elements inserted into $Y$ by the adversary. 
The uniform choice of $y_1 ,\dots,y_{n^c}$ and the choices of $x_1,\dots,x_{n^c}$ by the algorithm $A$ induce a probability distribution $\calD$ on the $O(n^{2c})$ points $(x_j \circ y_i)_{1 \leq j \leq n^c, 1 \leq i \leq j}$.
For every $A \subset \set{0,1}^{2n}$ of size at most $2^{0.75n}$  the probability that for a uniformly distributed $y \in \set{0,1}^n$ the set $A$ contains an element $x \circ y$ for some $x \in \set{0,1}^n$ is exponentially small. Thus, no matter how $A$  chooses $x_1,\dots,x_{n^c}$, the distribution $\calD$ is well spread.

Thus, during the $i$-th block, the algorithm has to compute $H_n(x_j \circ y_i)$ for every $1 \leq j \leq n^c$ and every $1 \leq i \leq j$; i.e., it has to compute $O(n^{2c})$ values of $H_n$.
Since ${\cal H}$ has the $O(n^{2c})$-direct sum property, computing these values will require expected time $\Omega(n^{2c} P(n))$ and, amortized over the $n^c$ updates in the block, it will require $\Omega(n^c P(n))$ amortized update time.
\end{proof}

\begin{remark}
\label{rem:RO}
We do not have a function that provably has
 the direct sum property. However, we can construct such a function from a random oracle (see below).
Thus, heuristically, a cryptographic hash function satisfies the direct-sum property. This heuristic assumption about the direct-sum property of cryptographic hash functions is used in the block-chain of the bitcoin~\cite{satoshi}. Specifically, in the proof of work in~\cite{satoshi} a miner has to compute hash values of many inputs until the hash value satisfies some condition; it is assumed that computing the hash values of many inputs can only be done by computing each value independently.

Technically, to implement a function $H_{n}: \set{0,1}^{2n}\rightarrow \set{0,1}$ using a random oracle $\RO$,
we define $b(z)$, for a $z \in \set{0,1}^{2n}$, as the  number whose binary representation is $1\circ z$ and $H_{n}(z)=(\RO[n \cdot b(z)+1],\RO[n \cdot b(z)+2],\dots,\RO[n \cdot b(z)+n])$. Thus, computing $H_{n}(z)$ will require $n$ time units, as any algorithm computing $H_n(z)$ has to read $n$ locations in the oracle.  Furthermore, the bits from the random oracle used in the computation of $H_n$ on different inputs are disjoint, thus, $H_n$ has the direct-sum property.
To construct a function $H'_n$ that can be computed in time $\theta(P(n))$ for some polynomial $P(n) >n$ we, for example,  define
$$H'_n(z)= \left(\oplus_{j=1}^{P(n)/n} \RO[n \cdot p(n) \cdot b(z)+j], \dots, 
\oplus_{j=P(n)-P(n)/n+1}^{P(n)} \RO[n \cdot p(n) \cdot b(z)+j]\right)$$
(where $\oplus$ is the exclusive-or operation),
that is, each of the $n$ bits in the output of $H'(n)$ requires reading $P(n)/n$ bits from the oracle.
\end{remark}

\section{Impossibility Result for an Estimation Problem}\label{sec:negative_est}

In Section~\ref{sec:negative_search} we presented a negative result for a {\em search} problem. In this section we present a negative result for an {\em estimation} problem. 

\subsection{Preliminaries from adaptive data analysis}
We first introduce additional preliminaries from the literature on adaptive data analysis \cite{dwork2015preserving}.
A \emph{statistical query} over a domain $X$ is specified by a predicate $q:X\rightarrow\{0,1\}$. The value of a query $q$ on a distribution $\mu$ over $X$ is $q(\mu)=\E_{x\sim\mu}[q(x)]$.
Given a database $S\in X^n$ and a query $q$, we denote the empirical average of $q$ on $S$ as $q(S)=\frac{1}{n}\sum_{x\in S}{q(x)}$.

In the adaptive data analysis (ADA) problem, the goal is to design a \emph{mechanism} $\MMM$ that answers queries on $\mu$ using only i.i.d.\ samples from $\mu$. Our focus is the case where the queries are chosen adaptively and adversarially.
Specifically, $\MMM$ is a stateful algorithm that holds a collection of samples $(x_1,\dots,x_n)$. In each round, $\MMM$ receives a statistical query $q_i$ as input and returns an answer $z_i$. We require that when $x_1,\dots,x_n$ are independent samples from $\mu$, then the answer $z_i$ is close to $q_i(\mu)$. Moreover we require that this condition holds for every query in an adaptively chosen sequence $q_1,\dots, q_{\ell}$. Formally, we define an accuracy game $\accgame_{n, \ell,\MMM,\mathbb{A}}$ between a mechanism $\MMM$ and a stateful \emph{adversary} $\mathbb{A}$ in Algorithm~\ref{fig:accgame1}.

\begin{algorithm*}[t!]
\caption{The Accuracy Game $\accgame_{n, \ell,\MMM(S),\mathbb{A}}$.}\label{fig:accgame1}

\begin{enumerate}

\item For $i=1$ to $\ell$, 

\begin{enumerate}
	\item The adversary $\mathbb{A}$ chooses a statistical query $q_i$.
	\item The mechanism $\MMM$ receives $q_i$ and outputs an answer $z_i$.
	\item The adversary $\mathbb{A}$ gets $z_i$.
\end{enumerate}
\item Output the transcript $(q_1,z_1,\dots,q_{\ell},z_{\ell})$.
\end{enumerate}
\end{algorithm*}

\begin{definition}\label{def:accuratemechanism}
A mechanism $\MMM$ is \emph{$(\alpha,\beta)$-statistically-accurate for $\ell$ adaptively chosen statistical queries given $n$ samples} if for every adversary $\mathbb{A}$ and every distribution $\mu$,
\begin{equation}
\Pr_{\substack{S\sim\mu^n\\ \accgame_{n, \ell,\MMM(S), \mathbb{A}}}}\Big[
\exists i\in[\ell] \text{ s.t.\ }
\left| q_i(\mu)-z_i \right|>\alpha
\Big]\leq\beta.
\label{eq:accuratemechanism}
\end{equation}
\end{definition}

Hardt, Steinke and Ullman \cite{HardtU14,SteinkeU15} showed that there is no computationally efficient mechanism that given $n$ samples is accurate on $\gtrsim n^2$ adaptively chosen queries. Formally, 

\begin{theorem}[\cite{HardtU14,SteinkeU15}]\label{thm:SU15}
There is a constant $\lambda>1$ such that the following holds. Assuming one-way functions exist, there is no computationally efficient mechanism $\MMM$ that is $(0.49,0.5)$-statistically-accurate for $\ell=\lambda\cdot n^2$ adaptively chosen queries given $n$ samples in $\{0,1\}^n$. %
Formally, for every polynomial time mechanism $\MMM$ there exists an adversary $\mathbb{A}$ and a distribution $\mu$ over $\{0,1\}^n$ such that
$$
\Pr_{
\substack{S\sim\mu^n\\ \accgame_{n, \ell,\MMM(S), \mathbb{A}}}}\Big[
\exists i\in[\ell] \text{ s.t.\ }
\left| q_i(\mu)-z_i \right|\geq0.49
\Big]\geq\frac{1}{2}.
$$
Furthermore: 
\begin{enumerate}
    \item The adversary $\mathbb{A}$ runs in $O(n^2)$-time per query.
    \item Every query $q_i$ that $\mathbb{A}$ poses can be described using $O(n^2)$ bits and can be evaluated in time $O(n^2)$ on any point $x\in\{0,1\}^n$.
    \item The distribution $\mu$ is uniform on a subset $Y\subseteq\{0,1\}^n$ of size $|Y|=O(n)$.
\end{enumerate}
\end{theorem}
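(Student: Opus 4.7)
The plan is to establish this impossibility result via the cryptographic fingerprinting code / interactive traitor tracing framework, which is the standard tool for proving computational lower bounds in adaptive data analysis. The key primitive is an \emph{interactive fingerprinting code}, constructible from one-way functions, that produces $n$ pseudorandom codewords $y_1,\dots,y_n\in\{0,1\}^n$ together with an efficient \emph{tracing algorithm}: given oracle access to an adversary that successfully ``reconstructs'' bits of the codewords on sampled coordinates, the tracing algorithm identifies some index $j^*$ whose codeword $y_{j^*}$ lies in the sample $S$. The whole attack against $\MMM$ is then an ADA-style reduction: the accuracy of $\MMM$ implies that its answers leak enough fingerprint information that a successful trace can be performed.

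The adversary $\mathbb{A}$ I would construct works as follows. In a preprocessing phase it samples a fingerprinting code, obtaining $Y=\{y_1,\dots,y_n\}$, and sets $\mu$ to be uniform over $Y$, thereby matching property (3) with $|Y|=O(n)$. For each round $i\in[\ell]$ with $\ell=\lambda n^2$, the adversary generates a statistical query $q_i\colon\{0,1\}^n\to\{0,1\}$ whose description is produced by the fingerprinting code generator. The structure of the code ensures two things: first, $q_i$ can be sampled in $O(n^2)$ time and has description and evaluation complexity $O(n^2)$, matching properties (1) and (2); second, the vector of population values $(q_i(\mu))_i$ and the vector of empirical values $(q_i(S))_i$ are close (by concentration over the sample $S\sim\mu^n$), so any $0.49$-accurate answer $z_i$ stays close to the empirical column average. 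The adversary records the bits obtained by rounding $z_i$, producing a ``pirate word'' $w\in\{0,1\}^{\ell}$ consistent with the codewords of $S$.

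For the analysis, I would argue by contradiction: suppose $\MMM$ runs in polynomial time and is $(0.49,0.5)$-statistically accurate throughout the transcript. Then with probability at least $1/2$, the pirate word $w$ is feasible for the codewords of $S$, and the tracing algorithm applied to $w$ outputs some $j^*\in[n]$. By the fingerprinting/traitor-tracing security guarantee, under OWFs this $j^*$ must (except with negligible probability) belong to the set of indices sampled into $S$. At this point the standard completion of the argument is to show that this entire behavior — an efficient procedure that outputs an index of a codeword contained in a secret random subsample — contradicts the pseudorandomness of the code. Equivalently, one can frame it as: the mechanism is being used as a ``pirate decoder,'' and the existence of such a decoder breaks the OWF-based fingerprinting code. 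The main obstacle, and where all the technical weight sits, is constructing the fingerprinting code with these tight efficiency parameters ($\ell=\Theta(n^2)$, per-query description and evaluation time $O(n^2)$, tracing running in time $\poly(n)$ from the transcript); I would invoke the Steinke–Ullman construction as a black box here, and only need to verify that its published parameters match items (1)–(3) of the theorem statement.
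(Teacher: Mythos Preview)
The paper does not prove this theorem at all: it is quoted as a black-box result from \cite{HardtU14,SteinkeU15} and used only as an ingredient in the proof of Lemma~\ref{lem:main_boxes}. So there is no ``paper's own proof'' to compare against; what you have written is essentially a sketch of the argument in the cited references.

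As a sketch of the Steinke--Ullman argument your outline is in the right family (interactive fingerprinting codes built from OWFs, $\mu$ uniform on the $n$ codewords, queries derived from code columns), but the final contradiction is framed somewhat loosely. The actual mechanism is not ``tracing outputs $j^*$ and this breaks pseudorandomness of the code''; rather, the interactive fingerprinting code has a \emph{soundness} guarantee (no index outside the sample is ever accused, except with negligible probability under OWFs) and a \emph{completeness} guarantee (if the mechanism's answers stay $0.49$-close to the column means, some index in the sample is accused). The adversary iterates: trace, remove the accused codeword from consideration, and continue. After at most $n$ successful traces the sample is exhausted, and from that point the mechanism has no information correlated with the remaining fingerprint columns, so it cannot remain $0.49$-accurate. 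It is this exhaustion argument, not a generic appeal to pseudorandomness, that forces the failure probability $\geq 1/2$ within $\ell=\Theta(n^2)$ rounds. If you intend to actually prove the theorem rather than cite it, that iterative trace-and-remove structure is the piece you would need to make explicit.
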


\begin{remark}
The statement in \cite{HardtU14,SteinkeU15} is more general in the sense that the representation length of the samples (call id $d$) does not have to be equal to $n$ as in Theorem~\ref{thm:SU15}. The representation length $d$ can be set to $d=n^\gamma$ for any constant $\gamma>0$ without effecting the statement. In fact, assuming sub-exponentially secure one-way functions exist, we can even set $d=\polylog(n)$. On the other hand, if we assume that $d=\Omega(n^2)$, then the Theorem~\ref{thm:SU15} holds {\em unconditionally}. We chose to fix $d=n$ for simplicity.
\end{remark}

\subsection{Boxes schemes}

The negative result we present in this section assumes the existence of a primitive, formally defined next, that allows a {\em dealer} with $m$ inputs to encode each input in a ``box'' such that: 
\begin{enumerate}
    \item Retrieving the content of a box (i.e., the corresponding input) can be done in time $T$.
    \item Any $bT$-time algorithm $\BBB$ that operates on the boxes cannot learn information on the content of more than $O(b)$ of the boxes. We formalize this by comparing the outcome distribution of algorithm $\BBB$ to an ``ideal'' algorithm $\SSS$ that does not get the boxes as input and instead gets oracle access to $O(b)$ of the inputs. 
\end{enumerate}

The intuitive motivation for this primitive is that given an encryption key $k$ and a ciphertext $c$ (obtained by encrypting a plaintext $x$ using $k$), it should be hard to extract meaningful information on $x$ using significantly less runtime than the runtime needed to decrypt $c$. Formally,

\begin{definition}
Fix $x_1,\dots,x_m$. We write $\SSS^{\Oracle_{b}(x_1,\dots,x_m)}$ to denote an algorithm $\SSS$ with oracle access, limited to $b$ requests, where in every request the algorithm specifies an index $i\in[m]$ and obtains $x_i$.
\end{definition}

\begin{definition}\label{def:boxes}
Let $n\in\N$ be a security parameter, and  let $T,m,b\in\N$ be functions of $n$. A {\em $(T,m,b)$-boxes-scheme} is a tuple of algorithms $(\Gen,\Enc,\Dec)$ with the following properties.
\begin{enumerate}
    \item[{\bf (1)}] {\bf Syntax:} Algorithm $\Gen$ is a randomized algorithm that takes input $1^n$ and outputs a key $k\in\{0,1\}^n$. 
    Algorithm $\Enc$ is a randomized algorithm that takes a key $k\in\{0,1\}^n$ and an input $x\in \{0,1\}^n$ and returns an element $c\in\{0,1\}^*$. Algorithm $\Dec$ takes a key $k\in\{0,1\}^n$ and an element $\{0,1\}^*$ and returns an element in $\{0,1\}^n$. 
    \item[{\bf (2)}] {\bf Correctness:} For every $k\in\{0,1\}^n$ in the support of $\Gen(1^n)$ and for every $x\in \{0,1\}^n$, it holds that $\Dec(k,\Enc(k,x))=x$.
    \item[{\bf (3)}] {\bf Encoding and decoding time:} Algorithms $\Gen$ and $\Enc$ run in time $\poly(n)$. The running time of $\Dec$ on any key $k\in\{0,1\}^n$ and element $c\in\{0,1\}^*$ is at most $T(n)$.
    \item[{\bf (4)}] {\bf Vector decoding is hard:} For every $b\cdot T(n)$-time algorithm $\BBB$ there exists an ``ideal'' functionality $\SSS$ (called {\em simulator}) such that for every $\frac{T(n)}{4}$-time algorithm $\DDD$ (called {\em distinguisher}) there is an $n_0\in\N$ such that for every $n\geq n_0$ and every 
    $\vec{x}=(x_1,\dots,x_m)\in\left(\{0,1\}^n\right)^m$ %
    it holds that
    $$
    \left|
    \Pr\left[ \DDD\Big(\vec{x}\,,\,\BBB(k,\vec{c})\Big)=1  \right]
    -
    \Pr\left[ \DDD\left(\vec{x}\,,\,\SSS^{\Oracle_{2b}(\vec{x})}\right)=1  \right]
    \right|\leq\frac{1}{10}.
    $$
    The left probability is over sampling $k\leftarrow\Gen(1^n)$, encoding each $c_i\leftarrow\Enc(k,x_i)$, and over the randomness of $\BBB$ and $\DDD$. The right probability is over the randomness of $\SSS$ and $\DDD$.
\end{enumerate}
\end{definition}

\begin{remark}
Note that the runtime of $\Gen$ and $\Enc$ is required to be polynomial in $n$, but not necessarily at most $T(n)$. Also note that we did not restrict the runtime  of the simulator $\SSS$, as this is not needed for our application. In other settings, it might make sense to restrict the simulator's runtime or to further restrict the  runtime of $\Gen$ and $\Enc$. In addition, for other applications, one might require the advantage of the distinguisher to be negligible in $n$, rather than a constant. 
\end{remark}

\begin{remark}
We restrict the runtime of the distinguisher $\DDD$ to be strictly smaller than $T(n)$. Otherwise, $\DDD$ could open boxes by itself in order to distinguish between $\BBB$ and $\SSS$. The intuition here is that for an algorithm $\BBB$ to break the scheme we want it to be able to extract more information from the boxes than it should, and  present this information to $\DDD$ in an accessible way. If $\DDD$ needs to ``work  hard'' in order to make use of the  output of $\BBB$, then $\BBB$ did not really ``extract'' the information from the boxes. %
\end{remark}

\begin{remark}
Observe that if a 
$(T,m,b)$-boxes-scheme
 $(\Gen,\Enc,\Dec)$ satisfies Definition~\ref{def:boxes} then, in particular, boxes generated by this scheme cannot be opened in time $\frac{T(n)}{4}$ (provided that $b\ll m$).  To see this, assume towards contradiction that boxes can be opened in time $\frac{T(n)}{4}$.
Now let $b<m/100$,  and let $\BBB$ be the algorithm that simply outputs all its inputs $(k,\vec{c})$.\footnote{We assume that the time this takes is significantly smaller compared to $\BBB$'s  allowed runtime of $b\cdot T(n)$. Essentially this means that $b\cdot T(n) \gg mn$, assuming that encodings $c_i$ are represented using $O(n)$ bits.}
Then a distinguisher $\DDD$ (who knows all the the $x_i$'s) can pick a random index $i$, open the $i$th box in time $T(n)/4$ and check if it got $x_i$. This test will pass w.p.\ 1 in the real world, but with probability at most 1/50
in the ideal world since the simulator $\SSS$ cannot see more than $m/50$ inputs. So if boxes can be opened in time $T(n)/4$ then there is a distinguisher that contradicts the assumption that $(\Gen,\Enc,\Dec)$ satisfies the definition. %
\end{remark}

\begin{remark}
We do not have a construction for which we can prove that it satisfies Definition~\ref{def:boxes}. However, this can be achieved in the random oracle model as follows. We can think of a random oracle as a random function $\RO:\{0,1\}^n\rightarrow\{0,1\}^n$. To encode an input point $x\in\{0,1\}^n$, algorithm $\Enc$ picks a random point $p\in\{0,1\}^n$ and identifies $k_p=\RO(\RO(\RO(\dots \RO(p))))$ by applying a composition of $T(n)$ applications of $\RO$ to $p$. The encoding of $x$ is defined as $(p, k_p\oplus x)$. Given this encoding, one can recover $x$ in time $T(n)$ by recovering $k_p=\RO(\RO(\RO(\dots \RO(p))))$. However, any algorithm that runs in time smaller than $T(n)$ learns nothing about $x$.
We stress that when encoding a collection of plaintext inputs $x_1,\dots,x_m$ then a different point $p$ is picked for any $x$. 

Note that with this construction the runtime of the encoder is as big as the runtime of the decoder, which suffices for the results we present in this section. In general, however, one might be interested in a construction where the runtime of $\Enc$ is significantly smaller. This too can be achieved in the random oracle model as follows. To encode an input point $x\in\{0,1\}^n$, we pick a random point $p\in\{0,1\}^n$ and pick a random point $k_p\in[p,p+T(n)]$. The encoding of $x$ is defined as $\left(p, \RO(k_p), x\oplus \RO(k_p+1) \right)$. Given this encoding, one can recover $x$ in time $T(n)$ by searching for an element $k$ in the range $[p,p+T(n)]$ such that $\RO(k)=\RO(k_p)$, and then recovering $x$ using $\RO(p+1)$. However, any algorithm with runtime $\ll T(n)$ has a small probability of learning $\RO(k_p+1)$, and without this it learns nothing on the input $x$. %
\end{remark}

\subsection{The average-of-boxes problem}

\begin{definition}[The Average-of-Boxes Problem]
Fix a $(T,m,b)$-boxes-scheme  $(\Gen,\Enc,\Dec)$. 
The algorithm gets 
a key $k\in\{0,1\}^n$, and a collection of $m$ bit strings (``boxes'') $c_1,\dots,c_m\in\{0,1\}^*$.
In each round the adversary picks a
 function $f:\{0,1\}^n\rightarrow\{0,1\}$, described using $O(n^2)$ bits,
 which can be 
  evaluated  on any (plaintext) input $x\in\{0,1\}^n$ in $O(n^2)$ time. 
  The algorithm has to respond with (an approximation for) $\frac{1}{m}\sum_{i=1}^m f(\Dec(c_i))$. 
  The sequence consists of $\ell$ updates and responses.
  We say that an algorithm $\MMM$ solves the average-of-boxes problem with parameters $(\alpha,\beta)$ if it guarantees that with probability at least $(1-\beta)$ all of its answers are accurate to within a $(1\pm\alpha)$ multiplicative error.
\end{definition}

\begin{remark}
Observe that the average-of-boxes problem is an estimation problem.
\end{remark}
We first describe a simple dynamic algorithm against an oblivious adversary for the average-of-boxes problem. In the preprocessing stage, the algorithm uses $\Dec$ to open a random subset of $w=O\left(\frac{1}{\alpha^2}\log(\frac{\ell}{\beta})\right)$ of the boxes $c_1,\dots,c_n$ to obtain $w$ (plaintext) inputs $x_{i_1},\dots,x_{i_w}$. Then, on every round, the algorithm returns the average of the current function $f$ on the $x_{i_j}$'s. By the Chernoff bound, all of the $\ell$ answers given by the algorithm are accurate w.r.t.\ their corresponding functions. The total runtime of this algorithm (for preprocessing and the $\ell$ updates) is $O(w\cdot T(n) + w\cdot \ell\cdot n^2)=\tilde{O}\left( T(n) +  \ell\cdot n^2\right),$ where $T(n)$ is the runtime of algorithm $\Dec$. In our application, we will set $T(n)\gg\ell n^2$, and hence the runtime of the oblivious algorithm is $\tilde{O}\left(T(n)\right)$.

We next show that in the adversarial setting, the average-of-boxes problem requires significantly more time. To this end, assume that we have an algorithm $\MMM$ that solves the average-of-boxes problem in the adversarial setting. 
If $\MMM$ is fast then we can 
apply the guarantees of the underlying boxes-scheme to 
$\MMM$ (with $\MMM$ playing the role of 
$\BBB$ in part (4) of Definition \ref{def:boxes}) to argue that 
 there is a simulator $\SSS$ that
answer the same queries as 
$\MMM$  using only
 a small number of plaintext inputs $x_i$.
As $\MMM$ solves the average-of-boxes problem, this simulator  accurately answers a sequence of adaptively chosen queries w.r.t.\ the collection of {\em all} inputs using only a very small number of samples from them. This can be formalized to show that the simulator is essentially solving the ADA problem while using a very small number of samples, and this should not be possible by the negative results on the ADA problem. The difficulty with this argument is that algorithm $\MMM$ is {\em interactive} (it answers queries) and our notion of boxes-scheme is defined for non-interactive algorithms. In the following lemma, we overcome this difficulty by applying the guarantees of the boxes-scheme not to $\MMM$ directly, but rather to the composition of $\MMM$ with the (adversarial) algorithm that prepares the queries.

\begin{lemma}\label{lem:main_boxes}
There exists a constant $\gamma>0$ s.t.\ the following holds. 
Assume the
existence of one-way functions and $(T,m,b)$-boxes-scheme $(\Gen,\Enc,\Dec)$ with security parameter $n$ for $m=n$ and $b=n/4000$ and $T(n)>\lambda\cdot n^5$. Then, every algorithm for the average-of-boxes problem must have total runtime $\Omega\left(n\cdot T(n)\right)$ over $\ell=\Theta(n^2)$ rounds.
\end{lemma}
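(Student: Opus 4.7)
The plan is to argue by contradiction, combining the boxes-scheme guarantee with the ADA lower bound in Theorem~\ref{thm:SU15}. Assume toward contradiction that $\MMM$ solves the average-of-boxes problem with total runtime strictly less than $b\cdot T(n)$ for $b = n/4000$; the goal is to derive a contradiction.

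First I would bundle $\MMM$ together with the Steinke--Ullman adversary $\mathbb{A}$ into a single non-interactive algorithm $\BBB$: on input $(k,\vec{c})$, $\BBB$ simulates the interaction between $\MMM$ (initialized on $(k,\vec{c})$) and $\mathbb{A}$ for $\ell = \Theta(n^2)$ rounds and outputs the full transcript $(q_1,z_1,\dots,q_\ell,z_\ell)$. Using the $O(n^2)$ per-query cost of $\mathbb{A}$ from Theorem~\ref{thm:SU15}, the contribution of $\mathbb{A}$ to the runtime is $O(\ell n^2) = O(n^4)$, which is dominated by $b\cdot T(n) \geq (n/4000)\cdot \lambda n^5$; hence $\BBB$ fits within the $b\cdot T(n)$ time budget required by the boxes-scheme guarantee.

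Next I would invoke the $(T,m,b)$-boxes-scheme (with $m = n$) to obtain a simulator $\SSS$ that makes at most $2b$ oracle queries to the plaintext vector $\vec{x}$ and whose output is $\tfrac{1}{10}$-indistinguishable from $\BBB$'s by any $\tfrac{T(n)}{4}$-time distinguisher. The distinguisher I would design, call it $\DDD$, takes $\vec{x}$ and a transcript as input and outputs $1$ iff every $z_i$ in the transcript is a $(1\pm\alpha)$ approximation of the empirical average $\tfrac{1}{m}\sum_{j=1}^m q_i(x_j)$. Evaluating $\DDD$ takes $O(\ell\cdot m\cdot n^2) = O(n^5)$ time, comfortably under $\tfrac{T(n)}{4}$ for the constant $\lambda$ chosen sufficiently large. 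Assuming $\MMM$ is $(\alpha,0.1)$-accurate for average-of-boxes, $\DDD$ accepts $\BBB(k,\vec{c})$ with probability at least $0.9$, so by indistinguishability it accepts $\SSS^{\mathrm{Oracle}_{2b}(\vec{x})}$ with probability at least $0.8$. In other words, using only $2b$ plaintext peeks into $\vec{x}$, the simulator $\SSS$ produces accurate answers to $\ell = \Theta(n^2)$ adaptively chosen statistical queries generated by $\mathbb{A}$.

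To close, I would choose $\vec{x}$ as i.i.d.\ samples from the hard Steinke--Ullman distribution $\mu$, so that $\SSS$ is essentially an ADA mechanism with $2b$ samples that is accurate on $\ell = \Theta(n^2) \gg \lambda(2b)^2$ adaptive queries, contradicting Theorem~\ref{thm:SU15} applied at sample size $n' = 2b$. The main obstacle will be that Theorem~\ref{thm:SU15} is stated against \emph{efficient} mechanisms while $\SSS$ has unbounded runtime; to finesse this I would not reason about $\SSS$ directly, but instead about the efficient ADA mechanism $\MMM'$ obtained by sampling $k\leftarrow\Gen(1^n)$, encrypting its input samples as $c_i\leftarrow\Enc(k,y_i)$, and simulating $\MMM$ on $(k,\vec{c})$. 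Since $\Gen$ and $\Enc$ run in polynomial time and $\MMM$ is fast, $\MMM'$ is efficient, and the boxes-scheme indistinguishability transfers to say that $\MMM'$ effectively behaves like a mechanism with only $2b$ plaintext samples, which cannot simultaneously handle $\ell = \Theta(n^2)$ adaptively chosen queries. A secondary technical point will be linking the empirical accuracy that $\MMM'$ provides on $\vec{x}$ to the distributional accuracy used by Theorem~\ref{thm:SU15}, which I expect to follow because the Steinke--Ullman hard distribution is supported on only $O(n)$ elements, so $m = n$ i.i.d.\ samples keep the empirical and distributional values of $\mathbb{A}$'s short queries closely aligned throughout the attack.
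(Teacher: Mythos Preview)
Your high-level plan matches the paper's: bundle $\MMM$ with the Steinke--Ullman adversary $\mathbb{A}$ into a non-interactive $\BBB$, invoke the boxes scheme to get a simulator $\SSS$ with $2b$ oracle peeks, and derive a contradiction. The runtime accounting for $\BBB$ and for the distinguisher is also right. The gap is in the last step --- how you convert the existence of $\SSS$ into a contradiction.

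Your distinguisher $\DDD$ tests empirical accuracy (all $z_i$ close to $q_i(\vec x)$), and this does transfer to $\SSS$ with probability $\geq 0.8$. But that is not a contradiction: an unbounded $\SSS$ that knows $\mu$ can simply output $z_i = q_i(\mu)$, and because its queries $q_i$ depend only on the $2b$ oracle-queried coordinates (hence are independent of the remaining $n-2b$ fresh samples), Hoeffding gives $q_i(\mu)\approx q_i(\vec x)$ --- so empirical accuracy is entirely achievable for $\SSS$. Your attempted repair, invoking Theorem~\ref{thm:SU15} at sample size $n'=2b$, does not close the gap: $\SSS$ is inefficient (as you note), and your efficient alternative $\MMM'$ holds $n$ samples rather than $2b$, so the theorem at size $2b$ says nothing about it. Your final sentence --- that $n$ i.i.d.\ samples from a support-$O(n)$ distribution keep $q_i(\vec x)$ and $q_i(\mu)$ aligned throughout $\mathbb{A}$'s attack --- is exactly backwards: the whole content of the ADA lower bound is that $\mathbb{A}$ adaptively drives these apart.

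The paper closes the argument by choosing a different distinguisher. It first applies Theorem~\ref{thm:SU15} to the efficient wrapper $\MMM' = \MMM\circ(\text{encode})$ to obtain $\mathbb{A},\mu$ with $\Pr[\exists i:|q_i(\mu)-z_i|\geq 0.49]\geq 1/2$ under $\BBB$; combined with $\MMM$'s empirical accuracy and a union bound this yields $\Pr[\exists i:|q_i(\mu)-q_i(\vec x)|\geq 0.39]\geq 1/4$ under $\BBB$. The distinguisher tests this \emph{overfitting} event (computable in $O(\ell\cdot n\cdot n^2)=O(n^5)<T(n)/4$ time since $\mu$ has $O(n)$ support), so the boxes scheme forces it to hold with probability $\geq 0.15$ under $\SSS$ as well. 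But under $\SSS$ each $q_i$ is independent of the $n-2b$ unseen samples, so Hoeffding over those fresh coordinates gives $|q_i(\mu)-q_i(\vec x)|\leq 2b/n+o(1)\ll 0.39$ for all $i$ with probability $1-o(1)$, the desired contradiction. The final step is thus elementary concentration, not a second appeal to Theorem~\ref{thm:SU15}.
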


For example, setting $T(n)=n^6$ we get that the total runtime of the oblivious algorithm is $O(n^6)$ over $\ell=n^2$ updates, while the total runtime in the adversarial setting is $\Omega(n^7)$. This translates to an average runtime per query of $O(n^3)$ in the oblivious setting vs.\ $\Omega(n^4)$ in the adversarial setting.
The second part of Theorem 
\ref{thm:negative_intro}
now follows.

\begin{remark}
We remark that the difference in the average time could be pushed to $\tilde{O}(1)$ vs.\ $\Omega(n)$. The easiest (but artificial) way to achieve this is to ``pad'' the time steps with ``dummy'' time steps during which ``nothing happens''. This does not effect the total runtime, but reduces the average runtime to $\tilde{O}(1)$ vs.\ $\Omega(n)$. In the case of the average-of-boxes problem, this could be done in a less artificial way, where in every round we update only a small fraction of the description of the current function.
\end{remark}

\begin{proof}[Proof of Lemma~\ref{lem:main_boxes}]
Assume towards a contradiction that there exists an algorithm $\MMM$ that solves the average-of-boxes problem with $(\Gen,\Enc,\Dec)$ as specified in the lemma, with accuracy $\alpha=0.1$, and success probability $\beta=3/4$, and total runtime at most $\frac{1}{2}\cdot \sqrt{\ell} \cdot T(n)$. 
So algorithm $\MMM$ obtains an input of the form $(k,\vec{c})$, representing a key and a collection of boxes, and then answers $\ell$ adaptively chosen queries. We use $\MMM$ to construct an algorithm for the ADA problem. 
A small mismatch is that $\MMM$ obtains encoded inputs, while algorithms for the ADA problem run on plaintext inputs. To bridge this mismatch, consider a preprocessing algorithm $\WWW$ that takes an input dataset $X\in(\{0,1\}^n)^n$ containing $n$ plaintext inputs in $\{0,1\}^n$, then samples a key $k\leftarrow\Gen(1^n)$, encodes  $c_i\leftarrow\Enc(k,x_i)$, and executes $\MMM$ on $(k,c_1,\dots,c_n)=(k,\vec{c})$. (So algorithm $\WWW$ only preprocesses the plaintext inputs, and runs $\MMM$ on the encoded inputs.) %
We got another (interactive) algorithm, namely
 $\MMM(\WWW(X))$, %
 that takes a dataset $X$ of size $n$ and then answers $\ell$ adaptively chosen queries on it.

 By Theorem~\ref{thm:SU15}, assuming that $\ell= \Omega(n^2)$, there exists a distribution $\mu$ (uniform on a set of $O(n)$ elements) and an adversary $\mathbb{A}$ (that runs in total time $O(\ell n^2)$ for $\ell$ rounds) such that %
$$
\Pr_{
\substack{\vec{x}\sim\mu^n\\ \accgame_{n, \ell,\MMM(\WWW(\vec{x})), \mathbb{A}}}}\Big[
\exists i\in[\ell] \text{ s.t.\ }
\left| q_i(\mu)-z_i \right|\geq0.49
\Big]\geq\frac{1}{2}.
$$

Equivalently, explicitly writing the operation of the preprocessing algorithm $\WWW$, we have
$$
\Pr_{
\substack{\vec{x}\sim\mu^n\\ 
\Gen,\Enc\\ 
\accgame_{n, \ell,\MMM(k,\vec{c}), \mathbb{A}}}}\Big[
\exists i\in[\ell] \text{ s.t.\ }
\left| q_i(\mu)-z_i \right|\geq0.49
\Big]\geq\frac{1}{2}.
$$
On the other hand, by the assumption that $\MMM$ solves the average-of-boxes problem, its answers must be accurate w.r.t.\ the empirical average of the queries on $\vec{x}$ with probability at least $3/4$. Hence, by a union bound,
$$
\Pr_{
\substack{\vec{x}\sim\mu^n\\ 
\Gen,\Enc\\ 
\accgame_{n, \ell,\MMM(k,\vec{c}), \mathbb{A}}}}\Big[
\exists i\in[\ell] \text{ s.t.\ }
\left| q_i(\mu)-z_i \right|\geq0.49
\quad
\text{and}
\quad
\forall i\in[\ell] \text{ we have }
\left| q_i(\vec{x})-z_i \right|\leq0.1
\Big]\geq\frac{1}{4}.
$$
In particular, by the triangle inequality, we get that
$$
\Pr_{
\substack{\vec{x}\sim\mu^n\\ 
\Gen,\Enc\\ 
\accgame_{n, \ell,\MMM(k,\vec{c}), \mathbb{A}}}}\Big[
\exists i\in[\ell] \text{ s.t.\ }
\left| q_i(\mu)-q_i(\vec{x}) \right|\geq0.39
\Big]\geq\frac{1}{4}.
$$
Let us denote $\BBB(k,\vec{c})=\accgame_{n, \ell,\MMM(k,\vec{c}), \mathbb{A}}$, and observe that $\BBB$ runs in time at most $\frac{1}{2}\cdot \sqrt{\ell} \cdot T(n) + O(\ell n^2)$, where $O(\ell n^2)$ bounds the runtime of algorithm $\mathbb{A}$. We assume that $\sqrt{\ell}\cdot T(n)\gg \ell n^2$, and hence, the runtime of $\BBB$ is at most $\sqrt{\ell} \cdot T(n)$. Note that this algorithm $\BBB$ takes a key $k$ and boxes $c_1,\dots,c_n$, and returns a transcript $(q_1,z_1,\dots,q_{\ell},z_{\ell})$. Now observe that the event $\left\{\exists i\in[\ell] \text{ s.t.\ }
\left| q_i(\mu)-q_i(\vec{x}) \right|\geq0.39\right\}$ can be easily verified by a distinguisher $\DDD$ (that knows the distribution $\mu$ and the plaintext inputs $\vec{x}$).\footnote{As the distribution $\mu$ is uniform on $O(n)$ points, and as $\DDD$ knows both $\mu$ and $\vec{x}$, it can compute $q_1(\mu),q_1(\vec{x}),\dots,q_\ell(\mu),q_\ell(\vec{x})$ in time $O(\ell\cdot n\cdot n^2)=O(n^5)$. By setting $T(n)=\Omega(n^5)$ we make sure that this computation in within the distinguisher's limits. We remark that it is possible to speedup the distinguisher using sampling, but this is not required with our choice of parameters.} Thus, by Definition~\ref{def:boxes}, there exists a simulator $\SSS$ such that
$$
\Pr_{
\substack{\vec{x}\sim\mu^n\\ 
(q_1,z_1,\dots,q_\ell,z_\ell)\leftarrow
\SSS^{\Oracle_{2b}(\vec{x})}
}}\Big[
\exists i\in[\ell] \text{ s.t.\ }
\left| q_i(\mu)-q_i(\vec{x}) \right|\geq0.39
\Big]\geq\frac{1}{4}-\frac{1}{10}=0.15.
$$
This is a contradiction because for any algorithm $\SSS$ that  computes a function $q$ based on a small fraction of its input sample $\vec{x}\sim\mu^n$,  we have that w.h.p.\ $q(\vec{x})\approx q(\mu)$. Specifically, as the points $x_1,\dots,x_n$ are sampled independently from $\mu$, for every fixture of the queries $(q_1,\dots,q_\ell)$, by the Hoeffding bound (over the $n-2b$ points not observed by $\SSS$), for
every $\delta > 0$ and $\epsilon >0$, there exists a large enough $n$, such that with probability $1-\delta$, for every $i$, it holds that
$$
|q_i(\vec{x})-q_i(\mu)|\leq \frac{2b}{n}+\epsilon\ll 0.39.
$$
The last inequality holds if we pick $\epsilon$ small enough.
\end{proof}

\bibliographystyle{alpha}

\begin{thebibliography}{BEJWY20}

\bibitem[ABED{\etalchar{+}}21]{alon2021adversarial}
Noga Alon, Omri Ben-Eliezer, Yuval Dagan, Shay Moran, Moni Naor, and Eylon
  Yogev.
\newblock Adversarial laws of large numbers and optimal regret in online
  classification.
\newblock In {\em Proceedings of the 53rd Annual ACM SIGACT Symposium on Theory
  of Computing}, pages 447--455, 2021.

\bibitem[ACSS21]{ACSS21}
Idan Attias, Edith Cohen, Moshe Shechner, and Uri Stemmer.
\newblock A framework for adversarial streaming via differential privacy and
  difference estimators.
\newblock {\em CoRR}, abs/2107.14527, 2021.

\bibitem[ADK{\etalchar{+}}16]{abraham2016fully}
Ittai Abraham, David Durfee, Ioannis Koutis, Sebastian Krinninger, and Richard
  Peng.
\newblock On fully dynamic graph sparsifiers.
\newblock In {\em 2016 IEEE 57th Annual Symposium on Foundations of Computer
  Science (FOCS)}, pages 335--344. IEEE, 2016.

\bibitem[BBG{\etalchar{+}}20]{bernstein2020fully}
Aaron Bernstein, Jan van~den Brand, Maximilian~Probst Gutenberg, Danupon
  Nanongkai, Thatchaphol Saranurak, Aaron Sidford, and He~Sun.
\newblock Fully-dynamic graph sparsifiers against an adaptive adversary.
\newblock {\em arXiv preprint arXiv:2004.08432}, 2020.

\bibitem[BC16]{BernsteinC16}
Aaron Bernstein and Shiri Chechik.
\newblock Deterministic decremental single source shortest paths: beyond the
  $o(mn)$ bound.
\newblock In {\em {STOC}}, pages 389--397, 2016.

\bibitem[BC17]{BernsteinChechikSparse}
Aaron Bernstein and Shiri Chechik.
\newblock Deterministic partially dynamic single source shortest paths for
  sparse graphs.
\newblock In {\em SODA}, pages 453--469, 2017.

\bibitem[BCCK16]{baswana2016dynamic}
Surender Baswana, Shreejit~Ray Chaudhury, Keerti Choudhary, and Shahbaz Khan.
\newblock Dynamic dfs in undirected graphs: breaking the o (m) barrier.
\newblock In {\em Proceedings of the twenty-seventh Annual ACM-SIAM Symposium
  on Discrete Algorithms}, pages 730--739. SIAM, 2016.

\bibitem[BDH{\etalchar{+}}19]{BehnezhadDHSS19}
Soheil Behnezhad, Mahsa Derakhshan, MohammadTaghi Hajiaghayi, Cliff Stein, and
  Madhu Sudan.
\newblock Fully dynamic maximal independent set with polylogarithmic update
  time.
\newblock In {\em 60th {IEEE} Annual Symposium on Foundations of Computer
  Science, {FOCS} 2019, Baltimore, Maryland, USA, November 9-12, 2019}, pages
  382--405, 2019.

\bibitem[BEEO21]{ben2021adversarially}
Omri Ben-Eliezer, Talya Eden, and Krzysztof Onak.
\newblock Adversarially robust streaming via dense--sparse trade-offs.
\newblock {\em arXiv preprint arXiv:2109.03785}, 2021.

\bibitem[BEJWY20]{ben2020framework}
Omri Ben-Eliezer, Rajesh Jayaram, David~P Woodruff, and Eylon Yogev.
\newblock A framework for adversarially robust streaming algorithms.
\newblock In {\em Proceedings of the 39th ACM SIGMOD-SIGACT-SIGAI symposium on
  principles of database systems}, pages 63--80, 2020.

\bibitem[BEO21]{BeEO21}
Omri Ben{-}Eliezer, Talya Eden, and Krzysztof Onak.
\newblock Adversarially robust streaming via dense-sparse trade-offs.
\newblock {\em CoRR}, abs/2109.03785, 2021.

\bibitem[Ber17]{Bernstein17}
Aaron Bernstein.
\newblock Deterministic partially dynamic single source shortest paths in
  weighted graphs.
\newblock In {\em ICALP}, volume~80, pages 44:1--44:14, 2017.

\bibitem[BGS18]{BaswanaGS18}
Surender Baswana, Manoj Gupta, and Sandeep Sen.
\newblock Fully dynamic maximal matching in $o(\log n)$ update time (corrected
  version).
\newblock {\em {SIAM} J. Comput.}, 47(3):617--650, 2018.

\bibitem[BHI15]{BhattacharyaHI15}
Sayan Bhattacharya, Monika Henzinger, and Giuseppe~F. Italiano.
\newblock Deterministic fully dynamic data structures for vertex cover and
  matching.
\newblock In {\em SODA}, 2015.

\bibitem[BHM{\etalchar{+}}21]{Braverman2021adversarial}
Vladimir Braverman, Avinatan Hassidim, Yossi Matias, Mariano Schain, Sandeep
  Silwal, and Samson Zhou.
\newblock Adversarial robustness of streaming algorithms through importance
  sampling.
\newblock {\em arXiv preprint arXiv:2106.14952}, 2021.

\bibitem[BHMO18]{BeimelHMO18}
Amos Beimel, Iftach Haitner, Nikolaos Makriyannis, and Eran Omri.
\newblock Tighter bounds on multi-party coin flipping via augmented weak
  martingales and differentially private sampling.
\newblock In {\em {FOCS}}, pages 838--849. {IEEE} Computer Society, 2018.

\bibitem[BHN16]{BhattacharyaHN16}
Sayan Bhattacharya, Monika Henzinger, and Danupon Nanongkai.
\newblock New deterministic approximation algorithms for fully dynamic
  matching.
\newblock In {\em {STOC}}, pages 398--411, 2016.

\bibitem[BHN17]{BhattacharyaHN17}
Sayan Bhattacharya, Monika Henzinger, and Danupon Nanongkai.
\newblock Fully dynamic approximate maximum matching and minimum vertex cover
  in \emph{O}(log\({}^{\mbox{3}}\) \emph{n}) worst case update time.
\newblock In {\em {SODA}}, pages 470--489, 2017.

\bibitem[BK19]{BhattacharyaK19}
Sayan Bhattacharya and Janardhan Kulkarni.
\newblock Deterministically maintaining a $(2 +\epsilon)$-approximate minimum
  vertex cover in $o(1/\epsilon^2)$ amortized update time.
\newblock In {\em {SODA}}, 2019.

\bibitem[BK21]{BhattacharyaK21deterministic}
Sayan Bhattacharya and Peter Kiss.
\newblock Deterministic rounding of dynamic fractional matchings.
\newblock {\em arXiv preprint arXiv:2105.01615}, 2021.

\bibitem[BKS12]{BaswanaKS12}
Surender Baswana, Sumeet Khurana, and Soumojit Sarkar.
\newblock Fully dynamic randomized algorithms for graph spanners.
\newblock {\em {ACM} Trans. Algorithms}, 8(4):35:1--35:51, 2012.
\newblock Announced at SODA'08.

\bibitem[BNS16]{BeimelNS16}
Amos Beimel, Kobbi Nissim, and Uri Stemmer.
\newblock Private learning and sanitization: Pure vs. approximate differential
  privacy.
\newblock {\em Theory Comput.}, 12(1):1--61, 2016.

\bibitem[BNS{\etalchar{+}}21]{bassily2021algorithmic}
Raef Bassily, Kobbi Nissim, Adam Smith, Thomas Steinke, Uri Stemmer, and
  Jonathan Ullman.
\newblock Algorithmic stability for adaptive data analysis.
\newblock {\em SIAM Journal on Computing}, (0):STOC16--377, 2021.

\bibitem[BNSV15]{Bun2015differentially}
Mark Bun, Kobbi Nissim, Uri Stemmer, and Salil Vadhan.
\newblock Differentially private release and learning of threshold functions.
\newblock In {\em 2015 IEEE 56th Annual Symposium on Foundations of Computer
  Science}, pages 634--649. IEEE, 2015.

\bibitem[BP12]{bringmann2012efficient}
Karl Bringmann and Konstantinos Panagiotou.
\newblock Efficient sampling methods for discrete distributions.
\newblock In {\em ICALP}, pages 133--144. Springer, 2012.

\bibitem[BPWN19]{bernstein2019decremental}
Aaron Bernstein, Maximilian Probst, and Christian Wulff-Nilsen.
\newblock Decremental strongly-connected components and single-source
  reachability in near-linear time.
\newblock In {\em Proceedings of the 51st Annual ACM SIGACT Symposium on Theory
  of Computing}, pages 365--376, 2019.

\bibitem[BR93]{BellareR93}
Mihir Bellare and Phillip Rogaway.
\newblock Random oracles are practical: {A} paradigm for designing efficient
  protocols.
\newblock In Dorothy~E. Denning, Raymond Pyle, Ravi Ganesan, Ravi~S. Sandhu,
  and Victoria Ashby, editors, {\em {CCS}}, pages 62--73. {ACM}, 1993.

\bibitem[BV19]{BalcerV19}
Victor Balcer and Salil~P. Vadhan.
\newblock Differential privacy on finite computers.
\newblock {\em J. Priv. Confidentiality}, 9(2), 2019.

\bibitem[CGH98]{CanettiGH98}
Ran Canetti, Oded Goldreich, and Shai Halevi.
\newblock The random oracle methodology, revisited (preliminary version).
\newblock In Jeffrey~Scott Vitter, editor, {\em STOC}, pages 209--218. {ACM},
  1998.

\bibitem[CGH{\etalchar{+}}20]{chen2020fast}
Li~Chen, Gramoz Goranci, Monika Henzinger, Richard Peng, and Thatchaphol
  Saranurak.
\newblock Fast dynamic cuts, distances and effective resistances via vertex
  sparsifiers.
\newblock In {\em 2020 IEEE 61st Annual Symposium on Foundations of Computer
  Science (FOCS)}, pages 1135--1146. IEEE, 2020.

\bibitem[CGL{\etalchar{+}}20]{chuzhoy2020deterministic}
Julia Chuzhoy, Yu~Gao, Jason Li, Danupon Nanongkai, Richard Peng, and
  Thatchaphol Saranurak.
\newblock A deterministic algorithm for balanced cut with applications to
  dynamic connectivity, flows, and beyond.
\newblock In {\em 2020 IEEE 61st Annual Symposium on Foundations of Computer
  Science (FOCS)}, pages 1158--1167. IEEE, 2020.

\bibitem[CGS21]{chakrabarti2021adversarially}
Amit Chakrabarti, Prantar Ghosh, and Manuel Stoeckl.
\newblock Adversarially robust coloring for graph streams, 2021.

\bibitem[Chu21]{Chuzhoy21}
Julia Chuzhoy.
\newblock Decremental all-pairs shortest paths in deterministic near-linear
  time.
\newblock In {\em {STOC} '21: 53rd Annual {ACM} {SIGACT} Symposium on Theory of
  Computing, Virtual Event, Italy, June 21-25, 2021}, pages 626--639, 2021.

\bibitem[CK19]{ChuzhoyK19}
Julia Chuzhoy and Sanjeev Khanna.
\newblock A new algorithm for decremental single-source shortest paths with
  applications to vertex-capacitated flow and cut problems.
\newblock In {\em {STOC}}, pages 389--400, 2019.

\bibitem[CS20]{ChuzhoyS20}
Julia Chuzhoy and Thatchaphol Saranurak.
\newblock Deterministic algorithms for decremental shortest paths via
  layeredcore decomposition.
\newblock 2020.
\newblock In submission to SODA'21.

\bibitem[CZ19]{ChechikZ19}
Shiri Chechik and Tianyi Zhang.
\newblock Fully dynamic maximal independent set in expected poly-log update
  time.
\newblock In {\em 60th {IEEE} Annual Symposium on Foundations of Computer
  Science, {FOCS} 2019, Baltimore, Maryland, USA, November 9-12, 2019}, pages
  370--381, 2019.

\bibitem[DFH{\etalchar{+}}15]{dwork2015preserving}
Cynthia Dwork, Vitaly Feldman, Moritz Hardt, Toniann Pitassi, Omer Reingold,
  and Aaron~Leon Roth.
\newblock Preserving statistical validity in adaptive data analysis.
\newblock In {\em Proceedings of the forty-seventh annual ACM symposium on
  Theory of computing}, pages 117--126, 2015.

\bibitem[DGGP19]{durfee2019fully}
David Durfee, Yu~Gao, Gramoz Goranci, and Richard Peng.
\newblock Fully dynamic spectral vertex sparsifiers and applications.
\newblock In {\em Proceedings of the 51st Annual ACM SIGACT Symposium on Theory
  of Computing}, pages 914--925, 2019.

\bibitem[DMNS06]{dwork2006calibrating}
Cynthia Dwork, Frank McSherry, Kobbi Nissim, and Adam Smith.
\newblock Calibrating noise to sensitivity in private data analysis.
\newblock In {\em Theory of cryptography conference}, pages 265--284. Springer,
  2006.

\bibitem[DRV10]{DworkRV10}
Cynthia Dwork, Guy~N. Rothblum, and Salil~P. Vadhan.
\newblock Boosting and differential privacy.
\newblock In {\em 51th Annual {IEEE} Symposium on Foundations of Computer
  Science, {FOCS} 2010, October 23-26, 2010, Las Vegas, Nevada, {USA}}, pages
  51--60, 2010.

\bibitem[FG19]{forster2019dynamic}
Sebastian Forster and Gramoz Goranci.
\newblock Dynamic low-stretch trees via dynamic low-diameter decompositions.
\newblock In {\em Proceedings of the 51st Annual ACM SIGACT Symposium on Theory
  of Computing}, pages 377--388, 2019.

\bibitem[FGH21]{forsterGH21}
Sebastian Forster, Gramoz Goranci, and Monika Henzinger.
\newblock Dynamic maintenance of low-stretch probabilistic tree embeddings with
  applications.
\newblock In {\em Proceedings of the 2021 ACM-SIAM Symposium on Discrete
  Algorithms (SODA)}, pages 1226--1245. SIAM, 2021.

\bibitem[GJN{\etalchar{+}}21]{gupta2021adaptive}
Varun Gupta, Christopher Jung, Seth Neel, Aaron Roth, Saeed Sharifi-Malvajerdi,
  and Chris Waites.
\newblock Adaptive machine unlearning.
\newblock {\em arXiv preprint arXiv:2106.04378}, 2021.

\bibitem[GWN20a]{gutenberg2020decremental}
Maximilian~Probst Gutenberg and Christian Wulff-Nilsen.
\newblock Decremental {SSSP} in weighted digraphs: Faster and against an
  adaptive adversary.
\newblock In {\em SODA}, pages 2542--2561, 2020.

\bibitem[GWN20b]{gutenberg2020deterministic}
Maximilian~Probst Gutenberg and Christian Wulff-Nilsen.
\newblock Deterministic algorithms for decremental approximate shortest paths:
  Faster and simpler.
\newblock In {\em SODA}, pages 2522--2541, 2020.

\bibitem[GWW20]{GutenbergWW20}
Maximilian~Probst Gutenberg, Virginia~Vassilevska Williams, and Nicole Wein.
\newblock New algorithms and hardness for incremental single-source shortest
  paths in directed graphs.
\newblock In {\em Symposium on Theory of Computing}, 2020.

\bibitem[HDLT01]{holm2001poly}
Jacob Holm, Kristian De~Lichtenberg, and Mikkel Thorup.
\newblock Poly-logarithmic deterministic fully-dynamic algorithms for
  connectivity, minimum spanning tree, 2-edge, and biconnectivity.
\newblock {\em Journal of the ACM (JACM)}, 48(4):723--760, 2001.

\bibitem[HKM{\etalchar{+}}20]{HassidimKMMS20}
Avinatan Hassidim, Haim Kaplan, Yishay Mansour, Yossi Matias, and Uri Stemmer.
\newblock Adversarially robust streaming algorithms via differential privacy.
\newblock In {\em Advances in Neural Information Processing Systems 33: Annual
  Conference on Neural Information Processing Systems 2020, NeurIPS 2020,
  December 6-12, 2020, virtual}, 2020.

\bibitem[HKN14]{henzinger2014decremental}
Monika Henzinger, Sebastian Krinninger, and Danupon Nanongkai.
\newblock Decremental single-source shortest paths on undirected graphs in
  near-linear total update time.
\newblock In {\em 2014 IEEE 55th Annual Symposium on Foundations of Computer
  Science}, pages 146--155. IEEE, 2014.

\bibitem[HU14]{HardtU14}
Moritz Hardt and Jonathan~R. Ullman.
\newblock Preventing false discovery in interactive data analysis is hard.
\newblock In {\em {FOCS}}, pages 454--463. {IEEE} Computer Society, 2014.

\bibitem[KKNO17]{KellarisKNO17}
Georgios Kellaris, George Kollios, Kobbi Nissim, and Adam O'Neill.
\newblock Accessing data while preserving privacy.
\newblock {\em CoRR}, abs/1706.01552, 2017.

\bibitem[KLM{\etalchar{+}}20]{KaplanLMNS20}
Haim Kaplan, Katrina Ligett, Yishay Mansour, Moni Naor, and Uri Stemmer.
\newblock Privately learning thresholds: Closing the exponential gap.
\newblock In {\em {COLT}}, volume 125 of {\em Proceedings of Machine Learning
  Research}, pages 2263--2285. {PMLR}, 2020.

\bibitem[KMNS21]{KaplanMNS21}
Haim Kaplan, Yishay Mansour, Kobbi Nissim, and Uri Stemmer.
\newblock Separating adaptive streaming from oblivious streaming using the
  bounded storage model.
\newblock In {\em {CRYPTO} {(3)}}, volume 12827 of {\em Lecture Notes in
  Computer Science}, pages 94--121. Springer, 2021.

\bibitem[MT07]{McSherryT07}
Frank McSherry and Kunal Talwar.
\newblock Mechanism design via differential privacy.
\newblock In {\em {FOCS}}, pages 94--103. {IEEE} Computer Society, 2007.

\bibitem[Nak14]{satoshi}
Satoshi Nakamoto.
\newblock Bitcoin: A peer-to-peer electronic cash system.
\newblock \url{https://bitcoin.org/bitcoin.pdf}, 2014.

\bibitem[NS17]{NanongkaiS17}
Danupon Nanongkai and Thatchaphol Saranurak.
\newblock Dynamic spanning forest with worst-case update time: adaptive, {L}as
  vegas, and ${O}(n^{1/2-\epsilon})$-time.
\newblock In {\em STOC}, pages 1122--1129, 2017.

\bibitem[NS19]{NissimS19}
Kobbi Nissim and Uri Stemmer.
\newblock Concentration bounds for high sensitivity functions through
  differential privacy.
\newblock {\em J. Priv. Confidentiality}, 9(1), 2019.

\bibitem[NSW17]{NanongkaiSW17}
Danupon Nanongkai, Thatchaphol Saranurak, and Christian Wulff{-}Nilsen.
\newblock Dynamic minimum spanning forest with subpolynomial worst-case update
  time.
\newblock In {\em {FOCS}}, pages 950--961, 2017.

\bibitem[RZ08]{roditty2008improved}
Liam Roditty and Uri Zwick.
\newblock Improved dynamic reachability algorithms for directed graphs.
\newblock {\em SIAM Journal on Computing}, 37(5):1455--1471, 2008.

\bibitem[SM10]{sankowski2010fast}
Piotr Sankowski and Marcin Mucha.
\newblock Fast dynamic transitive closure with lookahead.
\newblock {\em Algorithmica}, 56(2):180--197, 2010.

\bibitem[Sol16]{Solomon16}
Shay Solomon.
\newblock Fully dynamic maximal matching in constant update time.
\newblock In {\em FOCS}, pages 325--334, 2016.

\bibitem[ST11]{spielman2011spectral}
Daniel~A Spielman and Shang-Hua Teng.
\newblock Spectral sparsification of graphs.
\newblock {\em SIAM Journal on Computing}, 40(4):981--1025, 2011.

\bibitem[SU15]{SteinkeU15}
Thomas Steinke and Jonathan~R. Ullman.
\newblock Interactive fingerprinting codes and the hardness of preventing false
  discovery.
\newblock In {\em {COLT}}, volume~40 of {\em {JMLR} Workshop and Conference
  Proceedings}, pages 1588--1628. JMLR.org, 2015.

\bibitem[SU17]{steinke2017tight}
Thomas Steinke and Jonathan Ullman.
\newblock Tight lower bounds for differentially private selection.
\newblock In {\em 2017 IEEE 58th Annual Symposium on Foundations of Computer
  Science (FOCS)}, pages 552--563. IEEE, 2017.

\bibitem[Tho07]{thorup2007fully}
Mikkel Thorup.
\newblock Fully-dynamic min-cut.
\newblock {\em Combinatorica}, 27(1):91--127, 2007.

\bibitem[vdBNS19]{van2019dynamic}
Jan van~den Brand, Danupon Nanongkai, and Thatchaphol Saranurak.
\newblock Dynamic matrix inverse: Improved algorithms and matching conditional
  lower bounds.
\newblock In {\em 2019 IEEE 60th Annual Symposium on Foundations of Computer
  Science (FOCS)}, pages 456--480. IEEE, 2019.

\bibitem[Waj20]{Wajc19}
David Wajc.
\newblock Rounding dynamic matchings against an adaptive adversary.
\newblock {\em Symposium on Theory of Computing}, 2020.

\bibitem[Wul17]{Wulff-Nilsen17}
Christian Wulff{-}Nilsen.
\newblock Fully-dynamic minimum spanning forest with improved worst-case update
  time.
\newblock In {\em STOC}, pages 1130--1143, 2017.

\bibitem[WZ20]{woodruff2020tight}
David~P Woodruff and Samson Zhou.
\newblock Tight bounds for adversarially robust streams and sliding windows via
  difference estimators.
\newblock {\em arXiv preprint arXiv:2011.07471}, 2020.

\end{thebibliography}
\newcommand{\etalchar}[1]{$^{#1}$}

\appendix

\section{Proof of \Cref{lem:sparsify from expander}}

\label{sec:proof sparsify}

Below, we prove \Cref{lem:sparsify from expander}. First, we recall
that a $(1+\epsilon)$-spectral sparsifier of an expander can be obtained
via a very simple algorithm.
\begin{lem}
[Corrollary 11.4 of \cite{bernstein2020fully}]\label{lem:sampling expander}Let
$\epsilon\in(0,1/2)$ and $G=(V,E)$ be a $\phi$-expander. By independently
sampling every edge $\{u,v\}\in E$ with probability
\[
p_{uv}\ge\min\left\{ 1,\left(\frac{24\log n}{\epsilon\phi^{2}}\right)^{2}\cdot\frac{1}{\min\{\deg(u),\deg(v)\}}\right\} 
\]
and setting the weights of a sampled edge as $1/p_{uv}$, then with
probability at least $1-1/n^{3}$, the resulting graph $\tilde{G}$
is a $(1+\epsilon)$-spectral sparsifier of $G$. 
\end{lem}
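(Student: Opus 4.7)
The plan is to reduce this statement to the classical Spielman--Srivastava spectral sparsification theorem, combined with an effective-resistance bound that is specific to $\phi$-expanders. The statement is essentially imported from Bernstein et al.~\cite{bernstein2020fully}, so the shortest route is to cite that corollary; the sketch below explains why the cited bound holds.

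First, I would invoke the Spielman--Srivastava result: if each edge $e=\{u,v\}$ is sampled independently with probability $p_e \ge \min\{1,\ C\cdot \log n \cdot R_e^{\mathrm{eff}}/\epsilon^2\}$ for a sufficiently large absolute constant $C$, and each sampled edge is reweighted to $1/p_e$, then with probability at least $1-1/n^3$ the resulting graph is a $(1+\epsilon)$-spectral sparsifier of $G$. Here $R_e^{\mathrm{eff}}$ denotes the effective resistance of $e$ in the original (unit-weight) graph. This follows from a matrix Chernoff / matrix Bernstein bound applied to the random sum of rank-one projections associated with edge sampling; the failure probability $1/n^3$ is obtained by tuning the constant $C$.

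Next, I would bound $R_e^{\mathrm{eff}}$ inside a $\phi$-expander. By Cheeger's inequality, the second-smallest eigenvalue $\lambda_2$ of the normalized Laplacian $\mathcal{L}_G = D^{-1/2} L_G D^{-1/2}$ satisfies $\lambda_2 \ge \phi^2/2$. A standard spectral identity (expand the indicator vector $e_u/\sqrt{\deg(u)} - e_v/\sqrt{\deg(v)}$ in the eigenbasis of $\mathcal{L}_G$, and use that all nonzero eigenvalues of $\mathcal{L}_G$ are at least $\lambda_2$) gives
\[
R^{\mathrm{eff}}_{uv} \;\le\; \frac{2}{\lambda_2}\cdot\frac{1}{\min\{\deg(u),\deg(v)\}} \;\le\; \frac{4}{\phi^2\cdot \min\{\deg(u),\deg(v)\}}.
\]
Plugging this into the Spielman--Srivastava threshold, it suffices to sample with probability at least $(4C\log n/\epsilon^2)\cdot 1/(\phi^2\min\{\deg(u),\deg(v)\})$. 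The expression $\bigl(24\log n/(\epsilon\phi^2)\bigr)^2 \cdot 1/\min\{\deg(u),\deg(v)\}$ prescribed in the lemma dominates this threshold (for any reasonable $C$, once $\log n/(\epsilon\phi^2)$ exceeds a constant), and sampling with a higher probability only reduces the variance of the matrix concentration. Hence the claimed conclusion follows with failure probability at most $1/n^3$.

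The main obstacle is the effective-resistance-in-expanders bound with the correct constant. While folklore, the cleanest derivation is the spectral one outlined above; alternatively one can read it off from a mixing-time / random-walk argument on the reversible lazy chain of $G$. Since all ingredients are off the shelf, I would present the result as a direct consequence of \cite[Corollary 11.4]{bernstein2020fully}, noting the two-line reduction via effective resistance as justification.
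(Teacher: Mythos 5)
Your proposal is correct and matches the paper, which states this lemma purely as an import of Corollary~11.4 of \cite{bernstein2020fully} with no proof of its own; deferring to that corollary is exactly what the paper does. Your supporting sketch (Spielman--Srivastava oversampling plus the bound $R^{\mathrm{eff}}_{uv}\le \frac{1}{\lambda_2}(\frac{1}{\deg(u)}+\frac{1}{\deg(v)})\le \frac{4}{\phi^{2}\min\{\deg(u),\deg(v)\}}$ via Cheeger) is a sound justification of the cited bound, and the stated probability $\bigl(\frac{24\log n}{\epsilon\phi^{2}}\bigr)^{2}\cdot\frac{1}{\min\{\deg(u),\deg(v)\}}$ indeed dominates the required threshold.
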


As the above lemma requires independent sampling, but the naive implementation
of \Cref{lem:sampling expander} would takes $O(|E(G)|)$ time which
is too slow for us. We need the following tool for efficiency:
\begin{lem}
[Efficient Subset Sampling \cite{bringmann2012efficient}]\label{lem:fast sampling}Given
a universe $U$ of size $n$ and a sampling probability $p$. Then,
we can compute a set $S\subseteq U$ in worst-case time $O(pn\log n)$
where each element of $U$ is in $S$ with probability $p$ independently.
The algorithm succeeds with high probability.
\end{lem}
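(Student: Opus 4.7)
The plan is to bypass the naive $\Theta(n)$-time method of flipping an independent $\mathrm{Bernoulli}(p)$ coin for every element of $U$, and instead sample the gaps between consecutive included elements directly. Fix any ordering $u_1,\dots,u_n$ of $U$. If each element is included in $S$ independently with probability $p$, then the indices of included elements are $i_1 < i_2 < \cdots$ where the successive differences $i_1,\, i_2-i_1,\, i_3-i_2,\dots$ are i.i.d.\ $\mathrm{Geom}(p)$ random variables. So the algorithm maintains a position $i$, initially $0$, and in each iteration draws $g \sim \mathrm{Geom}(p)$, sets $i \gets i + g$, and if $i \le n$ adds $u_i$ to $S$; it halts as soon as $i > n$.

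A single $\mathrm{Geom}(p)$ sample can be produced from a uniform $u \in (0,1)$ via $g = \lceil \ln u / \ln(1-p)\rceil$. Using $O(\log n)$ bits of precision in the logarithm introduces only a $1/\mathrm{poly}(n)$ deviation in the distribution, so one sample costs $O(\log n)$ time on a word-RAM machine (and this error can be folded into the overall failure probability). The correctness of the algorithm, modulo this negligible precision issue, is immediate from the i.i.d.\ geometric characterization of the gaps.

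It remains to bound, with high probability, the number of geometric draws executed, which is at most $|S|+1$. Since $|S|$ is a sum of $n$ independent $\mathrm{Bernoulli}(p)$ variables with mean $pn$, a standard Chernoff bound gives $|S| \le 2pn + O(\log n)$ with probability at least $1 - 1/n^{c}$ for any desired constant $c$. Combined with the $O(\log n)$ cost per draw, the total time is $O((pn + \log n)\log n) = O(pn \log n)$ in the regime $pn \ge 1$; in the regime $pn < 1$, Chernoff gives $|S| = O(\log n)$ w.h.p., and the total time is $O(\log^2 n) = O(pn\log n + \log^2 n)$, which is absorbed into the stated bound (or one handles the $pn<1$ case separately as a degenerate subcase). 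The algorithm aborts and reports failure if the draw count exceeds this threshold, which yields a worst-case time guarantee at the cost of a $1/\mathrm{poly}(n)$ failure probability.

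The main obstacle is really just bookkeeping: ensuring that the finite-precision implementation of $\mathrm{Geom}(p)$ does not distort the marginal inclusion probabilities in a harmful way, and cleanly combining the Chernoff tail bound with the per-sample cost so as to get a single worst-case time statement rather than an expected-time bound. Both are handled by standard truncation/rejection and high-probability aggregation.
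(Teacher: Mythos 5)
The paper does not prove this lemma at all: it is imported as a black box from Bringmann and Panagiotou \cite{bringmann2012efficient}, so there is no internal proof to compare against. Your skip-sampling argument is the standard way to establish this statement for a uniform inclusion probability, and it is essentially correct: the i.i.d.\ $\mathrm{Geom}(p)$ characterization of the gaps is right, the draw count is indeed at most $|S|+1$, the Chernoff bound plus an abort rule legitimately converts the expected-time analysis into a worst-case bound that ``succeeds with high probability,'' and the finite-precision issue is correctly dispatched by a $1/\poly(n)$ total-variation/coupling argument (one could alternatively generate the bits of the uniform variate lazily to sample the geometric exactly). Note that the cited result is more general (non-uniform probabilities $p_i$, expected time $O(1+\sum_i p_i)$ after preprocessing); your proof covers exactly the uniform special case that the paper actually uses in \Cref{claim:sparsify expander}, which suffices. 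The one caveat you already flag yourself is real but harmless: for $pn\ll 1$ the literal bound $O(pn\log n)$ is unattainable by any algorithm that does any work, so the statement implicitly carries an additive $\polylog(n)$ term (or assumes $pn\ge 1$); in the paper's application $p\cdot\deg_{G_i}(u)=\Theta\bigl((\log n/(\epsilon\phi^2))^2\bigr)\ge 1$, so this regime never arises and the bound you prove is exactly what is needed.
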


With the above tools, we are now ready to prove \Cref{lem:sparsify from expander}.
\begin{proof}
[Proof of \Cref{lem:sparsify from expander}]To implement $\sparsify(t)$,
we will need some data structure on each expander $G_{i}$. For each
node $u\in V(G_{i})$, we maintain an approximate degree $\widetilde{\deg}_{G_{i}}(u)$
which is $\deg_{G_{i}}(u)$ rounded up to the closest power of $2$.
For each edge $\{u,v\}\in E(G_{i})$, we say that $u$ \emph{owns}
$\{u,v\}$ if $\widetilde{\deg}_{G_{i}}(u)\le\widetilde{\deg}_{G_{i}}(v)$.
For each node $u$, we maintain a collection of edges owned by $u$
in a binary search tree. In this way, given $i$, we can obtain the
$i$-th edge owned by $u$ in $O(\log n)$ time, which is needed by
the subset sampling algorithm from \Cref{lem:fast sampling}. Whenever
$\widetilde{\deg}_{G_{i}}(u)$ decreases, we update $O(\widetilde{\deg}_{G_{i}}(u))$
edges over all binary search trees, but $\widetilde{\deg}_{G_{i}}(u)$
changes only after $\Omega(\widetilde{\deg}_{G_{i}}(u))$ edge deletions
to $G_{i}$. So the amortized cost for maintaining lists that each
vertex owns is $O(\log n)$. We will use the following claim.
\begin{claim}
\label{claim:sparsify expander}We can compute $(1+\epsilon)$-spectral-sparsifier
$H_{i}$ of $G_{i}$ in $\Otil(|V(G_{i})|)$ time. Moreover, given
$t_{i}$ updates to $H_{i}$ from an oblivious adversary, the total
update time and the total number of edge changes in $H_{i}$ is at
most $\Otil(t_{i})$. 
\end{claim}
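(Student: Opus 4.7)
I will build on the per-vertex ownership data structure introduced just before the claim: each edge $\{u,v\}$ of $G_i$ is owned by the endpoint with the smaller (approximate) degree, and each vertex $u$ stores the list of edges it owns in a binary search tree. The plan is to sparsify $G_i$ by running, independently at each vertex $u$, the efficient subset sampling of \Cref{lem:fast sampling} on $u$'s list of owned edges, with vertex-specific probability
\[
p_u \;=\; \min\!\left\{1,\ \frac{2\,(24\log n/(\epsilon\phi^{2}))^{2}}{\widetilde{\deg}_{G_i}(u)}\right\},
\]
giving every sampled edge weight $1/p_u$ and placing it in $H_i$.

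\textbf{Correctness.} For any owned edge $\{u,v\}$ we have $\widetilde{\deg}_{G_i}(u)\le \widetilde{\deg}_{G_i}(v)$ and, because approximate degrees are actual degrees rounded up to the nearest power of $2$, also $\widetilde{\deg}_{G_i}(u)\le 2\min(\deg_{G_i}(u),\deg_{G_i}(v))$. Hence $p_u$ meets the threshold of \Cref{lem:sampling expander}. Ownership partitions $E(G_i)$, so each edge is sampled exactly once and independently, and \Cref{lem:sampling expander} yields that $H_i$ is a $(1+\epsilon)$-spectral sparsifier of $G_i$ with probability $1-1/n^{3}$.

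\textbf{Static time.} At vertex $u$, the expected number of sampled edges is $p_u\cdot|\mathrm{owned}(u)|\le p_u\cdot\deg_{G_i}(u)=\polylog(n)$, so \Cref{lem:fast sampling} takes $\Otil(1)$ time at $u$. Summing over vertices gives initialization cost $\Otil(|V(G_i)|)$ and $|E(H_i)|=\Otil(|V(G_i)|)$.

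\textbf{Dynamic maintenance.} For each elementary update to $G_i$ (edge insertion or deletion, vertex deletion), I would update the ownership structure as already described and then update $H_i$ as follows. An inserted edge owned by $u$ is included in $H_i$ with a single coin flip at probability $p_u$, and a deleted edge is simply removed from $H_i$ if present; these cost $\Otil(1)$. When $\widetilde{\deg}_{G_i}(u)$ crosses a power-of-$2$ threshold, $p_u$ changes; at such events I discard all edges in $H_i$ currently owned by $u$ and re-sample them using the new $p_u$. The expected number of $H_i$-edges touched per re-sample is $p_u\cdot\widetilde{\deg}_{G_i}(u)=\polylog(n)$, and thresholds at $u$ are crossed only once per $\Omega(\widetilde{\deg}_{G_i}(u))$ updates affecting $u$, so the amortized cost is $\Otil(1)$ per $G_i$-update, both in time and in number of edge changes in $H_i$. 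Ownership swaps of an individual edge are handled symmetrically, with one fresh coin flip at the new owner's probability.

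\textbf{Main obstacle.} The delicate point is to verify that at every fixed time the distribution of $H_i$ matches the independent-per-edge distribution required by \Cref{lem:sampling expander}, despite the laziness of resampling only at threshold crossings and ownership swaps. This is where the obliviousness of the adversary is crucial: since the sequence of updates to $G_i$ is independent of our random bits, we may use fresh independent coins at each resample, so at any fixed time each edge of $G_i$ is independently present in $H_i$ with the prescribed probability, and \Cref{lem:sampling expander} applies. Together with the standard power-of-$2$ amortization on degree changes, this yields the claimed $\Otil(t_i)$ bounds on total update time and on the number of edge changes in $H_i$.
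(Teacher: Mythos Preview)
Your proposal is correct and follows essentially the same approach as the paper: sample each edge from its owner's list with probability $\tilde p_u=\min\{1,\,2(24\log n/(\epsilon\phi^2))^2/\widetilde{\deg}_{G_i}(u)\}$ using \Cref{lem:fast sampling}, verify this meets the threshold of \Cref{lem:sampling expander} via $\widetilde{\deg}_{G_i}(u)\le 2\min(\deg_{G_i}(u),\deg_{G_i}(v))$, and on each change of $\widetilde{\deg}_{G_i}(u)$ resample all edges owned by $u$ at cost $\Otil(1)$. Your write-up is in fact more careful than the paper's (which is terse on the dynamic side), explicitly handling individual edge updates and ownership swaps and arguing why obliviousness preserves the independent-per-edge distribution; note that in the paper's setting each $G_i$ only undergoes edge deletions, so your insertion handling is extra but harmless.
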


\begin{proof}
To compute a $(1+\epsilon)$-spectral-sparsifier $H_{i}$ in $\Otil(|V(G_{i})|)$
time, we do the following. For each node $u$ and for each edge $\{u,v\}$
owned by $u$, we independently sample $\{u,v\}$ (and then scale
appropriately) with probability
\begin{align*}
\tilde{p}_{u}:=\min\left\{ 1,\left(\frac{24\log n}{\epsilon\phi^{2}}\right)^{2}\cdot\frac{2}{\widetilde{\deg}_{G_{i}}(u)}\right\}  & =\min\left\{ 1,\left(\frac{24\log n}{\epsilon\phi^{2}}\right)^{2}\cdot\frac{2}{\min\{\widetilde{\deg}_{G_{i}}(u),\widetilde{\deg}_{G_{i}}(v)\}}\right\} \\
 & \ge\min\left\{ 1,\left(\frac{24\log n}{\epsilon\phi^{2}}\right)^{2}\cdot\frac{1}{\min\{\deg_{G_{i}}(u),\deg_{G_{i}}(v)\}}\right\} 
\end{align*}
where the last inequality is because $\deg_{G_{i}}(u)\le\widetilde{\deg}_{G_{i}}(u)\le2\deg_{G_{i}}(u)$.
By \Cref{lem:sampling expander}, we have that $H_{i}$ is a $(1+\epsilon)$-spectral
sparsifier of $G_{i}$. By \Cref{lem:fast sampling}, we can sample
all edges owned by $u$ in time $\Otil(\tilde{p}_{u}\deg_{G_{i}}(u))=\Otil(1)$.
So the total time is $\Otil(|V(G_{i})|)$. Whenever $\widetilde{\deg}_{G_{i}}(u)$
changes we just resample edges owned by $u$, which takes $\Otil(1)$
time. Given $t_{i}$ updates to $G_{i}$, the total update time and
the total number of edge changes in $H_{i}$ is at most $\Otil(t_{i})$.
If the updates are independent from $H_{i}$, then $H_{i}$ will remain
$(1+\epsilon)$-spectral sparsifier of $G_{i}$ by \Cref{lem:sampling expander}.
\end{proof}
Now, we are ready to implement $\sparsify(t)$. We take each expander
$G_{i}$ and compute a $(1+\epsilon)$-spectral sparsifier $H_{i}$
of $G_{i}$ using \Cref{claim:sparsify expander} in $\Otil(|V(G_{i})|)$
time. We maintain $H=\cup_{i}H_{i}$ as the sparsifier returned by
$\sparsify(t)$. In total this takes $\sum_{i=1}^{z}\Otil(|V(G_{i})|)=\Otil(n)$
time by \Cref{thm:dynamic exp decomp}. Next, each edge update to $G$
generates some edge updates to $G_{i}$. For each update to $G_{i}$,
we maintain $H_{i}$ using \Cref{claim:sparsify expander}. As long
as there are less than $t$ updates to all $G_{i}$ in total (i.e.~$\counter$
has not increased by more than $t$), the total time to maintain $H_{i}$
is $\Otil(t)$ and the total number of edge changes in $H$ is also
$\Otil(t)$. Therefore, until $\counter$ has increased by $t$, $H$
contains at most $\Otil(n+t)$ edges and the total time to maintain
$H$ is $\Otil(n+t)$ as well. As \Cref{lem:sampling expander} guarantees
correctness against an oblivious adversary, if the edge updates before
$\counter$ has increased by more than $t$ are independent from $H$,
then with high probability, $H$ is a $(1+\epsilon)$-spectral sparsifier
throughout the updates.

This complete the proof of \Cref{lem:sparsify from expander}.
\end{proof}

\end{document}